\definecolor{webgreen}{rgb}{0,.5,0}
\definecolor{webblue}{rgb}{0,0,.5}
\newtheorem{theorem}{Theorem}
\newtheorem{experiment}[theorem]{Experiment}
\newtheorem{construction}[theorem]{Construction}
\newtheorem{condition}[theorem]{Condition}
\newtheorem{defn}[theorem]{Definition}
\newtheorem{lem}[theorem]{Lemma}
\newtheorem{prop}[theorem]{Proposition}
\newtheorem{cor}[theorem]{Corollary}
\numberwithin{theorem}{section}
\newcommand{\NN}{\mathbb{N}}
\newcommand{\id}{\mathrm{id}}
\newcommand{\ket}[1]{| #1 \rangle}
\newcommand{\bra}[1]{\langle #1 |}
\newcommand{\ketbra}[2]{\left|#1\right\rangle\!\!\left\langle #2\right|}
\newcommand{\tr}{\mathrm{Tr}}
\newcommand{\proj}[1]{\ensuremath{|#1\rangle \langle #1|}}
\renewcommand{\rho}{\varrho}
\newcommand{\Hi}{\mathcal{H}}
\newcommand{\hi}{\Hi}
\newcommand{\one}{\mathds 1}
\newcommand{\secpar}{\ensuremath{n}\xspace}
\newcommand{\secparam}{\ensuremath{1^{\secpar}}\xspace}
\newcommand{\nket}[1]{\ket{#1}}
\newcommand{\nbra}[1]{\bra{\! #1}}
\newcommand{\Hyb}{\ensuremath{\textsf{Hyb}}\xspace}
\newcommand{\PiKEM}{\ensuremath{{\Pi^{\textsf{Hyb}}}}\xspace}
\newcommand{\PiKEMS}{\ensuremath{{\Pi_\SKQE^{\textsf{Hyb}}}}\xspace}
\newcommand{\PiKEMP}{\ensuremath{{\Pi_\PKQE^{\textsf{Hyb}}}}\xspace}
\newcommand{\PiHyb}{\ensuremath{{\Pi_{\QSC}^{\textsf{Hyb}}}}\xspace}
\newcommand{\PiMod}{\ensuremath{{\Pi^{\textsf{Mod}}}}\xspace}
\newcommand{\CQHyb}{\PiHyb}
\newcommand{\CQHybGA}{\PiKEM}
\newcommand{\PKE}{\ensuremath{\textsf{PKE}}\xspace}
\newcommand{\DS}{\ensuremath{\textsf{DS}}\xspace}
\newcommand{\PKQE}{\ensuremath{\textsf{PKQE}}\xspace}
\newcommand{\SKE}{\ensuremath{\textsf{SKE}}\xspace}
\newcommand{\SKQE}{\ensuremath{\textsf{SKQE}}\xspace}
\newcommand{\QS}{\ensuremath{\textsf{QS}}\xspace}
\newcommand{\SC}{\ensuremath{\textsf{SC}}\xspace}
\newcommand{\QSC}{\ensuremath{\textsf{QSC}}\xspace}
\newcommand{\muQSC}{\ensuremath{\textsf{muQSC}}\xspace}
\newcommand{\SKES}{\ensuremath{\textsf{SKES}}\xspace}
\newcommand{\expref}[2]{\texorpdfstring{\hyperref[#2]{#1~\ref{#2}}}{#1~\ref{#2}}}
\newcommand{\A}{\ensuremath{\mathcal{A}}\xspace}
\newcommand{\B}{\ensuremath{\mathcal{B}}\xspace}
\newcommand{\C}{\ensuremath{\mathcal{C}}\xspace}
\renewcommand{\H}{\ensuremath{\mathcal H}\xspace}
\newcommand{\I}{\ensuremath{\mathcal{I}}\xspace}
\newcommand{\K}{\ensuremath{\mathcal{K}}\xspace}
\renewcommand{\P}{\ensuremath{\mathcal{P}}\xspace}
\newcommand{\calS}{\ensuremath{\mathcal{S}}\xspace}
\newcommand{\V}{\ensuremath{\mathcal{V}}\xspace}
\newcommand{\states}{\ensuremath{\mathfrak D}\xspace}
\newcommand{\negl}{\ensuremath{\operatorname{negl}}\xspace}
\newcommand{\supp}{\textbf{supp\,}}
\newcommand{\egoketbra}[1]{\ketbra{#1}{#1}}			% Ket-Bra with the same argument
\newcommand{\from}{\ensuremath{\leftarrow}}
\newcommand{\bit}{\{0,1\}}
\newcommand{\rand}{\raisebox{-1pt}{\ensuremath{\,\xleftarrow{\raisebox{-1pt}{$\scriptscriptstyle\$$}}\,}}}
\newcommand{\bin}{\left\{0,1\right\}}
\newcommand{\game}[1]{{\ensuremath{\mathsf{Game} \ \textsf{#1}}}\xspace}
\renewcommand{\Game}[1]{{\ensuremath{{\bm{\mathsf{Game}}} \ \textbf{#1}}}\xspace}
\newcommand{\KeyGen}{\ensuremath{\mathsf{KeyGen}}\xspace}
\newcommand{\Enc}{\ensuremath{\mathsf{Enc}}\xspace}
\newcommand{\Sign}{\ensuremath{\mathsf{Sign}}\xspace}
\newcommand{\Dec}{\ensuremath{\mathsf{Dec}}\xspace}
\newcommand{\Ver}{\ensuremath{\mathsf{Ver}}\xspace}
\newcommand{\poly}{\operatorname{poly}}
\newcommand{\algo}{\mathcal}
\newcommand{\inrand}{\raisebox{-1pt}{\ensuremath{\,\xleftarrow{\raisebox{-1pt}{$\scriptscriptstyle\$$}}\,}}}
\newcommand{\adver}{\A}
\newcommand{\badver}{\B}
\newcommand{\chall}{\C}
\newcommand{\simu}{\ensuremath{\algo S}\xspace}
\newcommand{\QAEreal}{\ensuremath{\mathsf{QAE\mbox{-}Real}}\xspace}
\newcommand{\QAEideal}{\ensuremath{\mathsf{QAE\mbox{-}Ideal}}\xspace}
\newcommand{\QINDCCAAtest}{\ensuremath{\mathsf{QCCA2\mbox{-}Test}}\xspace}
\newcommand{\QINDCCAAfake}{\ensuremath{\mathsf{QCCA2\mbox{-}Fake}}\xspace}
\newcommand{\QINDwCCAAfake}{\ensuremath{\mathsf{QwCCA2\mbox{-}Fake}}\xspace}
\newcommand{\OutReal}{\ensuremath{\mathsf{Out\mbox{-}Real}}\xspace}
\newcommand{\OutIdeal}{\ensuremath{\mathsf{Out\mbox{-}Ideal}}\xspace}
\newcommand{\MOutReal}{\ensuremath{\mathsf{M\mbox{-}Out\mbox{-}Real}}\xspace}
\newcommand{\MOutIdeal}{\ensuremath{\mathsf{M\mbox{-}Out\mbox{-}Ideal}}\xspace}
\newcommand{\LWE}{\ensuremath{\mathsf{LWE}}\xspace}
\newcommand{\IND}{\ensuremath{\mathsf{IND}}\xspace}
\newcommand{\CPA}{\ensuremath{\mathsf{CPA}}\xspace}
\newcommand{\CCA}{\ensuremath{\mathsf{CCA1}}\xspace}
\newcommand{\CCAA}{\ensuremath{\mathsf{CCA2}}\xspace}
\newcommand{\CCAAA}{\ensuremath{\mathsf{CCA}}\xspace}
\newcommand{\INDCPA}{\ensuremath{\mathsf{IND\mbox{-}CPA}}\xspace}
\newcommand{\INDCCA}{\ensuremath{\mathsf{IND\mbox{-}CCA1}}\xspace}
\newcommand{\INDCCAA}{\ensuremath{\mathsf{IND\mbox{-}CCA2}}\xspace}
\newcommand{\QIND}{\ensuremath{\mathsf{QIND}}\xspace}
\newcommand{\QINDCPA}{\ensuremath{\mathsf{QIND\mbox{-}CPA}}\xspace}
\newcommand{\QINDCCA}{\ensuremath{\mathsf{QIND\mbox{-}CCA1}}\xspace}
\newcommand{\QINDCCAA}{\ensuremath{\mathsf{QIND\mbox{-}CCA2}}\xspace}
\newcommand{\QINDwCCAA}{\ensuremath{\mathsf{QIND\mbox{-}wCCA2}}\xspace}
\newcommand{\CAE}{\ensuremath{\mathsf{AE}}\xspace}
\newcommand{\QAE}{\ensuremath{\mathsf{QAE}}\xspace}
\newcommand{\QCA}{\ensuremath{\mathsf{QCA}}\xspace}
\newcommand{\cQCA}{\ensuremath{\mathsf{cQCA}}\xspace}
\newcommand{\CAEreal}{\ensuremath{\mathsf{AE\mbox{-}Real}}\xspace}
\newcommand{\CAEideal}{\ensuremath{\mathsf{AE\mbox{-}Ideal}}\xspace}
\newcommand{\wQAEreal}{\ensuremath{\mathsf{wQAE\mbox{-}Real}}\xspace}
\newcommand{\wQAEideal}{\ensuremath{\mathsf{wQAE\mbox{-}Ideal}}\xspace}
\newcommand{\wCAEideal}{\ensuremath{\mathsf{wAE\mbox{-}Ideal}}\xspace}
\newcommand{\wQAE}{\ensuremath{\mathsf{wQAE}}\xspace}
\newcommand{\wCAE}{\ensuremath{\mathsf{wAE}}\xspace}
\newcommand{\DNS}{\ensuremath{\mathsf{DNS}}\xspace}
\newcommand{\cDNS}{\ensuremath{\mathsf{cDNS}}\xspace}
\newcommand{\acc}{\ensuremath{\mathsf{acc}}\xspace}
\newcommand{\crej}{\ensuremath{\mathsf{rej}}\xspace}
\newcommand{\rej}{\ensuremath{\mathsf{rej}}\xspace}
\newcommand{\win}{\ensuremath{\mathsf{win}}\xspace}
\newcommand{\cheat}{\ensuremath{\mathsf{cheat}}\xspace}
\newcommand{\real}{\ensuremath{\mathsf{real}}\xspace}
\newcommand{\SDK}{\ensuremath{\K_{\mathsf{SD}}}\xspace}
\newcommand{\VEK}{\ensuremath{\K_{\mathsf{VE}}}\xspace}
\newcommand{\DK}{\ensuremath{\K_{\mathsf{D}}}\xspace}
\newcommand{\EK}{\ensuremath{\K_{\mathsf{E}}}\xspace}
\newcommand{\SK}{\ensuremath{\K_{\mathsf{S}}}\xspace}
\newcommand{\VK}{\ensuremath{\K_{\mathsf{V}}}\xspace}
\newcommand{\SigEnc}{\ensuremath{\mathsf{SigEnc}}\xspace}
\newcommand{\VerDec}{\ensuremath{\mathsf{VerDec}}\xspace}
\newcommand{\sk}{\ensuremath{\mathsf{sk}}\xspace}
\newcommand{\vk}{\ensuremath{\mathsf{vk}}\xspace}
\newcommand{\ek}{\ensuremath{\mathsf{ek}}\xspace}
\newcommand{\dk}{\ensuremath{\mathsf{dk}}\xspace}
\newcommand{\sdk}{\ensuremath{\mathsf{sdk}}\xspace}
\newcommand{\vek}{\ensuremath{\mathsf{vek}}\xspace}
\newcommand{\ID}{\ensuremath{\mathsf{ID}}\xspace}
\newcommand{\Qu}{\ensuremath{\mathsf{Qu}}\xspace}
\newcommand{\Cl}{\ensuremath{\mathsf{Cl}}\xspace}
\title{Can You Sign A Quantum State?}
\author{
Gorjan Alagic}
\affiliation{
	QuICS, University of Maryland, College Park, MD, USA}
\affiliation{
	National Institute of Standards and Technology, Gaithersburg, MD, USA }
\email{galagic@umd.edu}
 \author{Tommaso Gagliardoni}
 \affiliation{
Kudelski Security, Zurich, Switzerland }
 \email{name.surname@kudelskisecurity.com}
 \author{Christian Majenz}
\affiliation{
Centrum Wiskunde \& Informatica and QuSoft, Amsterdam, Netherlands}
\email{christian.majenz@cwi.nl}
\begin{document}
%\maketitle
%\makebox[\linewidth]{\small Original Version: 2018-11-28. \ \ \ \ \ This Revision: 2019-02-14} % TOMMASO: I hate the fact that using llncs you cannot add a date... in general, I hate papers without a properly defined date and version hitory. GORJAN: let's add that back after refereeing, we're very tight on space
%\vspace{-35pt}
%
\begin{abstract}%~\\
Cryptography with quantum states exhibits a number of surprising and counter-intuitive features. 
In a 2002 work, Barnum et al. argued that these features imply that digital signatures for quantum states are impossible~\cite{BCG+02}. 
In this work, we ask: can all forms of signing quantum data, even in a possibly weak sense, be completely ruled out? We give two results which shed significant light on this basic question. 

First, we prove an impossibility result for digital signatures for quantum data, which extends the result of \cite{BCG+02}. Specifically, we show that no nontrivial combination of correctness and security requirements can be fulfilled, beyond what is achievable simply by measuring the quantum message and then signing the outcome. In other words, \emph{only classical signature schemes exist.}

We then show a positive result: a quantum state can be signed with the same security guarantees as classically, provided that it is also encrypted with the public key of the intended recipient. Following classical nomenclature, we call this notion \emph{quantum signcryption}. Classically, signcryption is only interesting if it provides superior performance to encrypt-then-sign. Quantumly, it is far more interesting: it is the only signing method available. We develop ``as-strong-as-classical'' security definitions for quantum signcryption and give secure constructions based on post-quantum public-key primitives. Along the way, we show that a natural hybrid method of combining classical and quantum schemes can be used to ``upgrade'' a secure classical scheme to the fully-quantum setting, in a wide range of cryptographic settings including signcryption, authenticated encryption, and CCA security.
\end{abstract}

\tableofcontents

%%%%%%%%%%%%%%%%%%%%%%%%%
%%%%%%%%%%%%%%%%%%%%%%%%%
%%%%%%%%%%%%%%%%%%%%%%%%%
\section{Introduction}
%%%%%%%%%%%%%%%%%%%%%%%%%

%\subsection{Motivation}
%%%%%%%%%%%%%%%%%%%%%%%%%

The Internet of the future will plausibly include both large-scale quantum computers and high-capacity quantum channels. How will we securely transmit (quantum) data over the resulting ``quantum Internet''? Methods based on entanglement (e.g., teleportation) are costly, using many rounds of interaction to build a shared state which must be at least as large as the data itself. Classically, encryption and authentication offer a non-interactive approach with several attractive features: (i.) keys exchanged over public channels, (ii.) a short key suffices for transmitting unlimited amounts of data, and (iii.) security guarantees are maximal for both secrecy and authenticity. Can we encrypt, authenticate, and sign quantum data to the same standard? 

Classical digital signatures, for instance, are ubiquitous in everyday cryptography, with applications ranging from secure software distribution and email signatures to e-governance and cryptocurrencies. Given their importance in the classical world, it is natural to ask whether it is possible to devise digital signature schemes for quantum data. Unlike its private-key analogue (see, e.g., \cite{DNS12,HLM16,Portmann17,Dulek2018} and references therein),  this question has been considered in only one previous work~\cite{BCG+02}. There, the authors argue in a brief discussion that %, since any symmetric-key scheme that authenticates quantum states must also encrypt them, 
quantum digital signatures must be impossible. They also suggest that one can use classical public-key cryptography and one-time quantum authentication to build a scheme they call ``public-key quantum authentication.'' However, despite having many clever insights, this work entailed no formal security definitions or rigorous proofs regarding quantum signatures, and the theory remains mostly undeveloped. 

In this work, we return to the problem of signing quantum states, with a rigorous and formal approach. We prove that any quantum digital signature scheme that is correct must be insecure. In addition, we fully address the question of whether there exists any feasible trade-off in security and correctness requirements which makes signing quantum states possible.

%
%Main points to hit:
%
%\begin{itemize}
%\item signing quantum states for public verification is a totally basic task in quantum crypto, that has been hardly thought about;
%\item Barnum et all basically made a comment that said it was impossible, but with no proof or even a security definition; that's it for previous work.
%\item here we prove a strong impossibility result: as soon as a digital signature claims to preserve any quantum property, even slightly -- it becomes impossible (even for a very weak notion of security.) This convincingly demonstrates that only classical DSS exist;
%\item now, you might think this means that we're stuck with secret-key stuff, which comes with a dramatic cost: every pair of parties that wish to communicate with each other need to share a secret keypair.
%\item this turns out to not be the case! We can still have authenticated communication of quantum states in a setting where each party only has one private key and one public key. The crucial idea is that, when wishing to sign a quantum state, we have to also encrypt it with the public key of the desired recipient.
%\item classically this is called signcryption, but is only interesting if it turns out to be more efficient than encrypt+sign (due to some detail of the implementation or the underlying primitive.)
%\item quantumly this is far more interesting, because it is provably the \emph{only} available way to sign states.
%\end{itemize}

\subsection{Summary of Results}
%%%%%%%%%%%%%%%%%%%%%%%%%

%We begin with an informal outline of our main results. 

\subsubsection{Impossibility of quantum signatures}
%%%%%%%%%%%%%%%%%%%%%%%%%

We first define quantum signature schemes, leaving as much room as possible for achievable notions. %We provide for the possibility of varying levels of correctness and very weak forms of security. 

\begin{defn}[informal] 
A {\em \textbf{quantum signature scheme}} (or $\QS$) is a pair $(\Sign, \Ver)$ of keyed quantum polynomial time (QPT) algorithms. 
\begin{itemize}
\item A \QS is \textbf{correct for a map} $N$ if $\|N \circ \Ver_{\vk} \circ \Sign_{\sk} - N\|_\diamond \leq \negl(n)$. 
\item A \QS is simply \textbf{correct} if it is correct for $N$ being the identity map.
\item A \QS is $\varepsilon$-\textbf{one-time-secure} for a map $L$ if, for every adversary $\A$, the ``attack map'' $L \circ \Ver_{\vk} \circ \mathcal A \circ \Sign_{\sk}$ can be simulated (up to $\varepsilon$ loss) by an algorithm which either applies $L$ or discards the input.
\end{itemize}
\end{defn}
Requiring correctness only for certain maps $N$, or security only for certain maps $L$, weakens the definition in that the (secure) further use of the verified state is limited. We have relaxed the classical requirement that a signature is additional information that accompanies the message; quantumly, this is impossible due to no-cloning. %Instead, signing is allowed to transform the message into an arbitrary state. 
Instead, correctness requires that (some part of) the message is recovered at verification time. 
See \expref{Section}{sec:defQSC} for a discussion.

We prove two strong impossibility results. First, we show that full correctness implies negligible security, even in a very weak one-time sense which only guarantees the integrity of the outcomes of a single two-outcome measurement. Second, if we fix a pair of measurements and then ask only for correctness and security of their outcomes, then those measurements must commute.% This is the first known example of a simple, classical cryptographic functionality that cannot be realized quantumly.

\begin{theorem}[informal]\label{thm:impossibility-summary}
	Let $\Pi$ be a quantum signature scheme.
	\begin{enumerate}
		\item If $\Pi$ is correct, then it is at most $(1-\negl(\secpar))$-one-time secure for any two-outcome measurement $M$.
%for any two-outcome measurement $M$, $\Pi$ is at most $(1-\negl(n))$-one-time secure for $M$. 
		\item If $\Pi$ is correct and $\varepsilon$-one-time-secure for a pair $\{M_0,M_1\}$ of two-outcome measurements, then $M_0$ and $M_1$ are indistinguishable from $(1-\varepsilon)$-commuting. 
	\end{enumerate}
\end{theorem}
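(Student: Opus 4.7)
The plan is to exploit the public nature of the verification key, which distinguishes the signature setting sharply from the symmetric-key quantum authentication setting of~\cite{BCG+02}: given $\vk$, an unbounded adversary can essentially invert $\Ver_{\vk}$ and forge signed versions of arbitrary messages.

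For the first claim I would begin by converting correctness into a structural statement about $\Sign_{\sk}$ and $\Ver_{\vk}$. Since $\|\Ver_{\vk}\circ\Sign_{\sk}-\id\|_\diamond\le\negl(\secpar)$, an approximate-isometry argument (Stinespring dilation combined with Uhlmann's theorem) produces an isometry $V_{\sk}$ from the message space into the signature space such that $\Sign_{\sk}$ is $\negl(\secpar)$-close to $V_{\sk}(\cdot)V_{\sk}^{\dagger}$, and $\Ver_{\vk}$ agrees with $V_{\sk}^{\dagger}(\cdot)V_{\sk}$ on the image of $V_{\sk}$. Because $\vk$ fully specifies $\Ver_{\vk}$, an unbounded adversary can read off a classical description of $V_{\sk}$ up to an irrelevant global phase (for instance by probing $\Ver_{\vk}$ on an orthonormal basis of its accept subspace) and can therefore prepare the forged signed state $V_{\sk}(\sigma_0)V_{\sk}^{\dagger}$ for any target message $\sigma_0$ fixed in advance. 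The attack is then the trivial discard-and-replace: $\A$ discards $\Sign_{\sk}(\rho)$ and outputs $V_{\sk}(\sigma_0)V_{\sk}^{\dagger}$. In the real experiment the outcome is $M(\sigma_0)$ independent of $\rho$, whereas any simulator must return a mixture of $M(\rho)$ and a fixed reject symbol; picking $\sigma_0$ and $\rho$ so that $M$ outputs opposite deterministic values on them (always possible when $M$ is non-trivial) forces the simulation error to be $1-\negl(\secpar)$.

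For the second claim I would work with the weaker correctness hypothesis (correctness required only for $M_0$ and $M_1$, not for $\id$) and lift each $M_i$ to a POVM $\tilde M_i := M_i \circ \Ver_{\vk}$ on the signature space, which by correctness returns the outcome of $M_i$ on $\rho$ when applied to $\Sign_{\sk}(\rho)$. Consider the adversary that coherently performs $\tilde M_0$ via a Naimark dilation, records the outcome classically, and then forwards the post-measurement signed state through $\Ver_{\vk}$. Security for $M_1$ against this adversary constrains the joint distribution of $(\tilde M_0,\,M_1\circ\Ver_{\vk})$ on $\Sign_{\sk}(\rho)$ to be $\varepsilon$-simulatable as an application of $M_1$ to $\rho$ or an abort; swapping the roles of $M_0$ and $M_1$ yields a symmetric constraint. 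Jointly these force $(M_0,M_1)$ to admit an approximate joint POVM with marginals $\varepsilon$-close to the $M_i$. A quantitative equivalence between approximate joint measurability and approximate commutativity (a standard result in the POVM literature) then converts joint measurability into indistinguishability from a commuting pair, yielding the $(1-\varepsilon)$-commuting conclusion.

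The principal technical obstacles are the two quantitative lemmas underpinning the argument. First, the approximate-isometry lemma behind part one must cleanly propagate the $\negl(\secpar)$ correctness slack into a diamond-norm bound on $\Sign_{\sk}-V_{\sk}(\cdot)V_{\sk}^{\dagger}$ together with a matching bound on $\Ver_{\vk}$; losing polynomial factors here would still yield the impossibility but would weaken the simulation gap. Second, the approximate-joint-measurability-to-commutativity step behind part two must deliver the claimed $(1-\varepsilon)$ slack rather than a looser polynomial loss, which requires threading the adversary's two attacks and the joint-POVM construction through a single quantitative bound. Both ingredients are standard in spirit but need to be tracked carefully to preserve the tightness of the impossibility.
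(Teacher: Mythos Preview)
Your approach to Part~1 has a genuine gap: the security notion in \expref{Definition}{def:QSS-secure} quantifies over \emph{QPT} adversaries, but your discard-and-replace attack requires an unbounded adversary to extract $V_{\sk}$ from $\vk$. This is not a technicality. The characterization isometry $V_k$ depends on the full keypair $k=(\sk,\vk)$, and the paper explicitly notes (footnote in the proof of \expref{Theorem}{thm:imp-2}) that $V_k$ is not necessarily efficiently implementable; in fact, if it were efficiently recoverable from $\vk$ alone, the scheme would be trivially broken even as an encryption. The paper sidesteps this entirely: rather than recovering $V_k$, the adversary runs an \emph{efficient} Stinespring dilation $\hat W^{\Ver_{\vk}}$ of the public verification channel (efficient because $\Ver_{\vk}$ is QPT), applies a simple unitary $U_1 U_0^\dagger$ on the exposed message register, and then uncomputes $\hat W^{\Ver_{\vk}}$. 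Correctness plus \expref{Lemma}{lem:PKE-char} guarantee this ``verify--modify--unverify'' map turns a signature of $\ket{\psi_0}$ into a signature of $\ket{\psi_1}$. The attack is a malleability attack, not a forgery-from-scratch, and that is precisely what makes it efficient.

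For Part~2 your route (Naimark-dilate $\tilde M_0 = M_0\circ\Ver_{\vk}$, record the outcome, forward the residual state, then invoke approximate joint measurability $\Rightarrow$ approximate commutativity) is different from the paper's and has two soft spots. First, once you make ``forward the post-measurement signed state'' precise, you are forced to uncompute the dilation of $\Ver_{\vk}$ to land back in the ciphertext space---which is already the paper's mechanism. Second, the security definition gives you control only over $L\circ\Ver_{\vk}\circ\A\circ\Sign_{\sk}$ as a channel on $M$; it does not directly hand you the joint distribution of the recorded $\tilde M_0$ outcome and the subsequent $M_1$ outcome, so extracting joint measurability (and then the tight $(1-\varepsilon)$ commutativity bound) needs more work than ``a standard result in the POVM literature'' supplies. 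The paper instead uses a single concrete reflection attack $U_0=\one-2P_0$ inside the verify--unverify sandwich: correctness for $M_0$ forces this to act as $U_0$ on the plaintext, and then an elementary computation shows that $U_0$ shifts the $M_1$ outcome by at least $\delta$ whenever $|\|P_0P_1\ket\psi\|^2-\|P_1P_0\ket\psi\|^2|>\delta/2$, directly contradicting $\varepsilon$-security for $M_1$.
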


The measurement maps above are viewed as channels which measure and then output the outcome, discarding the post-measurement state. This result shows convincingly that signature schemes can only sign classical information. \expref{Theorem}{thm:impossibility-summary} is proven via a reduction that constructs a malleability attack from a distinguishing attack. The key-dependence of the latter prevents an application of the result from \cite{BCG+02} and necessitates more refined techniques based on the Stinespring dilation as well as a related approximation result \cite{Kretschmann2008}.

\subsubsection{Quantum signcryption}
%%%%%%%%%%%%%%%%%%%%%%%%%

Our impossibility results appear to be devastating to prospects for public-key cryptography with quantum data. They seem to imply that authenticated communication requires each pair of parties to share a secret key. For networks with a large number of parties, or where parties frequently come and go, this is an unwieldy and highly inefficient solution. 

In the second part of our work, we show that this particular application of signatures can still be achieved.  The key observation is that impossibility can be circumvented \emph{if we also encrypt the message}. This rules out public verification (in that case, the adversary can decrypt, making the encryption pointless.) Instead, for each transmission, we select an intended recipient and encrypt using their public key. Classically, such combined schemes are called {\em signcryption}, and are of interest only insofar as they provide efficiency gains over combined encryption and signing~\cite{Zheng97}. This is in stark contrast to the quantum world: our results show that signcryption provides \emph{the only way} to achieve integrity and authenticity without a pre-shared secret.% key.

%We define signcryption for quantum states as follows.

\begin{defn}[informal]
A {\em \textbf{quantum signcryption scheme}} (or $\QSC$) is a triple of QPT algorithms:
\begin{enumerate}
\item (key generation) $\KeyGen(1^n):$ output $(\sdk, \vek) \from \{0,1\}^{\poly(n)}$.
\item (signcrypt) $\SigEnc_{\sdk, \vek}: \states(\H_M) \rightarrow \states(\H_C)$
\item (verified decrypt) $\VerDec_{\vek, \sdk}: \states(\H_C) \rightarrow \states(\H_M \oplus \ket{\bot})$
\end{enumerate}  
such that $\| \VerDec_{\vek_S, \sdk_R} \circ \SigEnc_{\sdk_S, \vek_R} - \id_M \oplus 0_\bot \|_\diamond \leq \negl(n),$ where $\id$ denotes the identity.% map.
\end{defn}
In the envisioned usage, each party on a network first runs key generation, publishing their ``verify/encrypt key'' $\vek$ while keeping their ``sign/decrypt key'' $\sdk$ private. When a sender $S$ wishes to send a state $\sigma_M$ to a recipient $R$, they apply $\SigEnc_{\sdk_S, \vek_R}$ using their private key and the public key of $R$. When $R$ receives the signcryption $\rho_C$, they apply $\VerDec_{\vek_S, \sdk_R}$ using their private key and the public key of $S$. Note that ``$\oplus \ 0_\bot$'' above indicates that this process of signcryption followed immediately by verified decryption (with the correct keys) never rejects. We remark that the general definition for the multi-user setting will also need to keep track of ``user IDs'' in order to prevent identity fraud.

\paragraph{Quantum signcryption security.}
%%%

Signcryption security is naturally divided into two settings, according to whether the adversary has access to the private key of one of the parties (the sender or the recipient) or not. 

First, \emph{outsider security} ensures that, if $S$ and $R$ are honest parties, then their channel is private and authenticated against adversaries from whom $\sdk_S$ and $\sdk_R$ are kept secret. Second, \emph{insider security} ensures that, if one party's secret key is compromised, the security guarantees ensured by the other party's secret key still hold. If the sender key is revealed, the receiver should still enjoy full privacy. The other case is quantumly unachievable: impossibility for signatures implies that releasing the receiver's secret key results in a forgeable (hence useless) scheme. In particular, quantum signcryption cannot provide \emph{non-repudiation.}

Formally defining quantum signcryption security is a challenge, as it is analogous to authenticated encryption, a notion which is troublesome in the quantum setting. The usual definitions require the ability to copy and compare previous queries with outputs of the adversary (e.g., to see if the adversary is attempting to decrypt the challenge.) Fortunately, a recent approach~\cite{AGM18} shows the way forward on such definitions. The idea is to ``split'' the standard security game into two experiments: an ``unrestricted'' experiment and a ``cheat-detecting'' experiment. In the unrestricted experiment, the adversary freely interacts with $\SigEnc$ and $\VerDec$ oracles. In the cheat-detecting experiment, the modified signcryption oracle signcrypts half of a maximally entangled state and stores the other half together with the adversary's plaintext. The modified verified-decryption oracle decrypts the input ciphertext, and checks if it is maximally entangled with one of the stored registers; if it is, it returns the corresponding stored plaintext.

Following this approach, we define outsider signcryption security 
through a pair of experiments, or games, 
as follows.

\begin{experiment}\label{exp:outsider-real-summary}
The \textbf{real outsider experiment} $\OutReal(\Pi, \algo A, n, S, R)$:
	\begin{algorithmic}[1]
%		\State $(\sdk_S, \vek_S) \from \KeyGen(1^n)$ and $(\sdk_R, \vek_R) \from \KeyGen(1^n)$;
		\State \textbf{output} $\algo A^{\SigEnc_{S, R}, \VerDec_{S, R}}(1^n)$.
	\end{algorithmic}
\end{experiment}

\vspace{-5pt}
\begin{experiment}\label{exp:outsider-ideal-summary}
The \textbf{ideal outsider experiment} $\OutIdeal(\Pi, \algo A, n, S, R)$:
	\begin{algorithmic}[1]
%		\State $(\sdk_S, \vek_S) \from \KeyGen(1^n)$ and $(\sdk_R, \vek_R) \from \KeyGen(1^n)$;
		\State define channel $E_{M \rightarrow C}$: prepare maximally entangled state $\ket{\phi^+}_{M'M''}$, store $(M'', M)$ in a set $\mathcal M$;
			return $\SigEnc_{S, R}$ applied to $M'$.
		\State define channel $D_{C \rightarrow M}$: \begin{itemize}
			\item[-] apply $\VerDec_{S, R}$ to $C$, place results in $M'$
			\item[-]  $\forall (M'', M) \in \mathcal M$: measure if $M'M''$ are max. entangled; if so return $M$
			\item[-] return $\bot$.
		\end{itemize}
		\State \textbf{output} $\adver^{E,D}(1^n)$.
	\end{algorithmic}
\end{experiment}
%Security is defined as indistinguishability of the real and ideal worlds.

\begin{defn}
A \QSC $\Pi$ is \textbf{outsider secure} if for all QPTs $\adver$,
 	\begin{equation}
 	\left|\Pr\left[\OutReal(\Pi, \algo A, n) \to \real \right] -
 	 	\Pr\left[\OutIdeal(\Pi, \algo A, n) \to \real \right]\right| \leq \negl(n).\nonumber
 	\end{equation}
\end{defn}

For (sender-compromised) insider security, we demand \QINDCCAA security of the induced public-key encryption scheme, adapting the approach of~\cite{AGM18}.

\subsubsection{Constructions, security proofs}
%%%%%%%%%%%%%%%%%%%%%%%%%

Finally, we give secure constructions for quantum signcryption. Along the way, we also give several new results of independent interest. These results ``bootstrap'' classical security properties to the quantum setting via a generic classical-quantum hybrid construction $\CQHybGA[\Pi, \Sigma]$ combining a classical scheme $\Pi$ with a quantum scheme $\Sigma$. The key (or keys) for $\CQHybGA[\Pi, \Sigma]$ are the same as for $\Pi$. To encrypt $\rho$, we generate a fresh key $k$ for $\Sigma$ and encrypt $\rho$ with $k$. Furthermore, we ``encapsulate'' $k$ by encrypting it with $\Pi$ and attaching the resulting classical encryption to the encryption of $\rho$. Decryption first decapsulates the $\Sigma$-key $k$, then uses it to decrypt the rest. 

%Recall that the scheme $\des$, defined by $\Enc_k: \rho \mapsto C_k (\rho \otimes \egoketbra{0^n}) C_k^\dagger$ where $\{C_k\}_k$ is a unitary two-design (e.g., the Clifford group) is one-time ciphertext-authenticating (i.e., it is \QCA~\cite{AGM18}.)
%\vspace{-3pt}
\begin{theorem}[informal]
	%Let $f$ be a \pqPRF, and let $g$ be a $t$-wise independent classical function family. 
	%\vspace{-4pt}
	Let $\Sigma$ be one-time quantum authenticating (\cQCA~\cite{AGM18}.)
	\begin{enumerate}
		%\item The $\des$ scheme is one-time ciphertext authenticating \emph{(\QCA)} symmetric-key quantum encryption.
		\item If $\Pi$ is a post-quantum \INDCCAA SKE/PKE scheme, then $\CQHybGA[\Pi, \Sigma]$ is a \QINDCCAA quantum SKE/PKE scheme.
%		\item If $\Pi$ is \INDCCAA classical public-key encryption, then $\CQHyb[\Pi, \Sigma]$ is \QINDCCAA quantum public-key encryption.
		\item If $\Pi$ is a post-quantum AE scheme, then $\CQHybGA[\Pi, \Sigma]$ is a quantum AE scheme.
		\item If $\Pi$ is a post-quantum classical outsider-secure and insider-secure signcryption scheme, then $\CQHybGA[\Pi, \Sigma]$ is an outsider- and insider-secure quantum signcryption scheme.
	\end{enumerate}
\end{theorem}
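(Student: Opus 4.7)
The plan is to prove all three parts via a common template: a hybrid argument that uses the classical security of $\Pi$ to ``hide'' each freshly sampled $\Sigma$-key and then invokes the one-time \cQCA security of $\Sigma$ to replace each quantum call by its simulated ideal behaviour. This is essentially the quantum analogue of the KEM--DEM paradigm: because every signcryption/encryption invocation samples a fresh key $k$ for $\Sigma$, each such $k$ is used only once in the quantum layer, so one-time authentication of $\Sigma$ suffices, while all multi-query burden is carried by the post-quantum classical primitive $\Pi$.

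For concreteness take part (3), outsider security. Let $q=q(n)$ bound the number of $\SigEnc$ queries made by a QPT adversary $\adver$. I would define hybrids $\Hyb_0,\ldots,\Hyb_q$, with $\Hyb_0 = \OutReal$ and $\Hyb_q = \OutIdeal$: in $\Hyb_i$ the first $i$ signcryption queries are answered as in the ideal experiment (signcrypting half of $\ket{\phi^+}$ and storing the other half together with the plaintext and the honest classical header), while the remaining $q-i$ are answered as in the real experiment. The $\VerDec$ oracle is modified accordingly: for each incoming ciphertext it first decapsulates the classical header, looks up a matching stored entry, performs the cheat-detection entanglement test of Experiment~\ref{exp:outsider-ideal-summary} when applicable, and otherwise runs honest $\VerDec$. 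Bounding $|\Pr[\Hyb_i\to\real]-\Pr[\Hyb_{i+1}\to\real]|$ proceeds in two sub-steps: first, using the post-quantum outsider/insider signcryption security of $\Pi$, replace the $\Pi$-signcryption of the $(i{+}1)$-th fresh key $k_{i+1}$ with a $\Pi$-signcryption of a dummy value; second, now that $k_{i+1}$ is information-theoretically hidden from $\adver$, apply one-time \cQCA security of $\Sigma$ to swap the quantum payload for half of a maximally entangled state, pushing bookkeeping into the $\VerDec$ oracle exactly as in the ideal experiment. Summing $q$ terms yields a negligible overall bound.

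Parts (1) and (2) follow the same pattern, with two simplifications: there is no sender/receiver asymmetry, and the role of $\Pi$ is played by a post-quantum \INDCCAA (resp.\ \CAE) scheme, so the key-encapsulation step uses classical \INDCCAA (resp.\ \CAE) security to argue that swapping the header payload from $k_{i+1}$ to a dummy value is indistinguishable even under decryption queries. In part (2) the combination of classical \CAE and one-time \cQCA is what lets one simulate the ideal \QAE ciphertext-integrity experiment, and in part (3) the insider case reduces to part (1) applied to the receiver's key, since compromising the sender effectively collapses signcryption to public-key encryption under $\vek_R$.

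The main obstacle is the faithful simulation of the $\VerDec$ oracle during the transition inside a single hybrid step: the adversary may maul the classical header, replay the challenge ciphertext with a different quantum payload, or submit crafted headers that decapsulate to keys it controls. Classical \CAE or signcryption security of $\Pi$ rules out header forgery (every accepted header was produced honestly, so it corresponds to a stored entry); the cheat-detection mechanism from \cite{AGM18} then catches quantum replays via the entanglement test on the stored halves; and one-time \cQCA of $\Sigma$ guarantees that any ciphertext not matching the stored entangled half is rejected with overwhelming probability. Making this bookkeeping tight, especially keeping the reduction's $\VerDec$ simulation consistent when the adversary interleaves real and challenge-derived headers in CCA2 fashion, is the technically delicate part and is where I would import the simulation and cheat-detection formalism of \cite{AGM18} wholesale rather than re-deriving it.
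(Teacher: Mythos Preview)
Your proposal is correct and follows the same KEM--DEM template as the paper: use the classical security of $\Pi$ to hide the freshly sampled $\Sigma$-keys, then invoke one-time \cQCA (resp.\ \cDNS for the weak variants) per query to idealize the quantum payload; insider security in part~(3) indeed reduces to part~(1) exactly as you say.

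Two structural differences are worth flagging. First, for part~(2) the paper does \emph{not} interleave the classical and quantum hops per query as you do. It first uses classical \CAE in a single reduction to reach a game ``\textsf{Hybrid}~$0$'' in which \emph{all} classical headers already carry encryptions of dummy keys and the decryption oracle checks the decapsulated value against a stored table; only after this global classical step does it run the per-query hybrid over the quantum payloads. Your interleaved ordering also works, but the paper's separation avoids having to rerun the classical reduction $q$ times. Second, for outsider security in part~(3) the paper does not redo the hybrid argument directly on signcryption; instead it observes (via a short lemma) that two-user outsider security of a \QSC $\Pi$ is equivalent to \wQAE security of the derived symmetric-key scheme $\Pi^\sharp$ obtained by bundling both keypairs into a single secret key, and then just invokes part~(2). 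Your direct argument is equivalent in strength but duplicates work that the paper factors out.
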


%%%%%%%%%%%%%%%%%%%%%%%%%%%%%%%%%%%%%%%%%%%%%%%%%%%%%%%%%%%%%%%%%%%%%%%%%%%
\subsection{Related work}

As discussed above, the only previous work which directly addressed our topics of interest\footnote{There has been work on signing \emph{classical messages} by quantum means, starting from a work of Gottesman and Chuang~\cite{Gottesman2001}.} is Barnum et al.'s work on authentication of quantum states~\cite{BCG+02}. In the realm of signature impossibility, we improve the state of affairs from~\cite{BCG+02} significantly: we give formal definitions for correctness and security and a rigorous proof of a much stronger impossibility result. Indeed, \emph{only classical signatures are possible} is a much stronger statement (with far greater implications) than \emph{fully-secure signing for all quantum states is impossible.} In the realm of possibility, we develop a thorough foundation for the theory of quantum signcryption, which was briefly suggested (and called ``public key quantum authentication'') in~\cite{BCG+02}. In particular, we give formal definitions for correctness and security, examine insider and outsider attack scenarios, and give constructions and security proofs in the many-time setting (including CPA and CCA security.) 

Our approach to defining security for quantum signcryption expands the ideas of~\cite{AGM18} to the public-key setting. We give a new characterization lemma for encryption schemes which enables this expansion. We also give a generalization of the security proofs of~\cite{AGM18, ABF+16,BJ15}, establishing that classical-quantum ``hybrid schemes'' inherit the security properties of their component schemes in a very broad setting, including symmetric-key authenticated encryption, public-key encryption, and signcryption.

The security notions from~\cite{AGM18} were recently examined from a composable security point of view~\cite{BMPZ19}. The conclusion of that work was that the ``real/ideal'' definition of Quantum Authenticated Encryption from~\cite{AGM18} does guarantee composable security. This implies that the strongest notions of quantum singcryption security in this paper are composable as well.

%%%%%%%%%%%%%%%%%%%%%%%%%
%%%%%%%%%%%%%%%%%%%%%%%%%
%%%%%%%%%%%%%%%%%%%%%%%%%
%%%%%%%%%%%%%%%%%%%%%%%%%
%%%%%%%%%%%%%%%%%%%%%%%%%
%%%%%%%%%%%%%%%%%%%%%%%%%
%%%%%%%%%%%%%%%%%%%%%%%%%
%%%%%%%%%%%%%%%%%%%%%%%%%
%%%%%%%%%%%%%%%%%%%%%%%%%
%%%%%%%%%%%%%%%%%%%%%%%%%
%%%%%%%%%%%%%%%%%%%%%%%%%
%%%%%%%%%%%%%%%%%%%%%%%%%
%%%%%%%%%%%%%%%%%%%%%%%%%
%%%%%%%%%%%%%%%%%%%%%%%%%
%%%%%%%%%%%%%%%%%%%%%%%%%
\section{Preliminaries}
%%%%%%%%%%%%%%%%%%%%%%%%%

\subsection{Notation and Conventions}\label{sec:prelnot}
%%%%%%%%%%%%%%%%%%%%%%%%%
We will largely use the conventions regarding quantum information from~\cite{NC11}. We use $\hi_M$ to denote a complex Hilbert space with label $M$ and finite dimension $\dim M$. %The class of positive, Hermitian, trace-one linear operators (i.e., valid quantum states) on $\hi_M$ is denoted by $\states(\hi_M)$. 
A quantum register is a physical system whose set of valid states is $\states(\hi_M)$. In our setting, a ``quantum register $M$'' is in fact an infinite family of registers $\{M_n\}_{n \in \NN}$ consisting of $p(n)$ qubits, where $p$ is some fixed polynomial. The notation $\tau_M$ will mean the maximally mixed state (i.e., uniform classical distribution) $\one / \dim M$ on $M$. We denote the euclidean norm of a vector, the trace norm of a matrix,  and the diamond norm of an operator on a matrix space, by  $\|\cdot\|_2$, $\|\cdot\|_1$ and $\|\cdot\|_\diamond$, respectively.  We set down notation for other norms as we use them.
By ``QPT'' we mean a polynomial-time uniform family of quantum circuits, i.e., an efficient quantum algorithm. Quantum algorithms implement completely positive (CP), trace-preserving (TP) maps, i.e., quantum channels. To indicate that $\Phi$ is a channel from register $A$ to $B$, we will write $\Phi_{A \to B}$. We often drop tensor products with the identity, e.g., given a map $\Psi_{BC \to D}$, we will write $\Psi \circ \Phi$ to denote the map $\Psi \circ (\Phi \otimes \id_C)$ from $AC$ to $D$.
For us, all classical schemes will be {\em post-quantum} secure, i.e., secure against QPT adversaries.
%\vspace{.15cm}
%\label{par:support}
%The support of a quantum state $\rho$ is its cokernel (as a linear operator). Equivalently, this is the span of the pure states making up any decomposition of $\rho$ as a convex combination of pure states. We will denote the orthogonal projection operator onto this subspace by $P_\rho$. The two-outcome projective measurement (to test if a state has the same or different support as $\rho$) is then $\{P_\rho, \one -P_\rho\}$. 

\subsection{Quantum and Classical Cryptographic Primitives}\label{sec:prelcrypto}
%%%%%%%%%%%%%%%%%%%%%%%%%
We use \secpar to denote the security parameter, 
and we write it as \secparam in unary form. 
A function is {\em negligible} if it grows asymptotically slower than $1/p(\secpar)$ for every polynomial $p$. The notation $x \rand X$ means that $x$ is sampled uniformly at random from the set $X$.
 
We assume basic familiarity with classical 
symmetric-key encryption (\SKE), 
public-key encryption (\PKE), digital signatures (\DS), and signcryption (\SC.) Recall that one can construct an outsider- and insider-secure signcryption scheme by combining a \DS with a \CCAAA-secure \PKE via ``encrypt then sign''~\cite{ADR02}. We adopt the following notation for keys: 
(i.) $k$ = symmetric key, 
(ii.) $\dk$ = secret decryption key, 
(iii.) $\ek$ = public encryption key, 
(iv.) $\sk$ = secret signing key, 
(v.) $\vk$ = public verification key, 
(vi.) $\sdk$ = secret signing and decryption key, and 
(vii.) $\vek$ = public verification and encryption key. 
For quantum encryption, we follow~\cite{ABF+16,AGM18}.

\begin{defn}\label{def:SKQE}
A symmetric-key quantum encryption scheme (or $\SKQE$) is defined as a triple of QPT algorithms:
\begin{enumerate}
\item (key generation) $\KeyGen:$ on input $\secparam$, outputs $k \rand \K$
\item (encryption) $\Enc: \K \times \states(\H_M) \rightarrow \states(\H_C)$
\item (decryption) $\Dec: \K \times \states(\H_C) \rightarrow \states(\H_M \oplus\ket{\bot})$
\end{enumerate}
such that $\| \Dec_k \circ \Enc_k - \id_M \oplus 0_\bot \|_\diamond \leq \negl(\secpar)$
for all $k \from \KeyGen(\secparam)$. 
\end{defn}

\begin{defn}\label{def:PKQE}
A public-key quantum encryption scheme (or $\PKQE$) is defined as a triple of QPT algorithms:
\begin{enumerate}
\item (key generation) $\KeyGen:$ on input $\secparam$, outputs $(\ek, \dk) \rand \EK \times \DK$
\item (encryption) $\Enc: \EK \times \states(\H_M) \rightarrow \states(\H_C)$
\item (decryption) $\Dec: \DK \times \states(\H_C) \rightarrow \states(\H_M \oplus\ket{\bot})$
\end{enumerate}
such that $\| \Dec_\dk \circ \Enc_\ek - \id_M \oplus 0_\bot \|_\diamond \leq \negl(\secpar)$
for all $(\ek, \dk) \from \KeyGen(\secparam)$. 
\end{defn}

All the spaces have $\poly(\secpar)$ bits or qubits; $\K$, $\EK$, $\DK$ are classical, and $C$, $M$ are quantum. 
We assume w.l.o.g. that \dk also includes \ek, and $k,\ek$ also include \secpar. 

%Notice that setting $\ek=\dk$ in \expref{Definition}{def:PKQE} yields symmetric-key quantum encryption (\SKQE), letting $\Enc$ and $\Dec$ be classical algorithms yields \PKE.%\vspace{.15cm}

\subsection{Quantum Authentication}\label{sec:DNSandQCA}
%%%%%%%%%%%%%%%%%%

Given an attack map $\Lambda_{CB \to C\tilde B}$ on a scheme $\Pi = (\KeyGen, \Enc, \Dec)$ (where the adversary holds $B$ and $\tilde B$), we define the ``effective attack map'' by
$$
\Lambda^\Pi_{MB\to M\tilde B} := \mathbb{E} \left[ \Dec_k \circ \Lambda \circ \Enc_k\right],
$$
where the expectation is over the keys and classical randomness used by \Enc. Intuitively, a notion of authentication requires that, conditioned on acceptance, this map is the identity on $M$.

The standard one-time quantum authentication notion is ``\DNS''~\cite{DNS12}. It states that, for any attack map $\Lambda$, the effective attack map $\mathbb{E}_k [\Dec_k \circ \Lambda \circ \Enc_k]$ is equivalent to a combination of the identity and a ``discard'' map, along with some map on the adversary's private space. This provides for authentication of the plaintext only~\cite{AGM18}. A stronger notion, called quantum ciphertext authentication (\QCA,) prevents any adversarial modification of the ciphertext~\cite{AGM18} by placing further constraints on the simulator. Here, we recall both notions.

\begin{defn}[\cite{DNS12}]\label{def:DNS}
	A $\SKQE$ $\Pi = (\KeyGen, \Enc, \Dec)$ is \DNS-authenticating if, for any CP-map $\Lambda_{CB\to C\tilde B}$, there exist CP-maps $\Lambda^{\acc}_{B\to \tilde B}$ and $\Lambda^{\crej}_{B\to \tilde B}$ that sum to a TP map, such that:
	\begin{equation*}
		\left\|\Lambda^\Pi_{MB\to M\tilde B} - 
		\left(\id_M\otimes\Lambda^{\acc}_{B\to \tilde B} + \proj\bot_M\otimes \Lambda^{\crej}_{B\to \tilde B}\right)\right\|_\diamond\le\negl(n)\,.
	\end{equation*}
\end{defn}

%We also recall the stronger definition of \QCA security \cite{AGM18}.

\begin{defn}\label{def:qCauth}
	A \SKQE $\Pi=(\KeyGen, \Enc, \Dec)$ is \emph{ciphertext authenticating}, or \QCA, if for all CP-maps $\Lambda_{CB\to C\tilde B}$, there exists a CP-map $\Lambda^{\crej}_{B\to \tilde B}$ such that:
	\begin{equation*}\label{eq:normal-constraint}
		\left\|\Lambda^\Pi_{MB\to M\tilde B} - 
		\left(\id_M\otimes\Lambda^{\acc}_{B\to \tilde B} + \proj\bot_M\otimes \Lambda^{\crej}_{B\to \tilde B}\right)\right\|_\diamond\le\negl(n),
	\end{equation*}
	and $\Lambda^{\acc}_{B\to \tilde B}+\Lambda^{\crej}_{B\to \tilde B}$ is TP. Here
	$\Lambda^{\acc}_{B\to \tilde B}$ is given by:
	\begin{equation*}\label{eq:extra-constraint}
		\Lambda^{\acc}_{B\to \tilde B} (Z_B) =\mathbb{E}_{k,r}\left[\bra{\Phi_{k, r}} V_k^\dagger\Lambda \left(\Enc_{k;r}\left(\phi^+_{MM'} \otimes Z_B \right)\right)V_k\ket{\Phi_{k, r}}\right]
	\end{equation*}
	where 
	$\ket{\Phi_{k, r}} = \ket{\phi^+}_{MM'}\otimes\ket{\psi^{(k,r)}}_T.$
\end{defn}

While \DNS and \QCA are information-theoretic notions, it is straightforward to define variants which only require security against QPTs; we denote these by \cDNS and \cQCA, respectively. In these variants, as well as for security notions that we define in this paper, the condition on the diamond norm is relaxed by only requiring that no QPT adversary can distinguish the real and simulated channels with more than negligible advantage over guessing, i.e., the two channels are {\em computationally indistinguishable}.

All of these notions can be achieved with a simple scheme, as follows~\cite{ABE10,DNS12,AGM18}. The key selects a random element $C_k$ of the Clifford group. Encryption maps $\rho$ to $C_k \left( \rho \otimes \egoketbra{0^\secpar} \right) C_k^\dagger$. Decryption undoes $C_k$, and rejects if any of the attached qubits yield a non-zero measurement.% See also \expref{Section}{sec:DNSandQCA}.

%For chosen-ciphertext security, the unrestricted game is the usual CCA2 game with no restrictions on the adversary's use of the oracles (e.g., decrypting the challenge is allowed.) In the cheat-detecting game, we substitute the challenge with half of a maximally entangled state, and store the other half so we can later test whether the adversary attemps to decrypt the challenge. Security is defined in terms of the advantage of adversaries between these two games. We call this notion \QINDwCCAA. For details, see~\expref{Appendix}{sec:wCCA2}.

%For authenticated encryption, in the unrestricted (or ``real'') world, the adversary is given $\Enc_k$ and $\Dec_k$ oracles and outputs a bit. In the ``ideal'' world, they are instead given modified oracles $E$ and $D$. Here $E$ substitutes every input with half of a fresh maximally entangled state, storing the other half. Meanwhile, $D$ uses these stored states to test if the adversary composes $E$ with $D$; if he does, $D$ ensures the identity results, and otherwise outputs $\bot$. We call this notion \wQAE. For details, see~\expref{Appendix}{app:QAE}.

\subsection{Quantum Secrecy}\label{sec:prelqsecr}
%%%%%%%%%%%%%%%%%%

For encryption schemes,
basic secrecy is straightforward to generalize to the quantum case, and yields the \QIND, \QINDCPA, and \QINDCCA notions~\cite{ABF+16,BJ15} 
which are quantum analogues of the corresponding classical security notions. 
%As discussed above, n
The quantum counterparts of stronger 
notions like \INDCCAA or authenticated encryption (\CAE) require much more care; fortunately, a recent work showed that a solution is possible~\cite{AGM18}. %The idea is to compare adversaries in two games: an ``unrestricted game,'' and a ``cheat-detecting'' game. 

For \SKQE schemes, we follow~\cite{AGM18} for the notion of \emph{quantum authenticated encryption (\QAE)}, % was given in~\cite{AGM18}, 
while for \PKQE schemes, we formally define the 
novel 
notion of \emph{quantum indistinguishability under adaptive chosen-ciphertext attack (\QINDCCAA)} in~\expref{Section}{suppl:fullCCA2}. 
However, 
to simplify exposition, we will focus on the (weaker) \emph{plaintext security} 
(\cDNS-based) variants of 
these two notions. 
%the notions set down in~\cite{AGM18}. 
This means that we only check whether the adversary is attempting to `cheat' on the level of plaintexts, as in \expref{Experiment}{exp:outsider-ideal-summary}. We attach a ``\textsf{w}'' to the acronyms to indicate this distinction, and define these notions formally in~\expref{Section}{sec:weakqauth}. 
We use this approach for ease of exposition only: 
In \expref{Section}{sec:ctxt}, we describe %informally 
how to extend all our results on signcryption to the full, ciphertext-secure %(\cQCA-based) 
setting.

%%%%%%%%%%%%%%%%%%%%%%%%%
%%%%%%%%%%%%%%%%%%%%%%%%%
\section{Weak (\cDNS-Based) Quantum Authentication for Encryption}\label{sec:weakqauth}
%%%%%%%%%%%%%%%%%%%%%%%%%
%%%%%%%%%%%%%%%%%%%%%%%%%

%In this section we define the weak (plaintext security, or \cDNS-based) analogues of the security notions \QAE for \SKQE and \QINDCCAA for \PKQE. 

%%%%%%%%%%%%%%%%%%%%%%%%%%%%%%%%%%%%%%%%%%%%%%%%%%%%%%%%%%%%%%%%%%%%%%%%%%%%%%
\subsection{Weak \CCAA Security for Public-Key Quantum Encryption}\label{sec:PKEwCCA2}
%%%%%%%%%%%%%%%%%%%%%%%%%%%%%%%%%%%%%%%%%%%%%%%%%%%%%%%%%%%%%%%%%%%%%%%%%%%%%%

In~\expref{Section}{suppl:fullCCA2}
%this section 
we 
first formally 
define CCA2 security for \PKQE schemes. Therein, we follow the route of symmetric-key quantum encryption schemes by adopting techniques from \cite{AGM18}. 

However, %recall that 
in the present work, for ease of exposition, we 
start by using 
%use 
integrity check techniques based on the simpler and more familiar \DNS notion of plaintext authentication instead of the more refined notion of \QCA from \cite{AGM18}. Therefore, 
in this section 
we define a \cDNS-style version of \QINDCCAA, which we call {\em weak \CCAA (\QINDwCCAA)}. Here, intuitively, the decryption oracle refuses to decrypt any ciphertext that decrypts to the challenge plaintext. This, in particular, has the advantage that the analysis in this section applies to all quantum encryption schemes (see \cite{AGM18} for details). On the other hand, this approach requires the repeated application of the Gentle Measurement Lemma \cite{Winter99}, and can therefore only be fulfilled by schemes with plaintext spaces of large dimension, with a loss in the tightness of the resulting reductions. While not providing the best possible quantum counterpart to \INDCCAA, this notion fits the simplified framework of this work, and the upgrade to \QINDCCAA via the \QCA-based setting is straightforward as discussed in~\expref{Section}{sec:ctxt}. See~\expref{Section}{suppl:fullCCA2} for a formal definition of the upgraded notion.

We define \QINDwCCAA in terms of adversaries playing two games. The first game is the same as \QINDCCAAtest from~\cite{AGM18} for the symmetric-key case, except that the \Enc oracle is replaced by the public key; in this game there are no restrictions on the use of $\Dec_\dk$ by $\algo A_2$. In the cheat-detection game \QINDwCCAAfake, instead, the adversary is declared to \cheat whenever he replays the challenge state (and declared to \cheat anyway with probability 1/2 otherwise). The only difference in respect to \QINDCCAAfake from~\cite{AGM18} (beyond the replacement of the \Enc oracle with the public key), is that the challenge-replay test is done \cDNS-style (detecting plaintext replay) instead of \cQCA-style (detecting ciphertext replay), as discussed above.
 
\begin{experiment}\label{exp:INDwCCAA-test}
The $\QINDCCAAtest(\Pi, \adver, \secpar)$ experiment:
	\begin{algorithmic}[1]
		\State $\chall$ runs $(\dk,\ek) \from \KeyGen(\secparam)$ and flips a coin $b \inrand \bit$;
		\State $\algo A_1$ receives $\ek$ and oracle access to $\Dec_\dk$;
		\State $\algo A_1$ prepares a side register $S$, and sends to $\chall$ a challenge register $M$;
		\State $\chall$ puts into $C$ either $\Enc_\ek(M)$ (if $b=0$) or $\Enc_\ek(\tau_M)$ (if $b=1$);
		\State $\algo A_2$ receives registers $C$ and $S$ and oracle access to $\Dec_\dk$;
		\State $\algo A_2$ outputs a bit $b'$.  \textbf{If} $b'=b$, \textbf{output} \win; otherwise \textbf{output} \rej.
	\end{algorithmic}
\end{experiment}

\begin{experiment}%\label{exp:INDwCCAA-fake}
The $\QINDwCCAAfake(\Pi, \adver,\secpar)$ experiment:
	\begin{algorithmic}[1]
		\State $\chall$ runs $(\dk,\ek) \from \KeyGen(\secparam)$;
		\State $\algo A_1$ receives $\ek$ and oracle access to $\Dec_\dk$;
		\State $\algo A_1$ prepares a side register $S$, and sends to $\chall$ a challenge register $M$;
		\State $\chall$ discards $M$, prepares $\ket{\phi^+}_{M'M''}$, and stores $M''$; then $\chall$ encrypts $M'$  (using $\ek$) and puts the resulting ciphertext into $C'$;
		\State $\algo A_2$ receives registers $C'$ and $S$ and oracle access to $D_\dk$, where $D_\dk$ is defined as follows. On input a register $C$:
		\begin{enumerate}[(1)]
			\State $\chall$ applies $\Dec_\dk$ to $C$, places result in $M$; 
			\State $\chall$ applies $\{\Pi^+, \one - \Pi^+\}$ to $M M''$;
			\State \algorithmicif\  {the outcome is $1$} \algorithmicthen: abort and \textbf{output} \textsf{cheat}; \algorithmicelse \ \textbf{return} $M$;
		\end{enumerate}
		\State $\chall$ draws a bit $b$ at random. \textbf{If} $b=1$, \textbf{output} \cheat; if $b=0$ \textbf{output} \rej.
	\end{algorithmic}
\end{experiment}

We define \QINDwCCAA in terms of the advantage gap of adversaries between the two games.

\begin{defn}\label{def:qINDwCCAA}
	A $\PKQE$ $\Pi$ is $\QINDwCCAA$ if, for all QPT adversaries $\adver$,
	$$
	\Pr[\QINDCCAAtest(\Pi, \adver, \secpar) \to \win] - \Pr[\QINDwCCAAfake(\Pi, \adver, \secpar) \to \cheat] \leq \negl(n)\,.
	$$
\end{defn}

Clearly, because an adversary for \QINDwCCAA is also an adversary for \QINDCCA with additional power and additional constraints on the scheme itself, we see that \QINDwCCAA implies \QINDCCA and hence also \QINDCPA.

\begin{prop}\label{prop:qINDwCCAAimplications}
	Let $\Pi$ be a $\QINDwCCAA$ \PKQE. Then $\Pi$ is also \QINDCCA and \QINDCPA.
\end{prop}

%%%%%%%%%%%%%%%%%%%%%%%%%%%%%%%%%%%%%%%%%%%%%%%%%%%%%%%%%%%%%%%%%%%%%%%%%%%%%%%%%%%%%%%%%%%%%%%%%%%%%%%%
\subsection{Weak \QAE for Symmetric-Key Quantum Encryption}\label{app:wQAE}
%%%%%%%%%%%%%%%%%%%%%%%%%%%%%%%%%%%%%%%%%%%%%%%%%%%%%%%%%%%%%%%%%%%%%%%%%%%%%%%%%%%%%%%%%%%%%%%%%%%%%%%%

%In this section 
Here 
we define a ``weak'' (\cDNS-style) version of the notion of quantum authenticated encryption \QAE for \SKQE schemes, that we denote as \wQAE. This will be used in~\expref{Section}{sec:many-time}. 
%We begin by defining such a weakened version of \QAE that corresponds to only asking for plaintext integrity, as opposed to ciphertext integrity, in the classical case. To this end, 
We start from the \QAEreal and \QAEideal games defined in \cite{AGM18} for \QAE, but 
in the latter game we 
replace the \cQCA-style replay check by a \cDNS-style replay check. We call the new game % \wQAEreal and 
\wQAEideal.

\begin{experiment}\label{exp:wQAEreal}
	The experiment $\QAEreal(\Pi, \adver, \secpar)$:
	\begin{algorithmic}[1]
		\State $k \from \KeyGen(\secparam)$ ;
		\State \textbf{output} $\adver^{\Enc_k, \Dec_k}(1^\secpar)$.
	\end{algorithmic}
\end{experiment}

\begin{experiment}\label{exp:wQAEideal}
	The experiment $\wQAEideal(\Pi, \adver, \secpar)$:
	\begin{algorithmic}[1]
		\State $k \from \KeyGen(\secparam)$ ;
		\State define channel $E_{M \rightarrow C}$ as follows:
		\begin{enumerate}[(1)]
			\item prepare $\ket{\phi^+}_{M'M''}$, store $(M'', M)$ in a set $\mathcal M$;
			\item apply $\Enc_k$ to $M'$; \textbf{return} result.
		\end{enumerate}
		\State define channel $D_{C \rightarrow M}$ as follows:
		\begin{enumerate}[(1)]
			\item apply $\Dec_k$ to $C$, place results in $M'$; 
			\For {\textbf{each} $(M'', M) \in \mathcal M$}: 
			\State apply $\{\Pi^+, \one - \Pi^+\}$ to $M'M''$; \textbf{if} {outcome is $0$}: \textbf{return} $M$;
			\EndFor
			\State	\textbf{return} $\egoketbra{\bot}$;
		\end{enumerate}
		\State \textbf{output} $\adver^{E,D}(\secparam)$.
	\end{algorithmic}
\end{experiment}

%A \SKQE is now defined to be {\em weak quantum authenticated encryption (\wQAE)} secure if for all QPT adversaries, the two experiments are indistinguishable.
A \SKQE is now defined to be {\em weak quantum authenticated encryption} (\wQAE) %secure 
if no QPT adversary can reliably distinguish the two.

\begin{defn}
	A \SKQE $\Pi$ is \wQAE if for all QPT adversaries $\adver$ it holds:
	\begin{equation*}
		\left|\Pr\left[\QAEreal(\Pi, \adver, \secpar) \to \real \right]-\Pr\left[\wQAEideal(\Pi, \adver,\secpar) \to \real \right]\right| \leq \negl(\secpar).
	\end{equation*}
\end{defn}

Analogously we say that a classical \SKE scheme is \wCAE secure if it fulfills the classical restriction of the above (see \cite{AGM18}), defined by games \CAEreal and \wCAEideal, following the real vs ideal characterization of \CAE by Shrimpton \cite{Shrimpton04}. Clearly, because an adversary for \wQAE is also an adversary for \QINDCCA with additional power and additional constraints on the scheme itself, we see that \wQAE implies \QINDCCA and hence also \QINDCPA.

\begin{prop}\label{prop:wQAEimplications}
	Let $\Pi$ be a $\wQAE$ \SKQE. Then $\Pi$ is also \QINDCCA and \QINDCPA.
\end{prop}

%%%%%%%%%%%%%%%%%%%%%%%%%
%%%%%%%%%%%%%%%%%%%%%%%%%
\section{Security of Hybrid Quantum Encryption}\label{sec:hybqenc}
%%%%%%%%%%%%%%%%%%%%%%%%%
%%%%%%%%%%%%%%%%%%%%%%%%%

%%%%%%%%%%%%%%%%%%%%%%%%%%%%%%%%%%%%%%%%%%%%%%%%%%%%%%%%%%%%%%%%%%%%%%%%%%%%%%%%%%%%%%%%%%%%%%%%%%%%%%%%
\subsection{Basic Definitions}\label{sec:hybqencdef}
%%%%%%%%%%%%%%%%%%%%%%%%%%%%%%%%%%%%%%%%%%%%%%%%%%%%%%%%%%%%%%%%%%%%%%%%%%%%%%%%%%%%%%%%%%%%%%%%%%%%%%%%

In this section we study the security of 
canonical \SKQE and \PKQE schemes constructed in a hybrid classical-quantum way. %, analogously to what is defined in~\expref{Construction}{cons:QS-generic} for signcryption. 
Namely, 
we will use a classical \SKE (or \PKE) scheme to encrypt a random one-time key, and then use that key to encrypt a quantum state with a (usually one-time secure) 
%symmetric-key quantum encryption scheme. 
\SKQE.
We denote the resulting construction of symmetric- (resp., public-) key quantum encryption by $\PiKEMS$ (resp., \PiKEMP).

\begin{construction}\label{cons:SKQE-generic}
Let $\Pi^\SKE$ be a symmetric-key classical encryption scheme, and $\Pi^\SKQE$ a symmetric-key quantum encryption scheme. We define %a 
the canonical hybrid 
symmetric-key quantum encryption scheme $\PiKEMS[\Pi^\SKE, \Pi^\SKQE] = (\KeyGen, \Enc, \Dec)$ as follows:
\begin{enumerate}
	\item $\KeyGen(\secparam):$ output $k \from \KeyGen^\SKE(\secparam)$;
	\item $\Enc:$ on input $\rho_M$, generate $k' \from \KeyGen^\SKQE(\secparam)$ and output\\ $(\Enc_k^\SKE(k'), \Enc^\SKQE_{k'}(\rho))$;
	\item $\Dec:$ on input $(c, \sigma_C)$, set $k' = \Dec_k^\SKE(c)$; if $k' = \bot$, output $\bot$ and otherwise output $\Dec^\SKQE_{k'}(\sigma)$.
\end{enumerate}
\end{construction}

\begin{construction}\label{cons:PKQE-generic}
Let $\Pi^\PKE$ be a public-key classical encryption scheme, and $\Pi^\SKQE$ a symmetric-key quantum encryption scheme. We define %a 
the canonical hybrid 
public-key quantum encryption scheme $\PiKEMP[\Pi^\PKE, \Pi^\SKQE] = (\KeyGen, \Enc, \Dec)$ as follows:
\begin{enumerate}
	\item $\KeyGen(\secparam):$ output $(\ek,\dk) \from \KeyGen^\PKE(\secparam)$;
	\item $\Enc:$ on input $\rho_M$, generate $k' \from \KeyGen^\SKQE(\secparam)$ and output\\ $(\Enc_\ek^\PKE(k'), \Enc^\SKQE_{k'}(\rho))$;
	\item $\Dec:$ on input $(c, \sigma_C)$, set $k' = \Dec_\dk^\PKE(c)$; if $k' = \bot$, output $\bot$ and otherwise output $\Dec^\SKQE_{k'}(\sigma)$.
\end{enumerate}
\end{construction}

%%%%%%%%%%%%%%%%%%%%%%%%%%%%%%%%%%%%%%%%%%%%%%%%%%%%%%%%%%%%%%%%%%%%%%%%%%%%%%%%%%%%%%%%%%%%%%%%%%%%%%%%
\subsection{Security of Hybrid Symmetric-Key Encryption}\label{sec:sechybSKQE}
%%%%%%%%%%%%%%%%%%%%%%%%%%%%%%%%%%%%%%%%%%%%%%%%%%%%%%%%%%%%%%%%%%%%%%%%%%%%%%%%%%%%%%%%%%%%%%%%%%%%%%%%

%In this section we study the security of the canonical hybrid construction for \SKQE schemes. 

We now 
%In this section we 
show that the hybrid construction for symmetric-key schemes, \PiKEMS, can be used to lift the security of a \SKQE scheme from \cQCA to \QAE, by combining it with a classical \CAE scheme. %We will show this for the weak, plaintext-based version \wQAE starting from a \cDNS \SKQE scheme and a \CAE \SKE scheme. The general result follows from the discussion in \expref{Section}{sec:ctxt}.% We provide a proof sketch below, the full proof can be found in the \expref{Section}{sec:QAE-secu}.

\begin{theorem}\label{thm:QAEhybrid}
	%Let $\Pi^\Cl$ be an \CAE secure \SKES and $\Pi^\Qu$ a \cDNS secure symmetric-key quantum encryption scheme.
%Then $\PiKEMS[\Pi^\Cl, \Pi^\Qu]$ is a \wQAE secure \SKQE.

%\begin{theorem}\label{thm:QAEhybrid-main-text}
Let $\Pi$ be a classical \CAE scheme, and let $\Sigma$ be a \cDNS- (resp., \cQCA-) secure symmetric-key quantum encryption scheme. Then $\PiKEMS[\Pi, \Sigma]$ is a \wQAE- (resp., \QAE-) secure symmetric-key quantum encryption scheme.
%\end{theorem} 
\end{theorem} 

\begin{proof}[Proof sketch] (For details, see \expref{Section}{sec:QAE-secu}). 
We only prove this theorem for the weak, plaintext-based version \wQAE starting from a \cDNS \SKQE scheme and a \CAE \SKE scheme. The general result follows from the discussion in \expref{Section}{sec:ctxt}.  IN this proof, indistinguishable means computationally indistinguishable.

We begin by defining a hybrid game \textsf{Hybrid $0$}, modifying the encryption and decryption routine. For encryption, we replace the encrypted one-time key $k'$ used for $\Enc^\Qu$ by an encryption of a freshly sampled key $k''\from\KeyGen^\Qu$ and store $(k',k'')$ in a database $\mathcal S$. For decryption, we first decrypt the classical part of the ciphertext and check whether the result is equal to one of the $k''$ in $\mathcal S$. If not, return $\bot$, if so, use the corresponding $k'$ to decrypt the quantum ciphertext and return the result. 
Suppose now that there exists an adversary \adver that can distinguish \QAEreal and \textsf{Hybrid $0$}. Then we can build an adversary \adver' that distinguishes the real and %(weak) 
ideal worlds, \CAEreal and \CAEideal, in the characterization of \CAE. This is done by simulating the \QAEreal game played by \adver, assuming we are in the \CAEreal world. It turns out that this results in \adver' simulating \textsf{Hybrid $0$} when in the \CAEideal world. Hence \QAEreal and \textsf{Hybrid 0} are indistinguishable by the \CAE security of $\Pi^\Cl$.

We continue to show that the experiments \textsf{Hybrid $0$} and \wQAEideal are indistinguishable due to the \cDNS security of $\Pi^\Qu$.

We proceed using a standard hybrid method over the encryption queries of \adver. We define \textsf{Hybrid $i$} to replace the quantum plaintext by half of a maximally entangled state for the first $i$ queries, storing the other half together with the pair $(k', k'')$ and the plaintext. For the remaining queries it behaves like \textsf{Hybrid $0$}. The decryption routine behaves like in \textsf{Hybrid $0$}, unless the presented one-time key matches one from the first $i$ queries, then it performs the entanglement check measurement as in \wQAE.% More formally, we define the following games.

We will now use the \cDNS security of $\Pi^\Qu$. Applying the definition of \cDNS to the outputs of an attack map $\Lambda_{C\to C^\ell}$ sequentially shows that \cDNS security also implies the existence of a simulator in the case of one encryption, but many decryptions. %\footnote{Applying \cDNS to the map from $C$ to $C_1$ yields a simulator. For the identity part of it, the other outputs do not depend on the input by no-cloning. for the reject part of it, apply \cDNS to the map from the input to $C_2$, etc.} 
Here, the simulator has the form
\begin{equation}
	\Lambda^{\mathsf{Sim}}_{M\to M^\ell}=\sum_{i=1}^\ell \id_{M\to M_i}\otimes \left(\bot^{\otimes (\ell-1)}\right)_{M_{i^c}}\otimes \Lambda^{(i)}_{B\to B}+\left(\bot^{\otimes \ell}\right)_{M^\ell}\otimes \Lambda^\bot_{B\to B},
\end{equation}
where $B$ is the adversary's side information register. This is proven in detail in Lemma \ref{lem:multiDNS}. % in Section \ref{sec:DNSandQCA}. 
We can replace the combination "$i+1$-st encryption - attack - decryptions matching $k''_{i+1}$" by this simulator in both \textsf{Hybrids $i$} and $i+1$, yielding \textsf{Hybrids $i_a$} and $(i+1)_b$ which are indistinguishable from \textsf{Hybrids $i$} and $i+1$, respectively. It remains to observe that the entanglement tests in \textsf{Hybrid} $i+1$ exactly project onto the respective identity contribution of $\Lambda^{\mathsf{Sim}}$ (a similar idea to the Broadbent-Waynewright simulator \cite{BW16}), i.e.
\begin{equation}
	\bra{\phi^+}_{MM_i}\Lambda^{\mathsf{Sim}}_{M\to M^\ell}\left(\phi^+_{MM_i}\otimes (\cdot)_B\right)\ket{\phi^+}_{MM_i}=\left(\bot^{\otimes (\ell-1)}\right)_{M_{i^c}}\otimes \Lambda^{(i)}_{B\to B}.
\end{equation}
This immediately implies that \textsf{Hybrids $i_a$} and $(i+1)_b$ are identical.
%Suppose now that \adver can distinguish \textsf{Hybrid $i$} and \textsf{Hybrid $(i+1)$}.

%Applying the idea of the Broadbent-Waynewright simulator \cite{BW16} to the effective map composed of encryption, attack and decryption produces a simulator as required by \cDNS and simulates \textsf{Hybrid $(i+1)$}. (It is easy to see that if there exists a simulator as required by \cDNS security, then the simulator described here works as well. This is because the technique of applying a map to half a maximally entangled state and measuring whether it is unchanged projects, conditioned on the "unchanged" outcome, onto the identity component of the map.) Therefore we have built a successful adversary against the \cDNS security of $\Pi^\Qu$, a contradiction.

We conclude that \textsf{Hybrid $0$} and \textsf{Hybrid $q$} are indistinguishable. It remains to show that \textsf{Hybrid $q$} and \wQAEideal are indistinguishable. This follows by Gentle Measurement \cite{Winter99}: every time the entanglement test is applied to the current plaintext and an unrelated purification, the state remains unperturbed. 
\end{proof}

%%%%%%%%%%%%%%%%%%%%%%%%%%%%%%%%%%%%%%%%%%%%%%%%%%%%%%%%%%%%%%%%%%%%%%%%%%%%%%%%%%%%%%%%%%%%%%%%%%%%%%%%
\subsection{Security of Hybrid Public-Key Encryption}\label{sec:sechybPKQE}
%%%%%%%%%%%%%%%%%%%%%%%%%%%%%%%%%%%%%%%%%%%%%%%%%%%%%%%%%%%%%%%%%%%%%%%%%%%%%%%%%%%%%%%%%%%%%%%%%%%%%%%%

%In this section we study the security of the canonical hybrid construction for \PKQE schemes. 

%First, for public-key encryption, we have the following (proven in \expref{Appendix}{app:security}.)% and symmetric-key authenticated encryption. %The public-key hybrid scheme was first introduced in~\cite{BJ15}. In \expref{Appendix}{app:security}, we prove that this hybrid construction preserves security, in the following sense.

In this section, we prove the security of the hybrid construction \PiKEMP under suitable building blocks. For \PKQE schemes, the notions of \QINDCPA and \QINDCCA are defined in \cite{ABF+16,BJ15}, where the adversary $\adver=(\algo A_1,\algo A_2)$ consists of two parts (pre-challenge and post-challenge), and is playing against a challenger $\chall$, which is a fixed algorithm determined only by the security game and the scheme. The security of the hybrid construction \PiKEMP for these two notions is trivial.% (see \expref{Section}{sec:proofhybridfull} for details).

\begin{theorem}\label{thm:PKE-hybrid-full}
Let $\Pi^\PKE$ be an \INDCPA (resp., \INDCCA) \PKE. Let further ${\Pi^\SKQE}$ be a \QIND \SKQE. Then $\PiKEMP[\Pi^\PKE, {\Pi^\SKQE}]$ is a \QINDCPA (resp., \QINDCCA) \PKQE.
\end{theorem}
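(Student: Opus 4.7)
The plan is a standard two-step hybrid argument that first peels off the classical wrapping with \INDCPA (resp., \INDCCA) and then invokes \QIND on the now-independent quantum ciphertext. Write a challenge of $\PiKEMP[\Pi^\PKE, \Pi^\SKQE]$ as $(c, \sigma)$ with $c = \Enc^\PKE_\ek(k)$ for a freshly sampled $k \from \KeyGen^\SKQE(1^n)$ and $\sigma = \Enc^\SKQE_k(\mu)$, where $\mu$ is either the adversary's challenge plaintext or $\tau_M$.

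First, I would introduce an intermediate experiment in which $c$ is replaced by $\Enc^\PKE_\ek(0^{|k|})$ while $\sigma$ is still formed honestly from the sampled $k$. The gap between the real experiment and this intermediate one reduces to the \INDCPA (resp., \INDCCA) advantage of $\Pi^\PKE$: the reduction submits $(k, 0^{|k|})$ as its classical challenge plaintexts, places the returned classical ciphertext into $c$, and (in the CCA case) answers each pre-challenge hybrid decryption query $(c', \sigma')$ by querying its classical \INDCCA oracle on $c'$ to recover $k'$ and then running $\Dec^\SKQE_{k'}(\sigma')$. Because \QINDCCA is the \CCA1 variant, there are no post-challenge decryption queries, so the ``no query on the challenge ciphertext'' side-condition of the classical game is never violated.

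Second, in the intermediate experiment the fresh key $k$ is information-theoretically independent of $c$, of $\ek$, and of all classical oracle answers, and therefore appears in the adversary's view only through $\Enc^\SKQE_k(\mu)$. The gap between $\mu$ being the challenge plaintext and $\mu = \tau_M$ reduces directly to the \QIND security of $\Pi^\SKQE$: the \QIND distinguisher samples a \PKE keypair itself, forwards the adversary's challenge plaintext to its own \QIND challenger, embeds the returned ciphertext into $\sigma$, and pairs it with a fresh $\Enc^\PKE_\ek(0^{|k|})$. Combining the two hops by the triangle inequality yields the theorem.

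The main obstacle, though largely routine, is cleanly threading the oracle simulation through the first reduction so that each hybrid decryption query is dispatched by exactly one call to the classical decryption oracle followed by an efficient run of $\Dec^\SKQE$; this works precisely because the hybrid $\Dec$ is sequential (decapsulate the key from $c'$, then apply $\Dec^\SKQE_{k'}$). The \QINDCPA case is a strict simplification where no decryption oracle needs to be simulated at all, so the same chain of reductions goes through with the \INDCPA advantage of $\Pi^\PKE$ in place of the \INDCCA advantage.
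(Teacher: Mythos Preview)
Your proof is correct. The paper takes a slightly different but equivalent route: rather than a two-hop hybrid, it gives a single direct reduction to the \INDCPA (resp., \INDCCA) security of $\Pi^\PKE$. The reduction submits a fresh symmetric key $m$ as its classical challenge (in a real-or-random formulation), receives back $c$, then internally flips a coin $b'$ and hands the quantum adversary $(c,\Enc^\SKQE_m(\rho))$ or $(c,\Enc^\SKQE_m(\tau_M))$ according to $b'$; it outputs $0$ if the quantum adversary guesses $b'$ correctly and a random bit otherwise. When the classical challenge bit is $0$ (so $c$ really encrypts $m$), this faithfully simulates the \QINDCPA/\QINDCCA game and inherits the adversary's advantage; when it is $1$, $m$ is independent of $c$ and \QIND of $\Pi^\SKQE$ forces $\Pr[t=b']=1/2$. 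Your two-step decomposition is the more modular ``textbook'' KEM/DEM argument, separating the two assumptions cleanly and making the \CCA oracle simulation explicit; the paper's version is more compact and avoids the intermediate experiment, at the cost of nesting the \QIND argument inside the analysis of a single reduction. Both yield the same conclusion with the same quantitative loss up to constant factors.
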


\begin{proof}
We will prove the \CPA case; the \CCA case is essentially identical. For a contradiction, let $\algo A$ be a QPT machine which wins the \QINDCPA game against $\PiKEMP[\Pi^\PKE, \Pi^\SKQE]$ with probability $1/2 + \delta$ where $\delta$ is %at least $1/\poly(\secpar)$
non-negligible in $\secpar$. Let $\algo A_1$ and $\algo A_2$ be the pre-challenge and post-challenge algorithms of $\algo A$. Define an adversary $\algo A'$ against $\Pi^\PKE$ as follows. The pre-challenge algorithm $\algo A_1'$ accepts $\ek$, runs $\rho \from \algo A_1(\ek)$, and outputs $m \from \KeyGen^\SKQE(\secparam)$. The post-challenge algorithm $\algo A_2'$, on input $c$, proceeds as follows: (i.) flip a coin $b' \inrand \bit$, (ii.) if $b'=0$, set $t = \algo A_2(c, \Enc^{\Pi^\SKQE}_m(\rho))$; if $b'=1$ set $t = \algo A_2(c, \Enc^{\Pi^\SKQE}_m(\one/d_M))$; (iii.) if $t = b'$ set $b_\textsf{out} = 0$; otherwise set $b_\textsf{out} \inrand \bit$, and (iv.)output $b_\textsf{out}$.

Now consider $\algo A'$ in the \INDCPA game, and let $b$ be the challenge bit. If $b =0$, then the execution of $\algo A$ is exactly simulating the \QINDCPA game, and so $\Pr[b_\textsf{out} = b] \geq 1/2 + \delta$. If $b=1$, then the encryption of $\rho$ is done with an independent key; by \QIND, $\Pr[t = b'] = 1/2$ which implies that $\Pr[b_\textsf{out} = b] = 1/2$. We thus have that, overall, $\Pr[b_\textsf{out} = b] \geq 1/2 \cdot (1/2 + \delta) + 1/2 \cdot 1/2 \geq 1/2 + \delta/2$, meaning that $\algo A'$ would break the \INDCPA security of $\Pi^\PKE$.
\end{proof}

We now prove that the hybrid construction \PiKEMP also inherits (\cDNS-style) \CCAA security from the underlying \PKE, {\em provided} the underlying \SKQE is now not only \QIND, but also (\cDNS-style) authenticating.

%\begin{prop}
%If $\Pi$ is a \QINDwCCAA secure PKQES, then in particular it is also \QINDCCA, and hence \QINDCPA.
%\end{prop}

%We show now how to achieve \QINDwCCAA security through the standard hybrid construction.

\begin{theorem}\label{thm:KEMwCCA2}
Let $\Pi^\PKE$ be an \INDCCAA \PKE scheme, and let $\Pi^\SKQE$ be a \cDNS-secure \SKQE scheme. Then $\PiKEMP[\Pi^\PKE, \Pi^\SKQE]$ 
%(\expref{Construction}{cons:QPKE}) 
is a \QINDwCCAA-secure \PKQE scheme.
\end{theorem}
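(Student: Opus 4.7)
The plan is to mirror the classical KEM/DEM security proof while using one-time quantum authentication to tame the decryption oracle. I would proceed via a chain of game hops starting from $\QINDCCAAtest$ and ending at $\QINDwCCAAfake$, bounding the advantage change at each step using first \INDCCAA of $\Pi^\PKE$, then \cDNS authentication of $\Pi^\SKQE$, and finally a direct calculation.

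The first hop uses \INDCCAA of $\Pi^\PKE$ to replace the KEM part of the challenge ciphertext with an encryption of an independent, freshly sampled $\SKQE$ key. Concretely, an intermediate game samples two $\SKQE$ keys $k_0, k_1 \from \KeyGen^\SKQE(\secparam)$ independently, uses $k_0$ to produce the $\SKQE$ layer of the challenge, and sets $c^* := \Enc^\PKE_\ek(k_1)$. The decryption oracle is modified in the usual Cramer--Shoup-style way: on a query $(c^*, \sigma)$ whose classical component matches the challenge, it invokes $\Dec^\SKQE_{k_0}$ directly; on any other query, it proceeds normally via $\Dec^\PKE_\dk$ followed by $\Dec^\SKQE$. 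A reduction to \INDCCAA of $\Pi^\PKE$---the PKE challenger is asked to encrypt either $k_0$ or $k_1$; the reduction services $c^*$-queries with the known $k_0$ and all other queries using the PKE decryption oracle---bounds the change in advantage at this hop by a negligible amount.

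In the resulting game, the $\SKQE$ key $k_0$ is used only to produce the single challenge ciphertext $\sigma^* := \Enc^\SKQE_{k_0}(m_b)$ and to decrypt the quantum parts of the (polynomially many) adversarial queries sharing the classical component $c^*$: this is a one-time-encryption plus many-time-decryption pattern for $\Pi^\SKQE$. I would apply the many-time version of \cDNS authentication, \expref{Lemma}{lem:multiDNS}, to conclude that the joint view is $\negl(\secpar)$-close to one in which each $c^*$-decryption output is either the original plaintext $m_b$ or $\bot$. Let $q$ denote the probability that at least one such $c^*$-decryption returns $m_b$; then $\Pr[\QINDCCAAtest \to \win] \leq \frac{1}{2} + \frac{q}{2} + \negl(\secpar)$, since conditioned on no plaintext recovery the bit $b$ is hidden by \QIND of $\Pi^\SKQE$ (which follows from \cDNS). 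Running the same \cDNS-based analysis on the Fake side---with the encryption of $m_b$ replaced by an encryption of half of $\ket{\phi^+}_{M'M''}$ and the plaintext-recovery event replaced by the entanglement test---yields $\Pr[\QINDwCCAAfake \to \cheat] = \frac{1}{2} + \frac{q}{2} \pm \negl(\secpar)$, from which the theorem follows.

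The main obstacle is the adaptive many-time decryption step: \expref{Lemma}{lem:multiDNS} is phrased non-adaptively, so one must either unroll the adversary's oracle interaction into a single large attack channel before invoking it, or interleave \cDNS replacements query-by-query with applications of the gentle-measurement lemma. The latter introduces a loss that grows with the number of queries and requires the plaintext dimension to be large enough for the resulting bound to remain negligible---this is the known price of using the \cDNS-based $\QINDwCCAA$ notion rather than the $\cQCA$-based $\QINDCCAA$ notion discussed in \expref{Section}{sec:ctxt}.
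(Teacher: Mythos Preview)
Your proposal is correct and uses the same two ingredients as the paper (the \INDCCAA hop to decouple the one-time key from the classical ciphertext, and \cDNS to control the $c^*$-decryption queries), but the argument is organized differently. The paper does not introduce your quantity $q$; instead it builds a monotone chain of games from $\QINDCCAAtest$ to $\QINDwCCAAfake$ in which the adversary's success probability never decreases (up to negligible terms). The enabling trick is an early hop that replaces $\Dec_\dk$ by an oracle which already performs the entanglement test and immediately outputs \win if it fires; this makes ``replay'' an automatic win, so the later hops (blinding the classical part via \INDCCAA, then swapping the plaintext for the EPR half via \cDNS) can only help the adversary. That device sidesteps the need to argue separately that $q$ coincides on the Test and Fake sides---which in your approach follows from the fact that the \cDNS simulator maps $\Lambda^{\acc},\Lambda^{\crej}$ depend only on the attack channel and the side register $B$, not on the plaintext. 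Your route is closer to a textbook KEM/DEM proof and makes the ``plaintext-recovery probability'' explicit; the paper's route is more economical in that it directly morphs one game into the other. The adaptivity issue you flag is genuine in both approaches: the paper's detailed proof handles it per query in the Game~3$\to$4 transition rather than via a single invocation of \expref{Lemma}{lem:multiDNS}.
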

\begin{proof}[Proof sketch] (For details, see \expref{Section}{sec:proofwCCA2}). We follow the same strategy as the proof of \QINDCPA security, i.e., \expref{Theorem}{thm:PKE-hybrid-full}. Specifically, given an adversary $\adver$ which can distinguish the ``test'' and ``fake'' games, we will build an adversary $\adver'$ against the CCA2-secure classical public-key scheme. This adversary proceeds as follows:

\noindent (1.) pass the input ($1^n$ and the public key $\ek$) to start $\adver$;

\noindent (2.) simulate each decryption query in the obvious way: query the PKE $\Dec$ oracle on the classical part of the input, then decrypt the quantum part using the  resultant plaintext as the \SKQE key.

\noindent (3.) When $\adver$ outputs the challenge $\rho$, send a fresh one-time key $k$ as our \INDCCAA challenge, receive the challenge PKE ciphertext $c$ back. 

\noindent (4.) Now flip a coin $b$ to decide whether, for the remainder of the game, we will simulate the ``test'' game ($b = 0$) or the ``fake'' game ($b=1$) with $\adver$. In the case ``test'', we will use the challenge ciphertext $(c, \Enc_k^\QCA(\rho))$. %It is clear that
Clearly, in this case, we can faithfully simulate the rest of the game.

\noindent (5.) if $\adver$ correctly guesses $b$, we output $0$. If not, we output a fair coin.

We argue that $\adver'$ wins the PKE \INDCCAA game with non-negligible advantage over random guessing. First, if the PKE challenge bit is $0$ (i.e., undisturbed challenge), then we are faithfully simulating $\adver$ in either the ``test'' or the ``fake'' games. We will thus gain precisely the advantage of $\adver$ over random guessing, in this case. It remains to check that, if the PKE challenge bit is $1$ (i.e., discarded challenge), then $\adver'$ does \emph{no worse} than random guessing. This is done by reduction to \cDNS security. Note that, in this case, the \SKQE encryption provided to $\adver$ is performed with a key $k$ which is independent of all other random variables in the game (since $c$ is an encryption of a random string, and not $k$.) If the adversary can nonetheless distinguish the ``test'' game from the ``fake'' game, then (by the definition of these games) this implies that he cannot be simulated by an ``ignore or discard'' channel demanded by the \cDNS security definition. 
\end{proof}

The above theorem also extends to the stronger ciphertext security (\cQCA-based) case, yielding \QINDCCAA hybrid \PKQE, as defined in~\expref{Section}{suppl:fullCCA2}. We can now summarize all the results discussed in this section.

\begin{cor}\label{cor:PKE-hybrid}
Let $\Pi$ be a \PKE, and $\Sigma$ a \SKQE. Then $\PiKEMP[\Pi, \Sigma]$ is a public-key quantum encryption scheme (\PKQE). Moreover, we have:
\begin{enumerate}
\item If $\Pi$ is \INDCPA and $\Sigma$ is \QIND, then $\PiKEMP[\Pi, \Sigma]$ is \QINDCPA.
\item If $\Pi$ is \INDCCA and $\Sigma$ is \QIND, then $\PiKEMP[\Pi, \Sigma]$ is \QINDCCA.
\item If $\Pi$ is \INDCCAA and $\Sigma$ is \cDNS, then $\PiKEMP[\Pi, \Sigma]$ is \QINDwCCAA.
\item If $\Pi$ is \INDCCAA and $\Sigma$ is \cQCA, then $\PiKEMP[\Pi, \Sigma]$ is \QINDCCAA.
\end{enumerate}
\end{cor}

%%%%%%%%%%%%%%%%%%%%%%%%%
%%%%%%%%%%%%%%%%%%%%%%%%%
\section{Impossibility of Signing Quantum States}\label{sec:impossible}
%%%%%%%%%%%%%%%%%%%%%%%%%
%%%%%%%%%%%%%%%%%%%%%%%%%

%In this section we describe our impossibility result regarding the possibility of securely signing quantum states.

%%%%%%%%%%%%%%%%%%%%%%%%%
\subsection{Defining Quantum Signatures}\label{sec:defqs}\label{sec:defQSC}
%%%%%%%%%%%%%%%%%%%%%%%%%

In this section we discuss how to properly define in a formal way what a quantum signature scheme should be. A first attempt would be to translate ``quantumly'' in the most natural way the usual definition of classical schemes; that is, trying to define the following.

\begin{defn}[Quantum Signature, Wrong Definition \#1]\label{def:QSSwrong1}
	A {\em quantum signature scheme (or $\QS$)} with signing-verifying key space $\{\calS_\secpar\}_\secpar \times \{\V_\secpar\}_\secpar$ (where $\{\calS_\secpar\}_\secpar$ and $\{\V_\secpar\}_\secpar$ are families of spaces of bitstrings of size polynomial in \secpar), message space $\P$ and signature space $\C$ (both being complex Hilbert spaces of fixed finite dimension) is a triple of QPT algorithms:
	\begin{enumerate}
		\item (key generation) $\KeyGen:$ on input \secparam, outputs signing-verifying key pair $(\sk,\vk) \in \calS_\secpar \times \V_\secpar$, where %$S_\secpar,V_\secpar$ are spaces of bitstrings of size polynomial in \secpar, and 
		we assume w.l.o.g. that \sk includes \vk, and \vk includes \secpar.
		\item (Sign) $\Sign$: on input a signing key $\sk \in \calS_\secpar$ and a quantum state $\mu \in \states(\P)$, outputs a quantum state (``quantum signature'') $\sigma \in \states(\C)$; we write this as $\sigma \from \Sign_\sk(\mu)$.
		\item (Verify) $\Ver$: on input a verification key $\vk \in \V_\secpar$, a quantum state $\mu \in \states(\P)$, and a quantum signature $\sigma \in \states(\C)$, outputs a verification bit $b \in \{\acc,\rej\}$; we write this as $b \from \Ver_\vk(\mu,\sigma)$.
	\end{enumerate}
	Moreover, the following correctness property must hold:
	$$
	\forall (\sk,\vk) \from \KeyGen(\secparam), \forall \mu \in \states(\P) \implies \\
	\Pr[\Ver_\vk(\mu,\Sign_\sk(\mu)) \to \rej ] \leq \negl(\secpar).
	$$
\end{defn}

Looking at the above definition, it is clear that there is an issue: the \Sign and \Ver procedures consume the same message state. 
%\Sign consumes the same message that should be used as input to \Ver. 
This is impossible in the quantum setting, given the no-cloning theorem\footnote{We remark that the no-cloning theorem rules out \emph{generic} cloning of \emph{arbitrary} states; indeed, our interest is precisely in generic signing of arbitrary states. Cloning particular collections of known states is possible, e.g., if a method for preparing the states is known. This is uninteresting, as one can then simply sign the classical description of the preparation method.}. A natural ``fix'' would be to make sure that both the \Sign and \Ver algorithms output an unmodified copy of the message state. This would lead to the following.

\begin{defn}[Quantum Signature, Wrong Definition \#2]\label{def:QSSwrong2}
	\begin{sloppypar}
		A \emph{quantum signature scheme (or $\QS$)} with signing-verifying key space $\{\calS_\secpar\}_\secpar \times \{\V_\secpar\}_\secpar$ (where $\{\calS_\secpar\}_\secpar$ and $\{\V_\secpar\}_\secpar$ are family of spaces of bitstrings of size polynomial in \secpar), message space $\P$ and signature space $\C$ (both being complex Hilbert spaces of fixed finite dimension) is a triple of QPT algorithms:
	\end{sloppypar}
	\begin{enumerate}
		\item (key generation) $\KeyGen:$ on input \secparam, outputs signing-verifying key pair $(\sk,\vk) \in \calS_\secpar \times \V_\secpar$, where we assume w.l.o.g. that \sk includes \vk, and \vk includes \secpar.
		\item (Sign) $\Sign$: on input a signing key $\sk \in \calS_\secpar$ and a quantum state $\mu \in \states(\P)$, outputs two quantum states: a quantum message $\mu' \in \states(\P)$  and a quantum signature $\sigma \in \states(\C)$; we write this as $(\sigma,\mu') \from \Sign_\sk(\mu)$.
		\item (Verify) $\Ver$: on input a verification key $\vk \in \V_\secpar$, a quantum state $\mu \in \states(\P)$, and a quantum signature $\sigma \in \states(\C)$, outputs a quantum message $\mu' \in \states(\P)$  and a verification bit $b \in \{\acc,\rej\}$; we write this as $(\mu',b) \from \Ver_\vk(\mu,\sigma)$.
	\end{enumerate}
	Moreover, the following correctness properties must hold:
	$$
	\forall (\sk,\vk) \from \KeyGen(\secparam), \forall \mu \in \states(\P) \implies 
	$$
	\begin{enumerate}
		\item $\Pr[\Ver_\vk(\Sign_\sk(\mu)) \to (., \rej) ] \leq \negl(\secpar)$
		\item $\Sign_\sk(\mu) \to (\mu', .) \implies \| \mu - \mu'\|_\diamond \leq \negl(\secpar)$
		\item $\Ver_\vk(\mu,\sigma) \to (\mu', .) \implies \| \mu - \mu'\|_\diamond \leq \negl(\secpar)$
	\end{enumerate}
\end{defn}

It is easy to notice that this definition also has problems. First of all, the conditions on the signing map are too strong, as they imply (again by no-cloning) that the signature cannot be noticeably correlated with the message state itself. This would make the goal of achieving any reasonable security notion hopeless. The third correctness condition is also unnecessarily strong, as it requires that the original state is recovered even when verification fails. 

This finally leads to the following definition.

\begin{defn}\label{def:QSSnatural} A {\em \textbf{quantum signature scheme}} (or $\QS$) is a triple of QPTs:

\begin{enumerate}
\item (key generation) $\KeyGen(1^n):$ output $(\sk,\vk) \in \SK \times \VK$.

\item (sign) $\Sign: \SK \times \states(\H_M) \rightarrow \states(\H_C)$

\item (verify) $\Ver: \VK \times \states(\H_C) \rightarrow \states(\H_M \oplus \ket{\bot})$ \,.

\end{enumerate}
\end{defn}

Due to no-cloning, we relaxed the classical requirement that a signature is an additional string accompanying the message. The key space $\SK \times \VK$ is a family of sets of bitstrings of size $\poly(\secpar)$, and %that 
$M$ and $C$ are quantum registers of $\poly(\secpar)$ qubits. In a typical setting, the verification key $\vk$ will be public, while the signing key $\sk$ stays private. Since our goal is to show impossibility, we define correctness separately 
from the basic definition 
and %permit
allow for schemes with very weak guarantees. 

Recall that a projector $P$ defines a two-outcome measurement channel $N$ via
$
N(X)=\tr((\one-P)X)\proj{0}+\tr(PX)\proj{1}.
$

\begin{defn}\label{def:QSS-correct}
A \QS $(\KeyGen, \Sign, \Ver)$ is \textbf{correct} if it satisfies
\begin{equation}\label{eq:qss-correct}
\| \Ver_\vk \circ \Sign_\sk - \id_M \oplus 0_\bot \|_\diamond \leq \negl(n)
\end{equation}
for all $(\sk, \vk) \from \KeyGen(1^n)$. More generally, given a finite set $\mathcal N$ of two-outcome measurements on $\H_M \oplus \ket{\bot}$, we say that a $\QS$ is \textbf{$\mathcal N$-correct} if it satisfies
	$
	\| N \circ\Ver_{\vk} \circ \Sign_{\sk} - N\oplus 0_\bot  \|_\diamond \leq \negl(\secpar)
	$
for all $(\sk, \vk) \from \KeyGen(\secparam)$ and all $N \in \mathcal N$. 
\end{defn}

\begin{defn}\label{def:QSS-secure}
Let $\mathcal L$ be a finite set of two-outcome measurements on 
the space 
$\H_M \oplus \ket{\bot}$. A \QS is $\varepsilon$-one-time \textbf{$\mathcal L$-secure} if, for any QPT adversary $\A$, for all $L \in \mathcal L$ there exists a probability $p\in[0,1]$ such that
\begin{equation}\label{eq:DNS-weak}
\left\|L \circ\mathbb E\left[\Ver_{\vk} \circ \mathcal A(\vk)\circ \Sign_{\sk}\right] - p L +(1-p)\bot \right\|_\diamond \leq \varepsilon\,.
\end{equation}
\end{defn}

%Here, $\bot$ denotes the reject map $X\mapsto \bot\tr(X)$, where $\bot=\proj\bot$. 
Here, $\bot$ denotes the reject map $X \mapsto \proj\bot \otimes \tr(X)$. 
The constraint \eqref{eq:DNS-weak} is essentially a weakened version of the \DNS authentication security definition~\cite{DNS10}, adapted to the public-key case. It is weakened in three ways: (i.) we only ask for computational security, (ii.) the adversary does not hold any side information about the plaintext, and (iii.) security only holds for the selected set of measurements. Note that correctness and security are not required to hold for the same set of measurements.
%. %So, we could in principle ask for a \QS which is $\mathcal N$-correct and $\mathcal L$-secure for some $\mathcal N \neq \mathcal L$\footnote{Note however that for $\varepsilon=\negl(n)$, $\mathcal N\subset \mathcal L$, unless $\mathcal L=\emptyset$.}; 
%, although in practice we are presumably interested in $\mathcal N \cap \mathcal L$. 
In any case, as we will show next, 
even such 
security cannot be achieved except for trivial choices of $\mathcal N$ and $\mathcal L$.

%\begin{defn}\label{def:SQSS}
%	Let $\mathcal N$ be a finite set of two-outcome measurements on $\H_M \oplus \ket{\bot}$. 	A {\em \textbf{somewhat-quantum signature scheme}} for $\mathcal N$ is defined identically to a \QS, except the correctness condition \eqref{eq:qss-correct} is replaced by the requirement
%	$$
%	\| N \circ\Ver_{\vk} \circ \Sign_{\sk} - N \|_\diamond \leq \negl(\secpar)
%	$$
%for all $(\sk, \vk) \from \KeyGen(\secparam)$ and all $N \in \mathcal N$.
%\end{defn}
%We begin with a general definition of a signature scheme for quantum states.

%We remark that, if we require consistency with only a \emph{single} measurement (i.e., if $|J| = 1$), then the definition becomes securely achievable by first measuring and then applying a classical digital signature scheme to the outcome. Note also that a \QS is an \QS for any pair of measurements.
%The registers $S, V, M, C$ are the signing key, the verification key, the message, and the ciphertext. As usual, these registers are infinite families indexed by the security parameter $n$. To maintain as much generality as possible, we allow for quantum keys and do not require any correctness by default. Moreover, the keys may be entangled and consumable.

%%%%%%%%%%%%%%%%%%%%%%%%%
\subsection{Impossibility of Quantum Signatures}\label{sec:imp}\label{sec:superlemma}
%%%%%%%%%%%%%%%%%%%%%%%%%

%In \cite{BCG+02}, Barnum et al. prove that a quantum authentication scheme cannot be secure unless it also provides encryption security (indistinguishability of ciphertexts)\footnote{Quantum authentication schemes map a message to a single ciphertext instead of a message-tag-pair as in the case of classical message authentication codes.}. This result is proven by providing a reduction that constructs a malleability attack from a distinguishing attack, utilizing the linearity of quantum theory: Distinguishing two states allows for transforming different superpositions of the two into one another. It is further remarked that this result should imply the impossibility of quantum digital signatures for full correctness. 
%As we show below, a related argument %based on the linearity of quantum theory  
%can indeed be used to prove that quantum digital signatures are impossible, even for partial correctness and security. This is, however, not straightforward, as the reduction of \cite{BCG+02} assumes a \emph{key-independent} distinguishing attack, while the trivial distinguishing attack in the digital signature case uses the public-key verification algorithm and is thus key-dependent. 

First of all we recall (a quantum channel version of) Uhlmann's theorem \cite{Kretschmann2008}. In the following, we will denote the operator norm by $\|\cdot\|_\infty$.

\begin{theorem}[Theorem I in \cite{Kretschmann2008}]\label{thm:channel-uhlmann}
	Let $T^{(i)}_{A\to A}$, $i=1,2$ be quantum channels with the same input and output systems, and let $U^{{(i)}}_{A\to AE}$ be Stinespring dilation isometries of $T^{(i)}$. Then there exists a unitary $V_{E\to E}$ such that
	\begin{equation*}
	\|VU^{(1)}-U^{(2)}\|_\infty^2\le \|T^{(1)}-T^{(2)}\|_\diamond\le 2\|VU^{(1)}-U^{(2)}\|_\infty.
	\end{equation*}
\end{theorem}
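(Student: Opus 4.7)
The plan is to prove the two inequalities separately. The upper bound $\|T^{(1)}-T^{(2)}\|_\diamond\le 2\|VU^{(1)}-U^{(2)}\|_\infty$ is the routine direction. For any unitary $V$ on $E$, the isometry $VU^{(1)}$ is itself a Stinespring dilation of $T^{(1)}$, since the partial trace over $E$ absorbs $V$. Thus for any input state $\rho$ on $A\otimes R$ one may write the channel difference as
\[
(T^{(1)}-T^{(2)})\otimes\id_R(\rho)=\tr_E\bigl[(VU^{(1)}\otimes\id_R)\rho(VU^{(1)}\otimes\id_R)^\dagger-(U^{(2)}\otimes\id_R)\rho(U^{(2)}\otimes\id_R)^\dagger\bigr].
\]
Applying the algebraic identity $A\rho A^\dagger-B\rho B^\dagger=A\rho(A-B)^\dagger+(A-B)\rho B^\dagger$, followed by H\"older's inequality and the fact that isometries have operator norm one, immediately yields the factor-of-two bound after taking a supremum over $\rho$.

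For the lower bound $\|VU^{(1)}-U^{(2)}\|_\infty^2\le\|T^{(1)}-T^{(2)}\|_\diamond$, I would first establish a pointwise version via Uhlmann. Fix a unit vector $\nket\psi$ on $A\otimes R$. Using $(U^{(i)})^\dagger U^{(i)}=\id_A$ one computes
\[
\bigl\|\bigl((\id_A\otimes V)U^{(2)}\otimes\id_R-U^{(1)}\otimes\id_R\bigr)\nket\psi\bigr\|^2=2-2\,\mathrm{Re}\,\nbra\psi (U^{(1)})^\dagger(\id_A\otimes V)U^{(2)}\otimes\id_R\nket\psi.
\]
Minimizing the right-hand side over unitary $V$ on $E$ is precisely the Uhlmann optimization for the two purifications $(U^{(i)}\otimes\id_R)\nket\psi$ of the output states $(T^{(i)}\otimes\id_R)(\nket\psi\nbra\psi)$, producing the value $2(1-F_\psi)$, where $F_\psi$ is the fidelity of those two output states. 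The Fuchs--van de Graaf inequality $2(1-F)\le\|\rho-\sigma\|_1$ and the definition of the diamond norm then bound this by $\|T^{(1)}-T^{(2)}\|_\diamond$.

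To upgrade this per-input bound to a single $V$ that works uniformly, I would apply Sion's minimax theorem to the function $(\nket\psi,V)\mapsto\|((\id_A\otimes V)U^{(2)}-U^{(1)})\otimes\id_R\,\nket\psi\|^2$, swapping $\min_V\max_{\nket\psi}$ with $\max_{\nket\psi}\min_V$. The main obstacle is that neither the unitary group nor the set of pure states is convex. I would handle this by relaxing pure states to density operators (the objective is linear in the state, so the outer maximum is still attained at a pure state) and unitaries to the closed unit ball of operators on $E$ (the objective is affine in $V^\dagger$, so the minimum on this compact convex set is attained at an extreme point, which is a unitary). Both relaxations are tight, the hypotheses of Sion's theorem are met, and after the swap the right-hand side equals $\max_{\nket\psi}2(1-F_\psi)\le\|T^{(1)}-T^{(2)}\|_\diamond$, furnishing a unitary $V$ with the required property. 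A tempting alternative---applying Uhlmann once to the maximally entangled input and using that purification to define $V$ directly---unfortunately picks up a factor of $d_A$ when passing from the Hilbert--Schmidt to the operator norm, which is why the minimax route is essential for a dimension-independent bound.
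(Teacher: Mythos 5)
First, a point of reference: the paper does not prove this statement at all --- it is imported verbatim as Theorem~I of \cite{Kretschmann2008} and used as a black box in the proof of \expref{Lemma}{lem:PKE-char} and in \expref{Section}{app:imposs2} --- so there is no in-paper proof to compare against. Your proposal is essentially a reconstruction of the Kretschmann--Schlingemann--Werner argument, and most of it is sound: the upper bound via $A\rho A^\dagger-B\rho B^\dagger=A\rho(A-B)^\dagger+(A-B)\rho B^\dagger$ and H\"older is exactly right, the pointwise lower bound via Uhlmann plus Fuchs--van de Graaf is correct, and so is your closing remark that a single application of Uhlmann to the maximally entangled input only controls the Hilbert--Schmidt norm and loses a dimension factor.

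The gap is in the uniformization step. Your extreme-point argument is valid only for the \emph{inner} minimization: for fixed $\rho$ the surrogate $h(\rho,V)=2-2\,\mathrm{Re}\,\tr\bigl[\rho\,(U^{(1)})^\dagger(\one\otimes V)U^{(2)}\bigr]$ is real-affine in $V$, so its minimum over the unit ball is attained at an extreme point (a unitary) and the relaxation is tight there. But after Sion's swap you must minimize $V\mapsto\max_\rho h(\rho,V)$ over the ball, and this is a pointwise maximum of affine functions, i.e.\ a \emph{convex} function of $V$; a convex function minimized over a convex body has no reason to attain its minimum at an extreme point (it typically sits in the interior, as for $\lambda_{\min}$-type objectives). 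The same issue arises if you minimax the true squared norm, which is convex quadratic rather than affine in $V$ once $V$ leaves the unitary group: your parenthetical ``the objective is affine in $V^\dagger$'' silently invokes $\|a-b\|^2=2-2\,\mathrm{Re}\langle a|b\rangle$, which requires both vectors to be normalized. Either way the minimax delivers only a \emph{contraction} $V_0$ with $\|(\one\otimes V_0)U^{(2)}-U^{(1)}\|_\infty^2\le\|T^{(1)}-T^{(2)}\|_\diamond$. The standard repair --- and the reason \cite{Kretschmann2008} phrase the result via a Bures distance over all common dilations, with the unitary form requiring a sufficiently large common environment --- is to pass to a unitary dilation $\tilde V_0$ of $V_0$ on an enlarged environment $E\oplus E'$: since $\tilde V_0$ restricted to $E$ is an isometry whose $E$-component is $V_0$, one has $\|((\one\otimes\tilde V_0)U^{(2)}-U^{(1)})\nket\psi\|^2=h(\proj{\psi},V_0)$ exactly, so the affine surrogate at the contraction equals the genuine squared norm at the dilated unitary. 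Crude fixes on the same environment (e.g.\ writing $V_0$ as an average of two unitaries) lose a factor of $2$. As written, with $V$ required to be a unitary on the given $E$, your argument does not close.
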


To prove our strong impossibility result, we will need the following technical lemma characterizing quantum encryption.

\begin{lem}[generalization of Lemma B.9 in~\cite{AM17}]\label{lem:PKE-char}
	Let $\Pi=(\KeyGen, \Enc,$\\$ \Dec)$ be a $\PKE$ with exact (approximate) correctness. Then $\Enc$ and $\Dec$ have the following form, for every key pair $k = (\dk, \ek)$:
	\begin{align*}
	&\left\|\Enc_\ek- V_k\left((\cdot)\otimes (\sigma_k)_T\right)V_k^\dagger\right\|_\diamond\le\varepsilon\nonumber\\
	&\left\|\Dec_\dk( V_kP^{\sigma_k}_T V_k^\dagger(\cdot)V_kP^{\sigma_k}_T V_k^\dagger) - \tr_{T}\left[P^{\sigma_k}_T \left(V_k^\dagger (\cdot)V_k\right) P^{\sigma_k}_T \right] \right\|_\diamond\le\varepsilon.
	\end{align*}
	Here, $\sigma_k$ is a state on register $T$, $P^{\sigma_k}_T$ and $\bar P^{\sigma_k}_T$  is an orthogonal projector such that $\|P^{\sigma_k}_T\sigma_kP^{\sigma_k}_T-\sigma_k\|$ is  negligible, $V_k$ is a unitary operator and $\varepsilon$ is negligible.
\end{lem}
%%%%%%%%%%%%%%
\begin{proof} To simplify notation, we fix a key pair $k$ and omit all key subscripts. We indicate input and output systems for channels, and support systems for operators, using subscripts as necessary. 
Let $(U )_{M\to CE}$, and $(W )_{C\to MF}$, be Stinespring dilation isometries of $\Enc$, and $\Dec$, respectively, i.e.,
$
\Enc (X)=\tr_{E}U  XU ^\dagger
$
and
$
\Dec (Y)=\tr_FW  YW ^\dagger.
$
Now, $\varepsilon$-approximate correctness together with \expref{Theorem}{thm:channel-uhlmann} implies that there exists a pure state $\ket{\phi }_{ EF}$ such that
	\begin{equation*}
		\left\|\left(W \right)_{C\to MF} \left(U \right)_{M\to CE}-\one_M\otimes \ket{\phi }_{ EF}\right\|_\infty\le \sqrt\varepsilon.
	\end{equation*}
Let $\hat W_{CR\to MF}$ be a unitary such that $\hat W_{CR\to MF}\ket 0_R=W_{C\to MF}$. Such a unitary exists without loss of generality (If $C$ does not divide $MF$, we can just pick a bigger $F$). By unitary invariance of operator norm, 
$%	\begin{equation*}
		\bigl\| U _{M\to CE}\otimes \ket 0_R-\hat W^\dagger_{ MF\to CR} \ket{\phi }_{ EF}\bigr\|_\infty\le \sqrt\varepsilon
$. Therefore we get
	\begin{equation}\label{eq:closetononiso}
	\bigl\| U _{M\to CE}-\bra 0_R\hat W^\dagger_{ MF\to CR} \ket{\phi }_{ EF}\bigr\|_\infty\le \sqrt\varepsilon.
	\end{equation}
	by the sub-multiplicativity of the operator norm. Let 
	\begin{equation*}
	\bra 0_R\hat W^\dagger_{ MF\to CR} \ket{\phi }_{ EF}=U^{(1)}_{M\to CE}D_MU^{(2)}_M
	\end{equation*}
	 be the singular value decomposition of the second matrix. \expref{Equation}{eq:closetononiso} implies
$
	 \left\| \one_M-D_M\right\|_\infty\le \sqrt\varepsilon,
$
	 and hence
	 \begin{align}\label{eq:makeiso}
	 &\bigl\| \bra 0_R\hat W^\dagger_{ MF\to CR} \ket{\phi }_{ EF}-\bra 0_R\bigl(\hat W^\dagger \bigl(U^{(2)}\bigr)^\dagger D^{-1}U^{(2)}\bigr)_{MF\to CR}\ket{\phi }_{ EF}\bigr\|_\infty\nonumber\\
	 \le &\bigl\| U^{(1)}_{M\to CE}D_MU^{(2)}_M-U^{(1)}_{M\to CE}U^{(2)}_M\bigr\|_\infty\le\sqrt\varepsilon
	 \end{align}
	and 
$%	\begin{equation*}
		\tilde V_{M\to CE}=\bra 0_R\bigl(\hat W^\dagger \bigl(U^{(2)}\bigr)^\dagger D^{-1}U^{(2)}\bigr)_{MF\to CR}\ket{\phi }_{ EF}
$%	\end{equation*}
	 is an isometry. Combining Equations \eqref{eq:closetononiso} and \eqref{eq:makeiso}, we arrive at
	\begin{equation*}
	\bigl\| U _{M\to CE}-\tilde V _{M\to CE}\bigr\|_\infty\le 2\sqrt\varepsilon.
	\end{equation*}
	Defining 
	$
	\tilde \Enc (X)=\tr_E\tilde V  X\tilde V ^\dagger,
	$
	we conclude that
	\begin{align*}
		\bigl\|\Enc -\tilde \Enc \bigr\|_\diamond
		\le& \max_{\rho_{MS}}\bigl\|U \rho U ^\dagger-\tilde V \rho \tilde V ^\dagger\bigr\|_\diamond\le 4\sqrt\varepsilon,
	\end{align*}
	where the first inequality is due to the definition of the diamond norm and the fact that the trace norm is non-increasing under partial trace, and the second inequality is a double application of H\"older's inequality. $\tilde \Enc $ has the form we  want, although this fact is still quite hidden. To show it, we define
		\begin{equation*}
	A_{MF\to C}=\bra 0_R\bigl(\hat W^\dagger \bigl(U^{(2)}\bigr)^\dagger D^{-1}U^{(2)}\bigr)_{MF\to CR},
	\end{equation*}
	so that we can write
$
		\tilde \Enc (X)=\left(A \right)_{MF\to C}\left(X_M\otimes \left(\phi \right)_{ F}\right)\left(A ^\dagger\right)_{ C\to MF}.
$
	$A$ is not, in general, an isometry. $\tilde\Enc $ is, however, trace preserving, implying
$
		A ^\dagger A =\one_M\otimes \kappa_F
$
	such that $\tr(\kappa_F\phi_{ F})=1$. Setting $T=F$, $\sigma_T=\sqrt\kappa_F \phi_F \sqrt \kappa_F$ and letting $V $ be a completion of $A \kappa^{-1/2}$ to a unitary shows now that $\tilde\Enc $ has the desired form. For notational convenience, define $\delta=\varepsilon+4\sqrt{\varepsilon}$. 
	
	To show the form of the decryption map, observe that, again by \expref{Theorem}{thm:channel-uhlmann}, there exists a possibly different Stinespring dilation isometry $W'_{C\to CT}$ for $\Dec$ and a %quantum 
state $\ket{\phi'_{k}}_{ET}$ such that 
	\begin{equation*}
		\left\|W'_{C\to CT}V_{MT\to C}\ket\phi_{TE}-\one_M\otimes\ket{\phi'}_{ET}\right\|_\infty\le\sqrt{\delta}
	\end{equation*}
	In particular, there is a unitary $U_T$ such that $\|U_T\ket\phi_{TE}-\ket{\phi'}_{TE}\|_2\le\sqrt{\delta}$, so
	\begin{equation}\label{eq:opnormineq}
	\left\|W'_{C\to CT}V'_{MT\to C}\ket{\phi'}_{TE}-\one_M\otimes\ket{\phi'}_{TE}\right\|_\infty\le 2\sqrt{\delta},
	\end{equation}
	with $V'=VU^\dagger$. Let 
$
		\ket{\phi'}_{TE}=\sum_i\sqrt q_i \ket{\gamma_i}_T\otimes \ket{\eta_i}_E
$
	be the Schmidt decomposition of $\ket{\phi'}$. \expref{Equation}{eq:opnormineq} implies that
	\begin{equation}\label{eq:opnormineq2}
		q_i\left\|W'_{C\to CT}V'_{MT\to C}\ket{\gamma_i}_{T}-\one_M\otimes\ket{\gamma_i}_{T}\right\|_\infty\le 2\sqrt{\delta}.
	\end{equation}
	Let 
$
		P_T=\sum_{i: q_i\ge(4\delta)^{1/6}}\proj{\gamma_i}.
$
	Then we get
	\begin{align*}
		&\left\|W'_{C\to CT}V'_{MT\to C}P_T(V'^\dagger)_{C\to MT}-P_T(V'^\dagger)_{C\to MT}\right\|_\infty\nonumber\\
		\le& \sum_{\substack{i: q_i\ge(4\delta)^{\frac 1 6}}}\left\|W'V'\proj{\gamma_i}_{T}V'^\dagger-\one_M\otimes\proj{\gamma_i}_{T}V'^\dagger\right\|_\infty
		\le(4\delta)^\frac 1 6, \nonumber
	\end{align*}
	where we have used \expref{Equation}{eq:opnormineq2} and the fact that $\mathrm{rk}P\le\frac{1}{(4\delta)^\frac 1 6}$. Using that $W'$ is a Stinespring dilation isometry for $\Dec$ shows the second claimed inequality.
\end{proof}

Note that the behavior of $\Dec_\dk$ can be arbitrary outside of the range of $V_kP^{\sigma_k}_T V_k^\dagger$. The standard behavior is to output the reject symbol $\bot$ when presented with an invalid ciphertext. This lemma gives an information-theoretic characterization of the channels $\Enc_\ek, \Dec_\dk$; this is why it's ok that $V$ and $\sigma$ are indexed by the keypair. This does not guarantee that $\Enc_\ek$ can be efficiently implemented in this form: knowledge of $\ek$ alone is not enough to efficiently implement $\sigma_k$ and $V_k$ (otherwise it would also be enough to decrypt!). In actual schemes, $\ek$ is enough to efficiently implement $\sigma_k$ and $V_k$ \emph{on relevant inputs for encryption}\footnote{e.g. because $\sigma_k$ is diagonal and represents a classical random variable, and $V_k$ maps computational basis states to computational basis states in a way that can be computed efficiently, but not in a coherent (or garbageless, reversible) way.} . By viewing $\sigma_\dk$ in its eigenbasis, we see that the characterization can be enhanced so it only uses classical randomness.

\begin{cor}\label{cor:randomized-isometry}
Let $\Pi=(\KeyGen, \Enc, \Dec)$ be a $\PKQE$. Then for every key pair $k := (\ek, \dk)$, there exists a probability distribution $p_k:\{0,1\}^t\to[0,1]$ and a family of quantum states $\ket{\psi^{(k,r)}}_T$ such that $\Enc_k$ is equivalent to the following algorithm:
(i.) sample $r \in \bit^t$ according to $p_k$; (ii.) apply the map $X_M \mapsto V_k \left(X_M \otimes \proj{\psi^{(k,r)}}_T \right) V_k^{\dagger}$.
Here $V_k$ and $T$ are as in \expref{Lemma}{lem:PKE-char}, and $t$ is the number of qubits in $T$.
\end{cor}

Importantly, even if $\Enc_k$ is a PPT algorithm, the functionally-equivalent algorithm provided by \expref{Corollary}{cor:randomized-isometry} may not be. We thus define the following.

\begin{condition}\label{con:efficient}
{\em Let $\Pi$ be a \PKQE, and let $p_k$, $\ket{\psi^{(k, r)}}$ and $V_k$ be as in \expref{Corollary}{cor:randomized-isometry}. We say {\em $\Pi$ satisfies \expref{Condition}{con:efficient}} if there exist QPTs for (i.) sampling from $p_k$, (ii.) preparing $\ket{\psi^{(k, r)}}$, and (iii.) implementing $V_k$ on inputs of the form $\rho \otimes \proj{\psi^{(k, r)}}$, and this holds for all but a negligible fraction of $k$ and $r$.} 
\end{condition}
%%%%%%%%%%%%%

We stress that we are not aware of any known or trivially defined construction of \PKQE where \expref{Condition}{con:efficient} does not hold.

We will now show that a \QS cannot be both correct and secure for a pair of measurements unless those measurements commute, in the sense that their sequential application yields the same outcome distribution no matter the order. Recall that a measurement $M$ also defines an instrument, i.e., a measurement map that keeps the post-measurement state, by
\begin{equation}
	\tilde M(X)=(\one-P)X(\one-P)\otimes \proj{0}+PXP\otimes \proj 1
\end{equation}In the following theorem, the composition of measurements is understood to mean that the instrument of the first measurement is applied and the the second measurement acts on the post-measurement state of the first one, i.e.
\begin{equation}
	( M_1\circ  M_0)_{A\to R_0 R_1}(X_A):=( M_1)_{A\to R_1}\left( (\tilde M_0)_{A\to AR_0}(X_A)\right).
\end{equation}
The index of the outcome register $R$ is tied to the index of the instrument, not to the order the measurements are performed, i.e.
\begin{equation}
( M_0\circ  M_1)_{A\to R_0R_1}(X_A)=(M_0)_{A\to R_0}\left( (\tilde M_1)_{A\to AR_1}(X_A)\right).
\end{equation}

%We are now ready to state our main impossibility result and sketch how it is proven. A full proof is given in \expref{Appendix}{app:impossible}.

\begin{theorem}\label{thm:imp-1}
	Let $\Pi = (\KeyGen, \Sign, \Ver)$ be a \QS and $\{(M_0)_{A\to R_0},(M_1)_{A\to R_1}\}$ be two-outcome projective measurements with efficiently implementable instruments $\tilde M_i$. If $\Pi$ is $\{ M_0, M_1 \}$-correct and $\varepsilon$-one-time $\{ M_0, M_1 \}$-secure, then
	\begin{equation}\label{eq:commute}
		\left\| M_1\circ M_0(\rho)- M_0\circ M_1(\rho)\right\|_1\le \varepsilon+\negl(n)
	\end{equation}
	for all efficiently preparable quantum states $\rho_{MR}$.
\end{theorem}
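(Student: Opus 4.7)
The plan is to construct, for each $i\in\{0,1\}$, a QPT adversary $\mathcal A_i$ that, given only the public key $\vk$, effectively dephases the plaintext of a signed state by $M_i$. I would take the Stinespring dilation $U_{\vk}\colon\H_C\to(\H_M\oplus\ket\bot)\otimes\H_E$ of $\Ver_{\vk}$ (efficiently implementable given $\vk$) and set $\mathcal A_i := U_{\vk}^\dagger\circ(\tilde M_i^{\mathrm{mg}})_M\circ U_{\vk}$, where $\tilde M_i^{\mathrm{mg}}(X)=P_iXP_i+\bar P_iX\bar P_i$ is the instrument of $M_i$ with its classical outcome discarded. Since $\tilde M_i^{\mathrm{mg}}$ acts only on $\H_M$ and commutes with $\tr_E$, a direct computation gives the identity $\Ver_{\vk}\circ\mathcal A_i\circ\Sign_{\sk}=\tilde M_i^{\mathrm{mg}}\circ\Ver_{\vk}\circ\Sign_{\sk}$.

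Applying $\varepsilon$-one-time $\{M_0,M_1\}$-security to $\mathcal A_i$ with challenge $L=M_{1-i}$ bounds the TV-distance between the real-world $M_{1-i}$-outcome distribution and $p\,M_{1-i}(\rho)+(1-p)\bot$ by $\varepsilon$, for some optimal $p\in[0,1]$. Because $M_{1-i}$-correctness ensures that $\Ver_{\vk}\Sign_{\sk}(\rho)$ has negligible $\ket\bot$-mass, the real-world distribution carries no $\bot$-outcome; minimizing the TV over $p$ then forces $p\approx 1$, and the security inequality collapses to a direct comparison of the two non-reject outcome probabilities. After expanding $P_{1-i}=\sum_{a,b\in\{0,1\}}P_i^aP_{1-i}P_i^b$ (with $P_i^1=P_i$, $P_i^0=\bar P_i$) and using $M_{1-i}$-correctness to replace $\tr(P_{1-i}\rho)$ by $\tr(P_{1-i}\Ver_{\vk}\Sign_{\sk}(\rho))$, one arrives at
\begin{equation*}
	\bigl|\tr\bigl((P_iP_{1-i}\bar P_i+\bar P_iP_{1-i}P_i)\,\Ver_{\vk}\Sign_{\sk}(\rho)\bigr)\bigr|\le\varepsilon+\negl(n)
\end{equation*}
for every efficiently preparable $\rho$ --- a bound on the ``off-diagonal'' component of $P_{1-i}$ in the $P_i$-eigenbasis, evaluated on $\Ver_{\vk}\Sign_{\sk}(\rho)$.

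Finally, I would invoke Lemma~\ref{lem:PKE-char} to replace $\Ver_{\vk}\Sign_{\sk}(\rho)$ by $\rho$ up to $\negl$ error, yielding the same bound for $\rho$ itself. Since $P_iP_{1-i}\bar P_i$ and $\bar P_iP_{1-i}P_i$ have orthogonal ranges and supports, their sum has the same operator norm as the commutator $[P_i,P_{1-i}]=P_iP_{1-i}\bar P_i-\bar P_iP_{1-i}P_i$, so the previous inequality amounts to approximate commutation of $P_0$ and $P_1$ on efficiently preparable states; a short algebraic calculation then converts this into $\|M_1\circ M_0(\rho)-M_0\circ M_1(\rho)\|_1\le\varepsilon+\negl(n)$. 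The principal technical obstacle is this last step: Lemma~\ref{lem:PKE-char} as stated assumes full correctness of the scheme, whereas here we have only $\{M_0,M_1\}$-correctness. I expect closing this gap will require either adapting Lemma~\ref{lem:PKE-char} via a measurement-restricted form of the Kretschmann--Schlingemann--Werner approximate-Uhlmann theorem, or a direct argument establishing $\tr(Q\,\Ver_{\vk}\Sign_{\sk}(\rho))\approx\tr(Q\rho)$ for the specific Hermitian operator $Q=P_iP_{1-i}\bar P_i+\bar P_iP_{1-i}P_i$ using only the $\{M_0,M_1\}$-correctness hypothesis.
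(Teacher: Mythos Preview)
Your central claim---that $\Ver_{\vk}\circ\mathcal A_i\circ\Sign_{\sk}=\tilde M_i^{\mathrm{mg}}\circ\Ver_{\vk}\circ\Sign_{\sk}$ follows from a ``direct computation'' because $\tilde M_i^{\mathrm{mg}}$ commutes with $\tr_E$---is where the argument breaks. Write $\Pi=U_{\vk}U_{\vk}^\dagger$ for the projector onto the image of the Stinespring isometry. Your adversary $\mathcal A_i=U_{\vk}^\dagger\circ(\tilde M_i^{\mathrm{mg}})_M\circ U_{\vk}$ is not even trace-preserving unless $(P_i\otimes\one_E)$ commutes with $\Pi$, because the compression $Y\mapsto U_{\vk}^\dagger Y U_{\vk}$ kills the component of the dephased state lying outside $\mathrm{im}\,\Pi$. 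If you repair this by completing $U_{\vk}$ to a unitary with an ancilla, then when you subsequently apply $\Ver_{\vk}$ you re-initialize the ancilla to $\ket 0$, and the computation you want no longer goes through: you get $\tr_E[\Pi\,\sigma'\,\Pi]$ rather than $\tr_E[\sigma']$, where $\sigma'=(\tilde M_i^{\mathrm{mg}})_M(U_{\vk}\Sign_{\sk}(\rho)U_{\vk}^\dagger)$. There is no reason for these to agree in general, and $\{M_0,M_1\}$-correctness does not give it to you for free.

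This is exactly the place where the paper spends its effort. It uses the \emph{reflection} $U_0=\one-2P_0$ (a unitary, so the attack is automatically trace-preserving) and then invokes $M_0$-correctness to prove that the attack approximately intertwines with signing: from $\bigl|\|\tilde P_0 V\ket\phi\|^2-\|P_0\ket\phi\|^2\bigr|\le\negl$ for all $\ket\phi$ one deduces $\|\tilde U_0 V\ket\psi-VU_0\ket\psi\|\le\negl$. In other words, correctness for $M_0$ is precisely what substitutes for the commutation you are assuming. Once this is in hand, $M_1$-correctness and $\varepsilon$-security for $M_1$ combine directly with the algebraic inequality relating $\bigl|\|P_1U_0\ket\psi\|^2-\|P_1\ket\psi\|^2\bigr|$ to $\bigl|\|P_0P_1\ket\psi\|^2-\|P_1P_0\ket\psi\|^2\bigr|$ to yield the commutation bound on $\ket\psi$ itself---so the detour through Lemma~\ref{lem:PKE-char} (your acknowledged second gap) is never needed.
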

In terms of Defs. \ref{def:QSS-correct} and \ref{def:QSS-secure}, \expref{Equation}{eq:commute} must hold whenever $\{ M_0, M_1 \}\subset \mathcal N\cap\mathcal L$.

\begin{proof} The plan is as follows. The scheme is correct for $M_0$, the two-outcome measurement defined by a projector $P_0$. Using this, one can show that running the verification circuit (without discarding qubits, and delaying measurements), applying the reflection unitary $U=\one-2P_0$ and running the inverse of the verification circuit, produces a valid signed state. More precisely, if the described attack is applied to a signed state obtained from signing $\ket\psi$, the resulting state is a signed state that can be also obtained from signing $U\ket\psi$. But if $M_0$ and $M_1$ don't commute, then the unitary $U$ changes the outcome of $M_1$. Therefore the measurements have to commute, up to an error equal to the soundness error $\varepsilon$.
	
	We now explain the details. Assume that there exists efficiently preparable $\rho_{MR}$ such that 
	\begin{equation}\label{eq:assumption1}
	\left\| M_1\circ M_0(\rho)- M_0\circ M_1(\rho)\right\|_1>2\delta.
	\end{equation}
	Let $\ket{\psi}_{MRE}$ be an efficiently preparable purification of of $\rho_{MR}$. Let $P_i$ be the projector for the outcome $0$  of $M_i$, for $i=0,1$. The assumption \eqref{eq:assumption1} implies w.l.o.g. that 
	\begin{equation}\label{eq:noncommute-diff}
	\bigl|\left\|P_0P_1\ket\psi\right\|_2^2-\left\|P_1P_0\ket\psi\right\|_2^2\bigr|>\frac\delta 2.
	\end{equation}
	Define the reflection unitary $U_i=\one-2P_i$. We would like to show that an application of $U_0$ to $\ket\psi$ changes the outcome of $M_1$. We calculate
	\begin{align}\label{eq:assumption2}
	\bigl|\left\|P_1U_0\ket\psi\right\|_2^2-\left\|P_1\ket\psi\right\|_2^2\bigr|&=\bigl|\left\|P_1(\one-2P_0)\ket\psi\right\|_2^2-\left\|P_1\ket\psi\right\|_2^2\bigr|\\
	&=2\bigl|2\bra\psi P_0P_1P_0\ket\psi-\bra\psi P_1P_0\ket\psi-\bra\psi P_0P_1\ket\psi\bigr|.
	\end{align}
	Now we use Equation \eqref{eq:noncommute-diff} and rewrite the left hand side as
	\begin{align*}
	&\bigl|\left\|P_0P_1\ket\psi\right\|_2^2-\left\|P_1P_0\ket\psi\right\|_2^2\bigr|=\bigl|\bra\psi P_1P_0P_1\ket\psi-\bra\psi P_0P_1P_0\ket\psi\bigr|=\bigg|\bra\psi P_1P_0P_1\ket\psi\\
	&\!-\!\frac 1 2\bra\psi P_0P_1\ket\psi\!-\!\frac 1 2\bra\psi P_1P_0\ket\psi\!+\!\frac 1 2\bra\psi P_0P_1\ket\psi\!+\!\frac 1 2\bra\psi P_1P_0\ket\psi\!-\!\bra\psi P_0P_1P_0\ket\psi\bigg|\nonumber\\
	&\le \left|\bra\psi P_1P_0P_1\ket\psi-\frac 1 2\bra\psi P_0P_1\ket\psi-\frac 1 2\bra\psi P_1P_0\ket\psi\right|\nonumber\\
	&+\left|\frac 1 2\bra\psi P_0P_1\ket\psi+\frac 1 2\bra\psi P_1P_0\ket\psi-\bra\psi P_0P_1P_0\ket\psi\right|,
	\end{align*}
	where we have added a zero in the second equality and used the triangle inequality.
	In particular we obtain the inequality
	\begin{align}
	&\bigl|\left\|P_0P_1\nket\psi\right\|_2^2\!-\!\left\|P_1P_0\nket\psi\right\|_2^2\bigr|\!\le\! 2\max\!\bigg(\!\left|\nbra\psi P_1P_0P_1\nket\psi\!-\!\frac 1 2\nbra\psi P_0P_1\nket\psi\!-\!\frac 1 2\nbra\psi P_1P_0\nket\psi\right|,\nonumber\\
	&\left|\bra\psi P_0P_1P_0\ket\psi-\frac 1 2\bra\psi P_0P_1\ket\psi-\frac 1 2\bra\psi P_1P_0\ket\psi\right|\bigg)\label{eq:bound3}
	\end{align}
	Putting Equations \eqref{eq:noncommute-diff}, \eqref{eq:assumption2} as well as its counterpart with 0 and 1 interchanged, and \eqref{eq:bound3} together we arrive at the conclusion that either
	\begin{equation}\label{eq:U-matters}
	\bigl|\left\|P_1U_0\ket\psi\right\|_2^2-\left\|P_1\ket\psi\right\|_2^2\bigr|>\delta,
	\end{equation}
	or this equation holds with $0$ and $1$ interchanged. Assume w.l.o.g. that Equation \eqref{eq:U-matters} holds. Consider the adversary $\mathcal A$ that applies a Stinespring dilation of the verification algorithm, performs the unitary $U_0$ and then undoes the verification. This adversary produces a valid signed state by linearity.  Indeed, let $(\vk, \sk)$ be a given key pair, and $V_{\sk}$ and $W_{\vk}$ be Stinespring dilations of the signing and verification algorithms, respectively. Let furthermore $\tilde P_0=W_{\vk}^\dagger P_0 W_{\vk}$. By the correctness of the \QS for $M_0$, we have that $\bigl|\|\tilde P_0 V\ket\phi\|_2^2-\| P_0 \ket\phi\|_2^2\bigr|\le\negl(n)$, 
	for all $\ket\phi$, so in particular
	%		\begin{equation*}
	%	\bigl|\|\tilde P_0 V\ket\psi\|_2-\|V P_0 \ket\psi\|_2\bigr|\le\negl(n),
	%	\end{equation*}
	%	and
	$\bigl|\|\tilde P_0 VP_0\ket\psi\|_2-\|V P_0 \ket\psi\|_2\bigr|\le\negl(n),$
	which implies $\|\tilde P_0 VP_0\ket\psi-V P_0 \ket\psi\|_2\le\negl(n)$  and similarly $\|\tilde P_0 V(\one-P_0)\ket\psi\|_2\le\negl(n)$.
	Together, these inequalities yield $\|\tilde P_0 V\ket\psi-V P_0 \ket\psi\|_2\le\negl(n)$
	and therefore 
	\begin{equation}\label{eq:U-cov}
	\|\tilde U_0 V\ket\psi-V U_0\ket\psi\|_2\le\negl(n),
	\end{equation}
	where $\tilde U_0=W_{\vk}^\dagger U_0 W_{\vk}$. This shows that the attack $U_0$ produces a valid ciphertext.
	Because of correctness of the \QS for $M_1$ we have that
	\begin{align}\label{eq:corrm1}
	\bigl|\|P_1WV U_0\ket\psi\|_2^2-\|P_1 U_0\ket\psi\|_2^2\bigr|\le\negl(n),
	\end{align}
	and the $\varepsilon$-one-time security implies that
	$\bigl|\|P_1W\tilde U_0V \ket\psi\|_2^2-\|P_1 \ket\psi\|_2^2\bigr|\le\varepsilon.$
	Combined with Equation \eqref{eq:U-cov} we therefore get
	\begin{align*}
	\bigl|\|P_1WV U_0\ket\psi\|_2^2-\|P_1 \ket\psi\|_2^2\bigr|\le\varepsilon+\negl(n)
	\end{align*}
	and hence, using Equation \eqref{eq:corrm1},
$
|\|P_1 U_0\ket\psi\|_2^2-\|P_1 \ket\psi\|_2^2|\le\varepsilon+\negl(n).
$	Via Equation \eqref{eq:U-matters}, this is a contradiction to assumption \eqref{eq:assumption1} as long as $\delta\ge\varepsilon+\negl(n)$, which finishes the proof.
\end{proof}

Note that for logarithmic-size message space, the conclusion of \expref{Theorem}{thm:imp-1} is equivalent to an upper bound on the commutator of the projectors forming the measurements $M_i$, as then all states are efficiently preparable. 

%If we require consistency with only a \emph{single} measurement%(i.e., if $|J| = 1$)
%, then 
%security %the definition 
%becomes %securely 
%achievable by first measuring and then applying a classical digital signature scheme to the outcome. 
\expref{Theorem}{thm:imp-1} shows that simultaneously signing different properties of a quantum state is only possible if these properties are essentially classical, implying that the best possible security can be achieved by just measuring the $M_i$ sequentially (while at most incurring an error equal to the soundness parameter $\varepsilon$ plus a negligible function) and then signing the outcome classically.  This protocol would, however, destroy any quantum properties of the message state, even if no attack occurs! The following complementary impossibility result indicates that part of this loss of quantum information is unavoidable: If we require \emph{full correctness} for a \QS in the sense that the composition of $\Sign_{\sk}$ and $\Ver_{\vk}$ yields the identity channel on the plaintext space, no security can be achieved. The proof uses a similar idea as in \expref{Theorem}{thm:imp-1} (Stinespring, malleability unitary, un-Stinespring), details can be found in \expref{Section}{app:imposs2}.

\begin{theorem}\label{thm:imp-2}
	Let $\Pi = (\KeyGen, \Sign, \Ver)$ be a correct \QS, and let $M$ be a non-trivial two-outcome measurement. Then $\Pi$ is at most $(1-\negl(\secpar))$-one-time $\{M\}$-secure.
\end{theorem}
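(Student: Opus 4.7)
The plan is to construct, for any correct $\QS$ $\Pi$ and any non-trivial two-outcome measurement $M$ with projector $P$, an explicit QPT adversary $\A$ whose effective map $M \circ \Ver_\vk \circ \A \circ \Sign_\sk$ cannot be $\varepsilon$-approximated by any simulator of the form $pM + (1-p)\bot$ unless $\varepsilon \geq 1 - \negl(\secpar)$. The attack will be a coherent malleability attack: exploiting the fact that $\A$ holds the public verification key $\vk$ together with full correctness, $\A$ runs $\Ver_\vk$ reversibly, applies a chosen unitary on the recovered plaintext register, and then uncomputes to produce a valid signature of a tampered plaintext.

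Concretely, I would invoke \expref{Lemma}{lem:PKE-char} (whose statement and proof apply verbatim to signatures, since they only use correctness) to obtain, for each keypair $(\sk,\vk)$, a unitary $V_{\sk,\vk}$ and ancilla state $\sigma_{\sk,\vk}$ with $\Sign_\sk(\cdot) \approx V_{\sk,\vk}\bigl((\cdot)\otimes\sigma_{\sk,\vk}\bigr)V_{\sk,\vk}^\dagger$ and $\Ver_\vk$ acting as the corresponding projective inverse. Using Stinespring, $\A$ can efficiently implement a unitary dilation $W_\vk$ of $\Ver_\vk$, since $\Ver$ is a QPT and $\vk$ is public. The attack is then to append a fresh ancilla to the signed ciphertext, apply $W_\vk$, apply a plaintext-space unitary $U$ of our choice to the $\H_M$ register, and apply $W_\vk^\dagger$. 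By linearity together with the standard form, this channel sends $\Sign_\sk(\rho)$ to a state negligibly close to $\Sign_\sk(U\rho U^\dagger)$; structurally, this argument mirrors the derivation of \expref{Equation}{eq:U-cov} in the proof of \expref{Theorem}{thm:imp-1} and constitutes the technical core of the reduction.

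To finish, since $M$ is non-trivial, one can efficiently prepare a pure state $\ket{\psi_0}$ and pick an efficient unitary $U$ such that the outcomes of $M$ on $\ket{\psi_0}$ and on $U\ket{\psi_0}$ are deterministically opposite; in the generic case where $P$ is non-trivial on $\H_M$, choose $\ket{\psi_0}$ with $P\ket{\psi_0}=\ket{\psi_0}$ and $U$ such that $PU\ket{\psi_0}=0$, giving $M(\ket{\psi_0})=\proj 1$ and $M(U\ket{\psi_0})=\proj 0$. The effective map then outputs outcome $0$ on $\ket{\psi_0}$ with probability $1-\negl(\secpar)$, while the simulator $pM + (1-p)\bot$ on $\ket{\psi_0}$ yields $\proj 1$ with probability $p$ and $\proj\bot$ with probability $1-p$, a classical distribution supported on $\{1,\bot\}$. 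The trace distance to $\proj 0$ is therefore $1-\negl(\secpar)$ uniformly in $p$, which yields the claimed bound; the remaining edge cases in which $P$ restricts trivially to $\H_M$ are handled by an even simpler attack that forces verification to reject.

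The main obstacle is the careful propagation of the approximation errors from \expref{Lemma}{lem:PKE-char} and the Stinespring dilation through the composed channel $W_\vk^\dagger(U\otimes\one)W_\vk \circ \Sign_\sk$; one must argue that coherently verifying $\Sign_\sk(\ket\psi)$ and coherently verifying $\Sign_\sk(U\ket\psi)$ produce essentially the same ancilla state, so that applying $W_\vk^\dagger$ after tampering yields a state close to a legitimate $\Sign_\sk(U\ket\psi)$. Full correctness makes this strictly easier than the analogous step in \expref{Theorem}{thm:imp-1}, where only a single measurement outcome was preserved instead of the entire plaintext.
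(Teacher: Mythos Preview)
Your proposal is correct and follows essentially the same route as the paper's proof: apply \expref{Lemma}{lem:PKE-char} (which only uses correctness), run an efficient Stinespring dilation $W_{\vk}$ of $\Ver_{\vk}$, tamper with a plaintext-space unitary that swaps a $P$-eigenstate for a $(\one-P)$-eigenstate, and undo $W_{\vk}$ to obtain a valid signature whose $M$-outcome is flipped. The paper writes the tampering unitary as $U_1U_0^\dagger$ for preparation unitaries of the two eigenstates, but this is your $U$ in different notation; the error-propagation you flag as the main obstacle is exactly what the paper works through via the projector $P=V_kP_T^{\sigma_k}V_k^\dagger$.
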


\subsubsection{Some variants}
%%%

%Here we briefly discuss why some a priori plausible ways to circumvent the above impossibility results cannot work. 
Here we briefly discuss how the above impossibility results also extend to other, a priori plausible, definitions of quantum digital signatures. 

First, one might ask for a weak form of signatures where the receiver can {\em either} recover the message {\em or} check the authenticity (but not both.) Unfortunately, any useful formulation of this kind is likely to imply strong signatures via quantum error correction: first encode the message, then ``weak-sign'' the qubits; during verification, check a random subset of the qubits (at most half the code distance) and then use decoding to recover the message.

Second, one might try quantum keys. In \cite{Gottesman2001}, quantum keys are used to achieve information-theoretically secure signatures for classical messages; the public key is a quantum state, a copy of which can be requested by anybody. Doing this for quantum data must fail by the same arguments as in the proofs above: by correctness, an attack using the reflection corresponding to one of the signed measurements will implement the desired attack and leave the copy of the public key undisturbed. This implies impossibility even in a weaker model where the adversary has to hand back the public key. 
 
Finally, we might give the signer several copies of the message state, only one of which must be recovered by the verifier. However, no-cloning states that the ``valid/invalid'' bit output by verification cannot depend on the copy of the message recovered by the verifier. An adversary can thus run the same attack as in the proof of \expref{Theorem}{thm:imp-2}. After that, there is of course a mismatch between the plaintext message encoded in the signed message and any other parts that were created from the consumed message copies. As we observed, no-cloning tells us that this mismatch cannot be detected by the verification algorithm.

%%%%%%%%%%%%%%%%%%%%%%%%%
%%%%%%%%%%%%%%%%%%%%%%%%%
%%%%%%%%%%%%%%%%%%%%%%%%%
%%%%%%%%%%%%%%%%%%%%%%%%%
\section{Quantum Signcryption: Basic Ideas}
%%%%%%%%%%%%%%%%%%%%%%%%%
%%%%%%%%%%%%%%%%%%%%%%%%%
%%%%%%%%%%%%%%%%%%%%%%%%%
%%%%%%%%%%%%%%%%%%%%%%%%%

As we saw in %the previous section,
\expref{Section}{sec:impossible}, 
signing states for public verification is impossible. We thus must resort to using secret verification. This might seem to imply that we have to use secret-key authentication schemes, which come with a dramatic cost: all pairs of parties wishing to communicate must share a key secret to that pair. As it turns out, there is a far better option: \emph{signcryption}~\cite{Zheng97,ADR02}.
%It is natural to define a notion of \emph{quantum signcryption} (analogous to classical signcryption~\cite{Zheng97,ADR02}), which is a scheme for {\em simultaneously} signing and encrypting a quantum state (for a particular recipient.)

In this section, in order to simplify the exposition, we start by addressing the basic scenario of {\em one-time security for the two-users setting}. We will expand to the more general case 
of {\em many-time security for the multi-user setting} 
in~\expref{Section}{sec:full}.

\subsection{Definition, Some Basic Facts, and a Construction}
%%%%%%%%%%%%%%%%%%%%%%%%%

%\subsubsection{Defining quantum signcryption.}
%%%

\subsubsection{The basic notion}
%%%

Consider $m$ parties who require pairwise authenticated and encrypted communication. This can be done via symmetric-key authenticated encryption, at a cost of a quadratic number of key-exchange executions. Moreover, each party would need to store $m$ keys, and adding any new parties would require another round of $m$ key-exchange executions.

With signcryption, each party needs only to run key generation privately, publish their public key, and keep their private key. Sending a message is now a matter of ``signcrypting'' with the sender's private key (``sign'') and the receiver's public key (``encrypt.'') Receiving a message requires ``verified-decrypting" with the sender's public key (``verify'') and the receiver's private key (``decrypt.'') Adding new parties is trivial. 

%We define signcryption for quantum states as follows.

\begin{defn}\label{def:QSC}
A {\em \textbf{quantum signcryption scheme}} (or $\QSC$) is a triple of QPT algorithms:
\begin{enumerate}
\item (key generation) $\KeyGen(1^n):$ output $(\sdk, \vek) \from \SDK \times \VEK$
\item (signcrypt) $\SigEnc: \SDK \times \VEK \times \states(\H_M) \rightarrow \states(\H_C)$
\item (verified decrypt) $\VerDec: \VEK \times \SDK \times \states(\H_C) \rightarrow \states(\H_M \oplus\ket{\bot})$
\end{enumerate}  
such that, for all $(\sdk_S, \vek_S), (\sdk_R, \vek_R) \from \KeyGen(1^n)$,
\begin{equation}\label{eq:qsc-correct}
\| \VerDec_{\vek_S, \sdk_R} \circ \SigEnc_{\sdk_S, \vek_R} - \id_M \oplus 0_\bot \|_\diamond \leq \negl(n)\,.
\end{equation}
\end{defn}

The key spaces $\SDK$ and $\VEK$ are classical and of size $\poly(n)$; $C$ and $M$ are quantum registers of at most $\poly(n)$ qubits. We have adopted the convention that, in the subscripts of $\SigEnc$ and $\VerDec$, the sender key always goes first. We will write $\SigEnc_{S, R}$ and $\VerDec_{S, R}$ to simplify this notation. As usual, we assume w.l.o.g. that $\sdk$ also includes $\vek$, and $\vek$ also includes \secpar.

A \QSC is used to transmit messages as follows. First, a sender $S$ selects a message (placing it in register $M$) and a receiver $R$. Then $S$ applies $\SigEnc_{S, R}$ to $M$, using their secret key $\sdk_S$ and the receiver's public key $\vek_R$. The resulting register $C$ is sent to $R$, who applies $\VerDec_{S, R}$ to $C$, using their secret key $\sdk_R$ and the sender's public key $\vek_S$. Correctness (i.e., \eqref{eq:qsc-correct}) requires that the overall channel implemented by this honest process should be $\id_M$ along with an ``accept'' output (indicated by ``$\oplus \ 0_\bot$''.)

\subsubsection{Signatures and encryption from signcryption}
%%%

Any quantum signcryption scheme trivially yields a \QS
%quantum signature 
scheme (\expref{Definition}{def:QSSnatural}), as follows.
\begin{prop}\label{prop:QSC-QS}
Let $\Pi = (\KeyGen, \SigEnc, \VerDec)$ be a signcryption scheme (\QSC.) Then the following is a signature scheme (\QS.)
\begin{itemize}
\item $\KeyGen'(1^n)$: $(\sdk_S, \vek_S) \from \KeyGen(1^n)$ and $(\sdk_R, \vek_R) \from \KeyGen(1^n)$; output signing key $\sk := (\sdk_S, \vek_R)$ and verification key $\vk := (\vek_S, \sdk_R)$.
\item $\Sign'_\sk := \SigEnc_\sk$ and $\Ver'_\vk := \VerDec_\vk$.
%\item $\Sign'_{(\sdk_S, \vek_R)} := \SigEnc_{\sdk_S, \vek_R}$;
%\item $\Ver'_{(\sdk_R, \vek_S)} := \VerDec_{\vek_S, \sdk_R}$.
\end{itemize}
\end{prop}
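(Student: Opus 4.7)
The statement is essentially a syntactic observation, so the proof plan is to verify the two required conditions: the correct types for a \QS (Definition \ref{def:QSSnatural}) and the correctness equation \eqref{eq:qss-correct}. Both will follow immediately from the corresponding properties of the underlying \QSC.

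First I would check that $\KeyGen'$, $\Sign'$, and $\Ver'$ have the correct signatures. The algorithm $\KeyGen'$ is QPT since it makes two independent calls to $\KeyGen$ and then rearranges the resulting classical bitstrings; its output lies in a product of classical key spaces of size $\poly(\secpar)$, which is a valid choice for $\SK \times \VK$. The maps $\Sign'_\sk = \SigEnc_{\sdk_S, \vek_R}$ and $\Ver'_\vk = \VerDec_{\vek_S, \sdk_R}$ are obtained from $\SigEnc$ and $\VerDec$ by fixing their key arguments to the relevant components of $\sk$ and $\vk$, so they are QPT channels with precisely the input/output registers required by Definition \ref{def:QSSnatural}.

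For correctness, I would observe that by construction
\begin{equation*}
\Ver'_\vk \circ \Sign'_\sk \;=\; \VerDec_{\vek_S, \sdk_R} \circ \SigEnc_{\sdk_S, \vek_R},
\end{equation*}
and the right-hand side is negligibly close (in diamond norm) to $\one_M \oplus 0_\bot$ by the correctness of $\Pi$, Eq.~\eqref{eq:qsc-correct}. This is exactly the correctness requirement \eqref{eq:qss-correct}.

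There is no real obstacle here: the proposition amounts to a relabelling of the signcryption primitive. The one subtlety worth flagging is that the induced \QS is \emph{not} publicly verifiable in the classical sense, since $\vk = (\vek_S, \sdk_R)$ contains the receiver's secret key $\sdk_R$. This is consistent with the impossibility results of \expref{Section}{sec:impossible}, which apply to the \QS syntax of Definition \ref{def:QSSnatural} regardless of whether $\vk$ is actually held publicly; what signcryption provides is precisely a way to keep part of the verification information private to each intended receiver, thereby sidestepping the obstruction.
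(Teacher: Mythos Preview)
Your proposal is correct and matches the paper's treatment: the paper states this proposition without proof, regarding it as a trivial syntactic observation, and your verification of the type signatures together with the direct reduction of \eqref{eq:qss-correct} to \eqref{eq:qsc-correct} is exactly what is needed. Your closing remark about $\vk$ containing $\sdk_R$ is a useful clarification but not part of the proof proper.
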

%As we will later show, this proposition and the impossibility results of \expref{Section}{sec:impossible} together imply that certain security conditions which are achievable by classical signcryption cannot be fulfilled when signcrypting quantum states.

Any \QSC also trivially yields a \PKQE %public-key quantum encryption 
scheme (\expref{Definition}{def:PKQE}).% -- simply by swapping the role of the public and secret keys above.
\begin{prop}\label{prop:QSC-PKE}
Let $\Pi = (\KeyGen, \SigEnc, \VerDec)$ be a signcryption scheme (\QSC.) Then the following is a public-key encryption scheme (\PKQE.)
\begin{itemize}
\item $\KeyGen'(1^n)$: $(\sdk_S, \vek_S) \from \KeyGen(1^n)$ and $(\sdk_R, \vek_R) \from \KeyGen(1^n)$; output secret key $\dk := (\vek_S, \sdk_R)$ and public key $\ek := (\sdk_S, \vek_R)$.
\item $\Enc'_\ek := \SigEnc_\ek$ and $\Dec'_\dk := \VerDec_\dk$.
%\item $\Enc'_{(\sdk_S, \vek_R)} := \SigEnc_{\sdk_S, \vek_R}$;
%\item $\Dec'_{(\sdk_R, \vek_S)} := \VerDec_{\vek_S, \sdk_R}$.
\end{itemize}
\end{prop}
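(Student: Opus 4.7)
The statement is purely a syntactic and correctness claim (security is explicitly not asserted), so the plan is to verify the three bullets of Definition~\ref{def:PKQE} directly by unpacking the new algorithms and invoking the signcryption correctness~\eqref{eq:qsc-correct}. There is no genuine obstacle here; the proof is a bookkeeping exercise, and the analogue of Proposition~\ref{prop:QSC-QS} suggests the exact same template.

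First I would type-check. The algorithm $\KeyGen'$ runs the underlying $\KeyGen$ twice and concatenates the outputs, so $\ek = (\sdk_S, \vek_R)$ and $\dk = (\vek_S, \sdk_R)$ are classical bitstrings of polynomial length lying in $\SDK \times \VEK$; this is consistent with the Definition~\ref{def:PKQE} requirement that $\EK$ and $\DK$ be classical and of size $\poly(n)$. The maps $\Enc'_\ek := \SigEnc_{\sdk_S, \vek_R}$ and $\Dec'_\dk := \VerDec_{\vek_S, \sdk_R}$ inherit from $\SigEnc$ and $\VerDec$ the correct domains ($\states(\hi_M)$ and $\states(\hi_C)$ respectively), the correct ranges ($\states(\hi_C)$ and $\states(\hi_M \oplus \ketbra{\bot}{\bot})$), and QPT efficiency.

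Second, correctness is immediate. For any keypairs sampled by $\KeyGen'$,
\begin{equation}
\Dec'_\dk \circ \Enc'_\ek \;=\; \VerDec_{\vek_S, \sdk_R} \circ \SigEnc_{\sdk_S, \vek_R},
\end{equation}
which by~\eqref{eq:qsc-correct} of Definition~\ref{def:QSC} is within negligible diamond distance of $\one_M \oplus 0_\bot$. This is exactly the correctness condition in Definition~\ref{def:PKQE}, completing the verification.

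The only conceptual point worth flagging is that Definition~\ref{def:PKQE} imposes no syntactic publishability constraint on $\ek$, so the fact that $\ek$ happens to contain the sender's private component $\sdk_S$ is not an issue for this proposition; it merely means that any confidentiality guarantee one would want for the induced \PKQE against a third party must be derived separately from the security analysis of $\Pi$, rather than claimed as a trivial consequence of the swap.
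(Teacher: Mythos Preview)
Your proposal is correct and matches the paper's approach: the paper states this proposition without proof, treating it as an immediate consequence of the definitions (``Any \QSC also trivially yields a public-key quantum encryption scheme \ldots\ simply by swapping the role of the public and secret keys above''). Your write-up simply spells out the type-checking and the one-line invocation of~\eqref{eq:qsc-correct} that the paper leaves implicit.
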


\subsubsection{A basic construction}\label{sec:hybscdef}
%%%

We now define a generic ``hybrid'' construction of quantum signcryption 
in a similar way to~\expref{Section}{sec:hybqencdef}. 
We will use a classical signcryption scheme to signcrypt a random key, and then use that key to encrypt the quantum state with a (usually one-time secure) symmetric-key quantum encryption scheme.

\begin{construction}\label{cons:QS-generic}
Let $\Pi^\SC$ be a classical signcryption scheme, and $\Pi^\SKQE$ a symmetric-key quantum encryption scheme. 
Then 
we define %a
the canonical hybrid 
quantum signcryption scheme $\PiHyb[\Pi^\SC, \Pi^\SKQE] = (\KeyGen, \SigEnc, \VerDec)$ as follows:
\begin{enumerate}
	\item $\KeyGen(1^n):$ output $(\sdk, \vek) \from \KeyGen^\SC(1^n)$;
	\item $\SigEnc_{S, R}:$ on input $\rho_M$, generate $k \from \KeyGen^\SKQE(1^n)$ and output\\ $(\SigEnc^\SC_{S, R}(k), \Enc^\SKQE_k(\rho))$;
	\item $\VerDec_{S, R}:$ on input $(c, \sigma_C)$, set $k =  \VerDec^\SC_{S, R}(c)$; if $k = \bot$, output $\bot$ and otherwise output $\Dec^\SKQE_k(\sigma)$.
\end{enumerate}
\end{construction}

%We denote the analogous construction of public- (resp., symmetric-) key quantum encryption by $\PiKEMP$ (resp., \PiKEMS). 
A special case of this hybrid construction was proposed by Barnum et al.~\cite{BCG+02} (Section 5.1), but without any security definitions or proofs. We will later formally examine the security of this construction in various settings, and for various choices of $\Pi^\SC$ and $\Pi^\SKQE$.

%%%
%%%
%%%

%%%
%%%
%%%

%%%%%%%%%%%%%%%%%%%%%%%%%
%%%%%%%%%%%%%%%%%%%%%%%%%
\subsection{One-Time Signcryption Security}%One-Time, One-Way, Two-User Setting}
%%%%%%%%%%%%%%%%%%%%%%%%%
%%%%%%%%%%%%%%%%%%%%%%%%%

We now consider the simplest nontrivial setting of quantum signcryption: a single sender $S$ 
sends % needs to send 
a single signcrypted state to a single receiver $R$. Our goal here is to describe the most basic setting of interest, with definitions and security notions that require only minimal familiarity with previous work on quantum encryption and authentication. We will build up to the full setting in \expref{Section}{sec:full}.

\subsubsection{Outsider security}
%%%

In this setting, a third-party $\algo A$ attacks the channel between $S$ and $R$. First, $\algo A$ convinces $S$ to signcrypt any message; then $\algo A$ freely manipulates the resulting signcryption, finally convincing $R$ to unsigncrypt. 

%\begin{figure}
%	\centering
%	\includegraphics[width=.7\textwidth]{../figures/one-time}
%	\caption{A one-time outsider attack.}
%\end{figure}

\begin{experiment}\label{expr:one-time-outsider}
Let $\Pi = (\KeyGen, \SigEnc, \VerDec)$ be a \QSC. An \emph{\textbf{outsider attack}} (in the one-time, two-user setting) by a QPT $\algo A$ proceeds as follows. 
\begin{enumerate}
\item \emph{(Setup.)} Generate $(\sdk_S, \vek_S) \from \KeyGen(\secparam)$ and $(\sdk_R, \vek_R) \from \KeyGen(\secparam)$. Give $\vek_S$ and $\vek_R$ to $\algo A$.
\item \emph{(Signcrypt.)} $\algo A$ prepares a state $\rho_{MB}$ and $\SigEnc_{S, R}$ is applied to register $M$, yielding registers $C$ and $B$.
\item \emph{(Attack, unsigncrypt.)} $\algo A$ applies a channel $\Lambda_{CB \to CB}$ (possibly depending on $\vek_S, \vek_R$), and $\VerDec_{S, R}$ is applied to register $C$. 
\end{enumerate}
\end{experiment}
We refer to $\algo A$ as an \emph{\textbf{outside attacker}}, and define the (averaged) effective map
$$
\tilde \Lambda^{(\algo A, S, R)}_{MB \to MB} := \mathbb{E} \left[\VerDec_{S, R} \circ \Lambda \circ \SigEnc_{S, R}\right]\,
$$
where the expectation is over the keys and classical randomness. Our security definition will be based on
 (the computational version of)
\DNS security notion for 
one-time 
quantum authentication~\cite{DNS12,BW16}, adapted to our setting. Specifically, we will require that any (efficient) outside attack amounts to a trivial choice of effective map: implementing the ``identity map'' or the ``discard map'' on the plaintext space, with the latter resulting in a reject. %occurs only if the attacker does not discard the contents of the underlying plaintext channel.%\footnote{Technically, other things are possible too, so long as they do not modify the plaintext; we will later improve on this by using ciphertext authentication in place of \DNS.}; the latter occurs in all other cases.

\begin{defn}\label{def:one-time-outsider}
A \QSC $\Pi$ is (one-time, two-user) \textbf{outsider-secure} if, for every QPT outside attacker $\algo A$, there exists a QPT simulator $\Phi_{MB \to MB}$ of the form
$$
\Phi: \sigma_{MB} \longmapsto \Phi^\acc_{B\to B} (\sigma_{MB}) + \proj{\bot} \otimes \Phi^\crej_{B\to B} (\sigma_B)
$$
which is 
computationally
indistinguishable\footnote{Meaning that no QPT adversary can distinguish the two channels with more than negligible advantage over guessing at random.} 
from $\tilde \Lambda^{(\algo A, S, R)}$, and where $\Phi^\acc$ and $\Phi^\crej$ act on $B$ only.
\end{defn}

\paragraph{A strengthening}
%%%

Note that both $\SigEnc$ and $\VerDec$ receive, in addition to their input, a private parameter (i.e., $\sdk$) and a public parameter (i.e., $\vek$.) It is thus reasonable to define a stronger notion of security, where the adversary can set these parameters for the oracle calls occurring in the ``Signcrypt'' and ``Attack, unsigncrypt'' phases of \expref{Experiment}{expr:one-time-outsider}. Indeed, this models a very plausible real-world attack. Since our goal here is to develop the one-time setting as a minimal first stepping stone, we will examine this stronger security 
model 
later in \expref{Section}{sec:full}. 

\subsubsection{Insider security} 
%%%

An outside attacker knows all but two pieces of information: the private keys of $S$ and $R$. In ``insider security,'' we consider the case where one of these private keys is known to the attacker (if both are, security is impossible.) %It is equivalent to think of this setting as one of the two parties (i.e., $R$ or $S$) trying to attack the other.\cm{I find this a bit confusing, as , e.g., the sender tries to decrypt the ciphertext he has produced himself, so there is this time component to it that the sender is honest at first and then adversarial later. I think we need to clarify this, or take the viewpoint where the senders secret key gets compromised.}

\paragraph{Insider security of sender}
%%%

First, consider the case of an adversary in possession of $\sdk_R$. The private information is now only $\sdk_S$, and the public information (i.e., the information accessible by the adversary) is $\sdk_R, \vek_R, \vek_S$. This means the adversary can unsigncrypt, so we can't expect $S$ to maintain any secrecy. Can $S$ expect unforgeability? If yes, then this would yield (as in \expref{Proposition}{prop:QSC-QS}) an unforgeable quantum signature scheme with secret signing key $(\sdk_S, \vek_R)$ and public verification key $(\vek_S, \sdk_R)$.) However, the impossibility results of \expref{Section}{sec:impossible} state that this induced signature scheme cannot satisfy even the most minimal security requirements. It follows that quantum signcryption cannot fulfill any insider security guarantees in this case.

Recall that \emph{non-repudiation} is a property of classical digital signatures. It means that, due to unforgeability, the recipient $R$ of a document $m$ signed by a sender $S$ can present the document and signature to a third party (e.g., the judge) as evidence that $m$ was indeed signed by $S$. If quantum signcryption is used instead, our impossibility results tell us that $R$ will be able to forge signcryptions; so, non-repudiation is impossible to achieve quantumly.

%For classical signatures, \emph{non-repudiation} is a property stating that the recipient $R$ of a document $m$ signed by a sender $S$ can use the document and signature as prooif that $m$ was indeed signed by $S$. If quantum signcryption is used instead, our impossibility results tell us that $R$ will be able to forge signcryptions; so, non-repudiation is impossible to achieve quantumly.

\paragraph{Insider security of receiver}
%%%

It remains to consider insider security where the adversary has $\sdk_S$. The private information is now only $\sdk_R$. By \expref{Proposition}{prop:QSC-PKE}, $\Pi$ is now a 
\PKQE %public-key quantum encryption 
scheme, with secret decryption key $(\sdk_R, \vek_S)$ and public encryption key $(\sdk_S, \vek_R)$.

\begin{defn}\label{def:one-time-insider}
A \QSC $\Pi$ is (one-time, two-user) \textbf{insider-secure} if the 
\PKQE %public-key quantum encryption 
scheme induced by making the sender key public (i.e., via \expref{Proposition}{prop:QSC-PKE}) satisfies \QINDCPA.
\end{defn}
As with outsider security, we will not (yet) concern ourselves with adversaries who can select the key parameters of the $\SigEnc$ and $\VerDec$ calls. 
This will be considered as part of the many-time security scenario in~\expref{Section}{sec:full}.

We remark that the above definitions of one-time security translate directly to the classical setting, by using standard notions of classical unforgeability and secrecy (see, e.g., \cite{KL14}) and plugging them into \expref{Definition}{def:one-time-outsider} and \expref{Definition}{def:one-time-insider}.

\subsubsection{Achieving one-time outsider and insider security} 
%%%%%%%%%%%%%%%%%%%%%%%%%

We now show that the hybrid scheme (with appropriate components) achieves one-time security.

\begin{theorem}\label{thm:one-time}
Let $\Pi^\SC$ be a classical signcryption scheme with one-time outsider and insider security. Let $\Pi^\SKQE$ be a \cDNS-secure \SKQE scheme (\expref{Definition}{def:DNS}). Then $\PiHyb[\Pi^\SC, \Pi^\SKQE]$ (\expref{Construction}{cons:QS-generic}) is a %quantum signcryption scheme which is 
one-time outsider-secure and insider-secure \QSC.
\end{theorem}

\begin{proof}
We provide a brief sketch; for the full proof see \expref{Section}{app:one-time}. 

Insider security follows from the \QINDCPA security of the hybrid \PKQE construction (i.e. \expref{Theorem}{thm:PKE-hybrid-full}) and the fact that \cDNS implies \QIND \cite{BCG+02,AM17}. For outsider security, given an attack map $\Lambda$, we will construct a simulator as mandated by \expref{Definition}{def:one-time-outsider}, by using the \cDNS security of $\Pi^\SKQE$. First, if $\Lambda$ modifies the classical part  $c$ of the ciphertext, the decryption function will reject with overwhelming probability due to the unforgeability of $\Pi^\SC$. So we are free to assume $\Lambda$ does not change $c$. Thus we can replace $c$ by a fresh, unrelated $c'$ (by \INDCPA of $\Pi^\SC$) provided we swap $c$ and $c'$ back right before decryption. Since the quantum and classical parts of the ciphertext are now independent, we can view the adversary's channel on the quantum part as a \cDNS attack against $\Pi^\SKQE$. Since $\Pi^\SKQE$ is \cDNS secure, this attack has a simulator $\Phi$ which (effectively) ignores the quantum plaintext or discards it. This in turn yields a simulator $\Lambda'$ for $\Lambda$: apply $\Phi$ to the quantum part, and the identity to the classical part. The guarantees provided by $\cDNS$ on $\Phi$ ensure that $\Lambda'$ satisfies the conditions of \expref{Definition}{def:one-time-outsider}.
 \end{proof}

%Note that, if we release the secret key $\sdk_R$ of the receiver to the adversary, then \CQHyb becomes completely insecure. Indeed, if the underlying classical signcryption scheme is of the form ``encrypt then sign,'' the adversary can use $\sdk_R$ to get the key for the \SKQE, decrypt the quantum part of the ciphertext, and then replace the plaintext message with any state of their choice. 

To get an explicit instantiation, we can let $\Pi^\SC$ be a classical signcryption scheme constructed, e.g., from Lamport signatures and standard post-quantum \INDCPA public-key \LWE encryption via ``encrypt-then-sign,'' and let $\Pi^\SKQE$ be the scheme $\Enc_k : \rho \mapsto C_k \left( \rho \otimes \proj{0^n} \right) C_k^\dagger$ where $\{C_k\}_k$ is the Clifford group. 

%\newpage

%%%%%%%%%%%%%%%%%%%%%%%%%
%%%%%%%%%%%%%%%%%%%%%%%%%
\section{Quantum Signcryption: Full Security}\label{sec:full}
%%%%%%%%%%%%%%%%%%%%%%%%%
%%%%%%%%%%%%%%%%%%%%%%%%%

%\subsection{The plan}
%%%%%%%%%%%%%%%%%%%%%%%%%

We now describe how to ``upgrade'' from the (one-time, two-user) setting to the full  setting. This will involve three steps. First, in \expref{Section}{sec:many-time} we will describe how to upgrade to many-time security. The adversary will now have oracle access to $\SigEnc_{S, R}$ and $\VerDec_{S, R}$, but secrecy and authenticity must still be preserved. The usual classical security games for these notions do not make sense with quantum data. A recent approach of~\cite{AGM18} shows how to get around these issues for symmetric-key encryption; we will extend this approach to our setting. Second, in \expref{Section}{sec:multi-user} we will upgrade to multiple users. This means keeping track of ``IDs'' for each user, and adding some constraints, e.g., to prevent identity fraud. Here we will give a generic transformation (following~\cite{ADR02}) which turns any two-user secure \QSC into a multi-user secure \QSC simply by attaching IDs to the plaintext before signcrypting. Finally, in \expref{Section}{sec:ctxt} we will describe how to upgrade all of the above to ensure ciphertext authenticity. 

%---------------------------------------------------------------------------------%
\subsection{Upgrading to the Many-Time Setting}\label{sec:many-time}
%---------------------------------------------------------------------------------%

%We begin by upgrading from the one-time setting to the many-time setting, while keeping the number of users at two, i.e., one sender $S$ and one receiver $R$. 

\subsubsection{Outsider security}
%%%

Here, a third-party adversary $\adver$ attacks a sender-receiver pair $(S, R).$ In addition to knowing the public information $\vek_S, \vek_R$, $\adver$ will now also be able to observe (or even control) some of the transmissions from $S$ to $R$. As usual, we model this by being as generous as possible to the adversary, giving them oracle access to both $\SigEnc_{S, R}$ and $\VerDec_{S, R}$. As signcryption provides secrecy, authenticity and integrity, it can be seen as an analogue of authenticated encryption. We will use the ideas of~\cite{AGM18} to define security, avoiding the usual issues with quantum no-cloning and measurement. For now, for ease of exposition, we only ask for plaintext integrity.%, which simplifies the presentation significantly.
%We will use a plaintext-based public-key version of the real vs. ideal approach developed in \cite{AGM18} for the definition of quantum authenticated encryption.

Let $\Pi = (\KeyGen, \SigEnc, \VerDec)$ be a quantum signcryption scheme with message register $M$ and signcryption register $C$. In addition, select a security parameter $n$ and an oracle QPT adversary $\adver$. Let $\ket{\phi^+}$ denote some choice of maximally entangled state, and $\Pi^+$ the projector $\proj{\phi^+}$ onto that state.

\begin{experiment}\label{exp:outsider-real}
The \textbf{real outsider experiment} $\OutReal(\Pi, \algo A, n)$:
	\begin{algorithmic}[1]
		\State $(\sdk_S, \vek_S) \from \KeyGen(1^n)$ and $(\sdk_R, \vek_R) \from \KeyGen(1^n)$;
		\State \textbf{output} $\algo A^{\SigEnc_{S, R}, \VerDec_{S, R}}(1^n)$.
	\end{algorithmic}
\end{experiment}

%The ideal experiment is best described as an interaction between $\adver$ and the challenger $\chall$; note that $\chall$ is a fixed, deterministic algorithm determined solely by $\Pi$, and $n$.

\begin{experiment}\label{exp:outsider-ideal}
The \textbf{ideal outsider experiment} $\OutIdeal(\Pi, \algo A, n)$:
	\begin{algorithmic}[1]
		\State $(\sdk_S, \vek_S) \from \KeyGen(1^n)$ and $(\sdk_R, \vek_R) \from \KeyGen(1^n)$;
		\State define channel $E_{M \rightarrow C}$:
			\begin{enumerate}[(1)]
			\item prepare $\ket{\phi^+}_{M'M''}$, store $(M'', M)$ in a set $\mathcal M$; \label{set-trap}
			\item apply $\SigEnc_{S, R}$ to $M'$; \textbf{return} result.
			\end{enumerate}
		\State define channel $D_{C \rightarrow M}$:
			\begin{enumerate}[(1)]
			\item apply $\VerDec_{S, R}$ to $C$, place results in $M'$; 
			\For {\textbf{each} $(M'', M) \in \mathcal M$}: \label{cheat-detect}
				\State apply $\{\Pi^+, \one - \Pi^+\}$ to $M'M''$; \textbf{if} {outcome is $0$}: \textbf{return} $M$;
			\EndFor
			\State	\textbf{return} $\egoketbra{\bot}$;
			\end{enumerate}
%		\end{itemize}
		%\begin{itemize}
		%\item $E$ is defined as follows. On input a register $M$:
		%	\begin{enumerate}[(1)]
		%	\item $\chall$ prepares $\ket{\phi^+}_{M'M''}$, and generates fresh randomness $r$;
		%	\item $\chall$ stores $(r, M'', M)$ in a set $\mathcal M$;
		%	\item $\chall$ applies $\Enc_k$ to $M'$ using randomness $r$; \textbf{return} result to $\adver$.
		%	\end{enumerate}
		%\item $D$ is defined as follows. On input a register $C$:
		%	\begin{enumerate}[(1)]
		%	\item $\chall$ applies $V_k^\dagger$ to $C$, places results in $M'T$; 
		%	\For {\textbf{each} $(r, M'', M) \in \mathcal M$}: 
		%		\State $\chall$ applies $\{\Pi_{k, r}, \one - \Pi_{k, r}\}$  to $T$;
		%		\If{outcome is $0$}:
		%			\State $\chall$ applies $\{\Pi^+, \one - \Pi^+\}$ to $M'M''$;
		%			\State \textbf{if} {outcome is $0$}: \textbf{return} $M$;
		%	\EndIf
		%\EndFor
		%\State	\textbf{return} $\egoketbra{\bot}$;
		%\end{enumerate}
		%\end{itemize}
		\State \textbf{output} $\adver^{E,D}(1^n)$.
	\end{algorithmic}
\end{experiment}

%\noindent Security is defined as indistinguishability of the real and ideal worlds.

\begin{defn}
A quantum signcryption scheme $\Pi$ is (many-time, two-user) \textbf{outsider secure} if for all QPT adversaries $\adver$,
 	\begin{equation}
 	\left|\Pr\left[\OutReal(\Pi, \algo A, n) \to \real \right] -
 	 	\Pr\left[\OutIdeal(\Pi, \algo A, n) \to \real \right]\right| \leq \negl(n).\nonumber
 	\end{equation}
\end{defn}

%%%
%%%
\subsubsection{Insider security}
%%%

As before, the remaining conditions to consider amount to allowing the adversary access to the private key of either $S$ or $R$ (but not both) which we can simply view as one of the two parties ($S$ or $R$) attacking the other. Also as before, the impossibility results for quantum signatures (\expref{Section}{sec:impossible}) imply that one cannot expect any security in the case where $R$ is the adversary. It remains to consider the case where $S$ is the adversary, and ask if the secrecy of $R$ can be preserved. Here we ask for the strongest notion of secrecy, which is \QINDCCAA.% This notion was recently defined in \cite{AGM18} in the symmetric-key case; we show how to adapt it to the public-key case in \expref{Section}{sec:wCCA2} (where we also define the corresponding \cDNS-based ``weak" variant \QINDwCCAA).

\begin{defn}
A quantum signcryption scheme $\Pi$ is (many-time, two-user) \textbf{insider-secure}  if the public-key quantum encryption scheme induced by making the sender key public (i.e., via \expref{Proposition}{prop:QSC-PKE}) satisfies \QINDCCAA.
\end{defn}

As usual, we will first focus on the weaker, plaintext-only (\cDNS-style) \QINDwCCAA notion.
%As we will later show, there exist quantum signcryption schemes which do achieve this level of security.

\subsubsection{Achieving many-time, two-user security}
%%%

%With the above results in hand, we can show security of the quantum signcryption hybrid. 
It turns out that the hybrid 
signcryption 
scheme can achieve both outsider and insider (sender) security even in the many-time setting, given a sufficiently strong classical component. This is proven 
in~\expref{Section}{sec:many2hyb}.
%as  \expref{Theorem}{thm:KEMwCCA2} and \expref{Corollary}{cor:outsider} in \expref{Appendix}{app:security}.

\begin{theorem}\label{thm:many2hyb}
Let $\Pi^\SC$ be a many-time, two-user, outsider- and insider-secure signcryption. Let $\Pi^\SKQE$ be a \cDNS-secure symmetric-key quantum encryption. Then $\PiHyb[\Pi^\SC, \Pi^\SKQE]$ (\expref{Construction}{cons:QS-generic}) is a quantum signcryption scheme which is (many-time, two-user) outsider-secure and insider-secure.
\end{theorem}
\subsection{Upgrading to the Multi-User Setting}\label{sec:multi-user}
%------------------------------------------------------------------------------------------------
%------------------------------------------------------------------------------------------------

In the multi-user setting there are different users, each with unique ``IDs.'' We denote the ID of a party $P$ by $\ID_P \in \I$. We assume that there is an efficient public lookup for the map $\ID_P \mapsto \vek_P$ (this is usually achieved by a root of trust, PKI, or similar). In this setting we need to change the syntax of signcryption in order to account for the adversary's ability to {\em spoof sender's and/or receiver's identities}. In order to protect the scheme against these attacks we need to bind the users' identities to the corresponding keys, and impose extra correctness conditions based on the match between identities and keys. There are various ways to achieve this; we opt for the simplified notation below. In particular, we assume that every user $P$ has a single keypair (it is possible to remove this restriction, e.g., by adding a {\em key index} to the output of \KeyGen and managing the index together with the identities.)

\begin{defn}\label{def:QSCmulti}
A {\em (multi-user) quantum signcryption scheme (or $\muQSC$)} is a triple of QPT algorithms:
\begin{itemize}
\item (key generation) $\KeyGen:(\secparam, \ID_P \in \I) \mapsto (\sdk_P, \vek_P) \in \SDK \times \VEK$
\item (signcrypt) $\SigEnc: \SDK \times \VEK \times \states(\H_M) \rightarrow \states(\H_C)$
\item (unsigncrypt) $\VerDec: \SDK \times \VEK  \times \states(\H_C) \rightarrow  \I^2 \otimes \states(\H_M \oplus\ket{\bot})$
\end{itemize}  
%\vspace{-10pt}
such that:
\begin{enumerate}
\item $\| \VerDec_{\sdk_R, \vek_S} \circ \SigEnc_{\sdk_S, \vek_R} -  \{S,R\}\otimes\left( \id_M \oplus 0_\bot \right) \|_\diamond \leq \negl(n)$
%\item $\SigEnc_{\sdk_S, \vek_R}( . ) \mapsto (\ID_{S}, \ID_{R} , . )$% whenever $S = S'$
\item For $\VerDec_{\sdk_R, \vek_S}(X) = (S',R',Y)$, if $Y\neq \bot$, then $(R,S)=(R',S')$%\cm{@Tommaso: double check this please.}
\end{enumerate}
for all $(\sdk_P, \vek_P) \in \emph{\supp} \KeyGen(\secparam, \ID_P)$, and where we assume w.l.o.g. that $\sdk_P$ also includes $\vek_P$, and $\vek_P$ also includes $\secparam$ and $\ID_P$, for every $(\sdk_P, \vek_P) \from \KeyGen(\secparam, \ID_P), \forall \ P \in \I$. 
\end{defn}

%It is implicit that the key spaces $\SDK$ and $\VEK$ are classical and of size $\poly(n)$. The registers $C$ and $M$ are quantum registers of at most $\poly(\secpar)$ qubits. We will sometimes write $\SigEnc_{S, R}$ and $\VerDec_{R, S}$ to simplify notation.

The output of $\VerDec_{\sdk_R, \vek_S}$ includes user identities. Correctness demands that such identities are the ones corresponding to the keys used in \SigEnc. In theory nothing forbids a malicious adversary from modifying a ciphertext such that the resulting identities output by \VerDec are different. In this sense, an ``identity fraud attack'' for signcryption in the multi-user setting results in a forgery attack.

\subsubsection{Outsider security}
%%%%%%%%%%%%%%%%%%
The scenario here is similar to outsider security for the two-user setting: an external adversary \adver mounts an attack against a sender/receiver pair $(S,R)$. However, there are two fundamental differences: (1.) there are many users in \I, and hence many possible $(S,R)$ pairs; and (2.) beyond the ``usual" attack scenarios covered by the two-user case, \adver might be able to attack the signcryption scheme by performing {\em identity fraud} (i.e., spoofing the identity of the sender or the receiver of a signcrypted message). Dealing with the latter is easy: as in \expref{Definition}{def:QSCmulti}, in order to mount an identity fraud attack \adver has to modify the identities output by the $\VerDec$ algorithm. This means that such attacks are actually a special case of {\em message forgeries}, which are already covered by the real vs. ideal approach in the outsider security scenario of \expref{Section}{sec:many-time}.

The other issue is more subtle. There might be insecure schemes which produce weaker keys for certain user $\ID$s, or there might be schemes where the compromise of a certain number of users (and hence knowledge of their secret keys) also compromises the security of other users. In order to deal with all these scenarios, we will adopt a very conservative approach: (1.) the adversary is given the possibility of choosing $S$ and $R$ among \I; and (2.) the adversary receives the secret keys of {\em all the other users}.

We define the modified experiments \MOutReal and \MOutIdeal accordingly, %(see \expref{Section}{app:multi})
and security is given as usual in terms of indistinguishability of the two.

\begin{experiment}\label{exp:multi-outsider-real}
	The \textbf{multi-user real outsider experiment} $\MOutReal(\Pi, \adver, \secpar)$:
	\begin{algorithmic}[1]
		\State \textbf{for} every $\ID_P \in \I$ \textbf{do:} $(\sdk_P, \vek_P) \from \KeyGen(\secparam, \ID_P)$.
		\State \adver gets as input \secparam, \I, and the list of all public keys $\left\{\vek_P\right\}$ for all $\ID_P \in \I$;
		\State \adver outputs two identities $S,R \in \I$;
		\State \adver receives the list of all secret keys $\left\{\sdk_P\right\}$ for all $\ID_P \in \I \setminus \left\{ S,R\right\}$;
		\State \textbf{return} the output of $\adver^{\SigEnc_{S, R}, \VerDec_{S, R}}$.
	\end{algorithmic}
\end{experiment}

\begin{experiment}\label{exp:multi-outsider-ideal}
	The \textbf{multi-user ideal outsider experiment} $\MOutIdeal(\Pi, \adver, \secpar)$:
	\begin{algorithmic}[1]
		%\State $(\sdk_S, \vek_S) \from \KeyGen(1^n)$ and $(\sdk_R, \vek_R) \from \KeyGen(1^n)$;
		\State \textbf{for} every $\ID_P \in \I$ \textbf{do:} $(\sdk_P, \vek_P) \from \KeyGen(\secparam, \ID_P)$.
		\State \adver gets as input \secparam, \I, and the list of all public keys $\left\{\vek_P\right\}$ for all $\ID_P \in \I$;
		\State \adver outputs two identities $S,R \in \I$;
		\State \adver receives the list of all secret keys $\left\{\sdk_P\right\}$ for all $\ID_P \in \I \setminus \left\{ S,R\right\}$;
		\State define channel $E_{M \rightarrow C}$:
		\begin{enumerate}[(1)]
			\item prepare $\ket{\phi^+}_{M'M''}$, store $(M'', M)$ in a set $\mathcal M$;
			\item apply $\SigEnc_{S, R}$ to $M'$; \textbf{return} result.
		\end{enumerate}
		\State define channel $D_{C \rightarrow M}$:
		\begin{enumerate}[(1)]
			\item apply $\VerDec_{S, R}$ to $C$, place results in $M'$; 
			\For {\textbf{each} $(M'', M) \in \mathcal M$}: 
			\State apply $\{\Pi^+, \one - \Pi^+\}$ to $M'M''$;
			\State \textbf{if} {outcome is $0$}: \textbf{return} $M$;
			\EndFor
			\State	\textbf{return} $\egoketbra{\bot}$;
		\end{enumerate}
		\State \textbf{return} the output of $\adver^{E,D}$.
	\end{algorithmic}
\end{experiment}

\begin{defn}
A multi-user \QSC %quantum signcryption 
scheme $\Pi$ is \textbf{outsider secure} if for all QPT $\adver$,
 	\begin{equation}
 	\left|\Pr\left[\MOutReal(\Pi, \adver, \secpar) \to \real \right] -
 	 	\Pr\left[\MOutIdeal(\Pi, \adver, \secpar) \to \real \right]\right| \leq \negl(\secpar).\nonumber
 	\end{equation}
\end{defn}

%%%%%%%%%%%%%%%%%%
%%%%%%%%%%%%%%%%%%
\subsubsection{Insider security}
%%%%%%%%%%%%%%%%%%

In addition to picking a sender and receiver of his choice, the adversary is now allowed to access the private key of either of the two (but not both). Also as before, given the impossibility results for quantum signatures (\expref{Section}{sec:impossible}), it makes sense to consider the case where $S$ is the adversary, and ask if the secrecy of $R$ can be preserved. Again, we ask for the strongest notion of secrecy. However, given the multi-user setting, we must now consider a multi-user version of \QINDCCAA, where the adversary gets to pick the target $R$ to attack, and receives all other user's secret keys. We omit the details here, as the resulting definition follows a similar approach as in the outsider security case. 

\begin{defn}
A multi-user \QSC $\Pi$ is \textbf{insider-secure} if the public-key quantum encryption scheme induced by making the sender key public (i.e., via \expref{Proposition}{prop:QSC-PKE}) satisfies (the multi-user version of) \QINDCCAA .
\end{defn}

%%%%%%%%%%%%%%%%%%
%%%%%%%%%%%%%%%%%%
\subsubsection{Achieving many-time, multi-user security}
%%%%%%%%%%%%%%%%%%

The multi-user definitions can be fulfilled in a way analogous to the classical case. Here \cite{ADR02} describe a simple generic transformation from a two-user secure scheme to a multi-user secure scheme, which proceeds as follows. Whenever signcrypting any plaintext, we first attach to the plaintext the IDs of both the sender and the intended receiver. When applying verified decryption, we check whether the IDs attached to the plaintext are correct. If so, we output the plaintext, otherwise reject. Arguing multi-user security of this transformation then reduces in a straightforward way to two-user security, essentially as done in \cite{ADR02}. 
We omit a full proof of security for this construction, and leave it for a future work. However, the details should be easy to infer from the previous discussion and follow the same general strategy.

%---------------------------------------------------------------------%
\subsection{Upgrading to Ciphertext Authentication}\label{sec:ctxt}
%---------------------------------------------------------------------%
As discussed above, the security notions we have developed for signcryption provide for \emph{plaintext security.} For instance, in the outsider security case, this means that the adversary cannot perform any attack which modifies the underlying plaintext. The advantage of this approach is that the security games and definitions are rather simple to state and describe, and intuitive. However, this leaves open the possibility of adversaries that modify ciphertexts without being noticed. As it turns out, we can address this case as well, roughly following the route followed by~\cite{AGM18} in the private-key encryption setting. Here we only briefly describe the modifications, as they are essentially identical to those in~\cite{AGM18}.

First, we observe that the characterization lemma (\expref{Lemma}{lem:PKE-char}) shows that all quantum encryption schemes have a simple ``attach auxiliary state, then apply keyed isometry $V_k$'' form. This form is also efficient in all schemes we are aware of (we state this formally as \expref{Condition}{con:efficient}), but may \emph{in principle} be inefficient.

Next, with this characterization in hand, we can replace the ``trap setting'' step of each relevant security game (e.g., \expref{Step}{set-trap} in the \OutIdeal experiment) as follows. We still encrypt half of a maximally-entangled state $\phi^+$ and store the input and half of $\phi^+$. In addition, we now also store the classical randomness used to sample the aforementioned auxiliary state (prior to applying $V_k$.)

We then correspondingly adjust the second ``cheat detection'' step of each relevant security game (e.g., \expref{Step}{cheat-detect} in the \OutIdeal experiment), as follows. Given a ciphertext, we first undo the isometry $V_k^\dagger$. We then check if the auxiliary state agrees with the stored randomness. If it does, we additionally perform the entanglement check as before. If both checks say ``yes,'' we supply the stored state. Otherwise we output $\bot$. (In the insider-security case, it's slightly different: ``yes'' now means that the adversary is attempting to decrypt the challenge, so we simply terminate and output ``cheat.'')

This transformation yields enhanced cheat-detection games for both outsider-security (analogous to \QAE) and insider-security (analogous to \QINDCCAA.) Security is then still defined in terms of the advantage of adversaries at distinguishing the test game from the corresponding cheat-detection game for outsider security, and the winning advantage in the test game over the cheat-detection game for insider security, respectively. Provided that the relevant construction now uses \QCA \cite{AGM18} instead of \cDNS quantum authentication as a building block, the security proofs carry over essentially unchanged.

%
%%%%%%%%%%%%%%%%%%%%%%%%%%
%%%%%%%%%%%%%%%%%%%%%%%%%%
\section{Acknowledgments}\label{sec:ack}
%%%%%%%%%%%%%%%%%%%%%%%%%%
%%%%%%%%%%%%%%%%%%%%%%%%%%
%
%

The authors would like to thank Yfke Dulek, Christopher Portmann, Christian Schaffner, and Yi-Kai Liu for helpful feedback about this work. G.A. and C.M. thank Rolando La Placa for helpful discussions. Part of this work was done while T.G. was supported by IBM Research Zurich. T.G. acknowledges financial support from the H2020 projects FutureTPM (agreement 779391) and  FENTEC (agreement 780108). GA acknowledges support from the NSF under grant CCF-1763736, from the U.S. Army Research Office under Grant Number W911NF-20-1-0015, and from the U.S. Department of Energy under Award Number DE-SC0020312. CM was supported by a NWO VENI grant (Project No.VI.Veni.192.159).

%\newpage

%---------------------------------------------------------------------%
%---------------------------------------------------------------------%

%---------------------------------------------------------------------%
%---------------------------------------------------------------------%

%---------------------------------------------------------------------%
%---------------------------------------------------------------------%
\appendix
\section{Appendix}\label{app:security}
%%%%%%%%%%%%%%%%%%%%%%%%%%%%%%%%%%%%%%%%%%%%%%%%%%%%%%%%%%%%%%%%%%%%%%%%%%%%%%%%%%%%%%%%%%%%%
%%%%%%%%%%%%%%%%%%%%%%%%%%%%%%%%%%%%%%%%%%%%%%%%%%%%%%%%%%%%%%%%%%%%%%%%%%%%%%%%%%%%%%%%%%%%%
%%%%%%%%%%%%%%%%%%%%%%%%%%%%%%%%%%%%%%%%%%%%%%%%%%%%%%%%%%%%%%%%%%%%%%%%%%%%%%%%%%%%%%%%%%%%%
%%%%%%%%%%%%%%%%%%%%%%%%%%%%%%%%%%%%%%%%%%%%%%%%%%%%%%%%%%%%%%%%%%%%%%%%%%%%%%%%%%%%%%%%%%%%%

In the appendix, ``indistinguishable'' always means computationally indistinguishable.

%In this appendix we give several new security proofs for characterizing quantum encryption, and a proving CCA2 security for hybrid \PKQE and many-time security for hybrid quantum signcryption, i.e., \expref{Construction}{cons:QS-generic}.

%%%%%%%%%%%%%%%%%%%%%%%%%%%%%%%%%%%%%%%%%%%%%%%%%%%%%%%%%%%%%%%%%%%%%%%%%%%%%%%%%%%%%%%%%%%%%%%%%%%%%%%%
\subsection{Detailed Proof of \expref{Theorem}{thm:QAEhybrid}}\label{sec:QAE-secu}
%%%%%%%%%%%%%%%%%%%%%%%%%%%%%%%%%%%%%%%%%%%%%%%%%%%%%%%%%%%%%%%%%%%%%%%%%%%%%%%%%%%%%%%%%%%%%%%%%%%%%%%%

Here, we provide a few more details for the proof of~\expref{Theorem}{thm:QAEhybrid}. 
We only prove this theorem for the weak, plaintext-based version \wQAE starting from a \cDNS \SKQE scheme and a \CAE \SKE scheme. The general result follows from the discussion in \expref{Section}{sec:ctxt}. 
We have to show that the games \QAEideal and \wQAEreal using the scheme $\PiKEM[\Pi^\Cl, \Pi^\Qu]$ are indistinguishable for any QPT adversary \adver.

First of all we prove the
following lemma, 
which is an easy consequence of \DNS security.

\begin{lem}\label{lem:multiDNS}
	Let $\Pi = (\KeyGen, \Enc, \Dec)$ be a \DNS (\cDNS)-secure $\SKQE$, and let $\Lambda_{CB\to C^\ell B}$ be a CP map. Then there exist CP-maps $\Lambda^{(i)}_{B\to \tilde B}$, $i=1,...,\ell$ and $\Lambda^{\crej}_{B\to \tilde B}$ that sum to a TP map, such that:
	\begin{align*}
		&\Bigg\| \left(\sum_{i=1}^\ell \id_{M\to M_i}\otimes \left(\bot^{\otimes (\ell-1)}\right)_{M_{i^c}}\otimes \Lambda^{(i)}_{B\to B}+\left(\bot^{\otimes \ell}\right)_{M^\ell}\otimes \Lambda^\bot_{B\to B}\right)  \\
		&-
		\Lambda^\Pi_{MB\to M^\ell\tilde B} \Bigg\|_\diamond\le\negl(n),
	\end{align*}
	(or the computational %channel distinguishability 
	version of the above), where 
	$$
	\Lambda^\Pi_{MB\to M^\ell\tilde B}=\mathbb{E}\left[\Dec_k^{\otimes \ell}\circ\Lambda_{CB\to C^\ell B}\circ \Enc_k\right].
	$$
\end{lem}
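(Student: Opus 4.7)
The plan is to apply the single-slot \DNS (resp.\ \cDNS) guarantee $\ell$ times in parallel and combine the resulting slot-wise decompositions via a no-cloning argument. For each $i\in\{1,\dots,\ell\}$, I view $\Lambda$ as a map $CB\to C(C^{\ell-1}B)$ with slot $i$ designated as the output ciphertext and the remaining $\ell-1$ output ciphertexts plus $B$ as side information. \DNS then yields CP maps $\Lambda_i^{\acc},\Lambda_i^{\crej}:B\to C^{\ell-1}B$ (summing to TP) such that
\begin{equation*}
\left\|\mathbb{E}_k\bigl[(\Dec_k)_i\circ\Lambda\circ\Enc_k\bigr]-\bigl(\id_{M\to M_i}\otimes\Lambda_i^{\acc}+\proj{\bot}_{M_i}\otimes\Lambda_i^{\crej}\bigr)\right\|_\diamond\le\negl(n).
\end{equation*}
The crucial structural feature is that $\Lambda_i^{\acc}$ and $\Lambda_i^{\crej}$ both act on $B$ alone, so the content of the remaining $\ell-1$ ciphertext registers is $M$-independent up to $\negl(n)$.

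To combine these $\ell$ slot-wise views into a decomposition of $\Lambda^\Pi$, I insert on each plaintext output slot $j$ the projective measurement $\{\Pi_M^{(j)},\proj{\bot}^{(j)}\}$ (\emph{accept}/\emph{reject}), producing $2^\ell$ outcome-indexed sub-channels. By \DNS applied to slot $i$, the accept branch on slot $i$ carries $M$ into $M_i$ via $\id_{M\to M_i}$; similarly for slot $j$. Hence any sub-channel corresponding to an outcome pattern in which two distinct slots $i\ne j$ both accept would take a single $M$ into $M_i\otimes M_j$ as an approximate perfect cloner, and by no-cloning such a sub-channel has trace (i.e., accept-both probability) bounded by $\negl(n)$. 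The remaining nonnegligible branches are therefore ``exactly one slot accepts'' and ``all slots reject.''

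Each ``accept on slot $i$ only'' branch takes the form $\id_{M\to M_i}\otimes\bot^{\otimes\ell-1}_{M_{-i}}\otimes\Lambda^{(i)}_{B\to B}$: the $\id_{M\to M_i}$ factor is supplied by \DNS on slot $i$; the rejection on every other slot is enforced by the branch definition together with a standard ``no free-running forgery'' corollary of \DNS (obtained by instantiating \DNS on attacks that discard their ciphertext input, which forces their accept branch to have negligible weight) applied to the $M$-independent ciphertexts fed into the remaining $\ell-1$ decryptions; and the side channel $\Lambda^{(i)}_{B\to B}$ is what remains after tracing out the rejected plaintext slots from $\Lambda_i^{\acc}$. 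The ``all reject'' branch analogously contributes $\bot^{\otimes\ell}_{M^\ell}\otimes\Lambda^{\bot}_{B\to B}$. Summing the surviving branches and bounding the accumulated error by a triangle inequality across $O(\ell)$ uses of \DNS yields the claimed bound.

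The main technical subtlety is that each slot-wise \DNS decomposition holds only after averaging over the key $k$, whereas the full effective map $\Lambda^\Pi$ uses a single shared $k$ across all $\ell$ decryptions, so one cannot naively substitute one decomposition into another. This is handled by invoking the characterization \expref{Lemma}{lem:PKE-char} to express $\Enc_k$ and $\Dec_k$ through a common keyed isometry $V_k$ on a shared Stinespring environment that carries the key register explicitly; the slot-wise reductions to \DNS then remain compatible with the joint use of $k$ across all decryptions. The \cDNS/computational case follows verbatim provided all intermediate channels are QPT-implementable.
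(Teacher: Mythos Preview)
Your overall strategy---showing that at most one output slot can carry the identity on $M$---is the same idea the paper uses, but there is a real gap in how you handle the shared key, and your proposed fix via \expref{Lemma}{lem:PKE-char} does not close it.

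The problem is this: you apply \DNS on slot $i$ with the remaining $\ell-1$ \emph{ciphertexts} kept as side information, obtaining a decomposition of $\mathbb{E}_k\bigl[(\Dec_k)_i\circ\Lambda\circ\Enc_k\bigr]$. To pass from this to a statement about $\Lambda^\Pi$ you must then apply $\Dec_k^{\otimes(\ell-1)}$ to those side-information ciphertext registers with the \emph{same} $k$---which has already been averaged away. \expref{Lemma}{lem:PKE-char} only characterizes $\Enc_k,\Dec_k$ for a fixed key; it does not let you re-correlate a key with a map that has already been $k$-averaged, so it cannot rescue this step. (Nor can you fold the extra $\Dec_k$'s into the attack map fed to \DNS, since the \DNS attack must be key-independent.)

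The paper sidesteps the issue entirely by applying \DNS with the other output slots \emph{traced out} rather than kept as side information. Since $\Dec_k$ is trace-preserving, $\tr_{M_{i^c}}\Lambda^\Pi=\mathbb{E}_k\bigl[(\Dec_k)_i\circ(\tr_{C_{i^c}}\Lambda)\circ\Enc_k\bigr]$, so the slot-$i$ \DNS decomposition is directly a statement about a marginal of the full effective map $\Lambda^\Pi$, with all correlated decryptions already baked in. The argument then proceeds iteratively on $\Lambda^\Pi$ itself: split according to whether $M_1=\bot$; on the $M_1\neq\bot$ branch the $M_1$-marginal is (approximately) $\id_M$, so the remaining slots are $M$-independent and hence, by their own \DNS marginals, must equal $\bot$; then recurse on the $M_1=\bot$ branch for slot $2$, etc. Had you applied \DNS to traced-out marginals instead of keeping ciphertexts as side info, your parallel $2^\ell$-branch no-cloning argument would also go through---it would then just be a combinatorial reorganization of the paper's sequential one.
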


\begin{proof}
	\DNS-security implies that for each $i$ there exist $\Lambda^{\acc,i}_{B\to \tilde B}$ and $\Lambda^{\crej,i}_{B\to \tilde B}$
	\begin{equation}\label{eq:equation1337}
		\left\|\tr_{M_{i^c}}\circ\Lambda^\Pi_{MB\to M^\ell\tilde B}-\id_{M\to M_i}\otimes\Lambda^{\acc,i}_{B\to \tilde B}+\proj\bot_{M_i}\otimes\Lambda^{\crej,i}_{B\to \tilde B}\right\|_\diamond\le\negl(n).
	\end{equation}
	%We continue by considering 
	Consider 
	the two maps $\Lambda^\Pi_{MB\to M^\ell\tilde B}-\proj\bot_{M_1}\otimes\bra\bot_{M_i}\Lambda^\Pi_{MB\to M^\ell\tilde B}\ket{\bot}_{M_1}$ 
	and  $\bra\bot_{M_1}\Lambda^\Pi_{MB\to M^\ell\tilde B}\ket{\bot}_{M_1}$ 
	separately. The first map is clearly close to $\id_{M\to M_1}\otimes \left(\bot^{\otimes (\ell-1)}\right)_{M_{1^c}}\otimes \Lambda^{\acc,i}_{B\to \tilde B}$, as the outputs on $M_i$, $i\neq 1$ do not depend on the input, and hence have to be equal to $\bot$ according  to Equation \eqref{eq:equation1337}. We now apply apply the same argument for $\bra\bot_{M_1}\Lambda^\Pi_{MB\to M^\ell\tilde B}\ket{\bot}_{M_1}$ and continue iteratively, which proves the lemma. The computational case works similarly.
\end{proof}

We begin 
the proof of~\expref{Theorem}{thm:QAEhybrid} 
by defining a hybrid game in the following way: 

\begin{experiment}
	The experiment \textsf{Hybrid $0$}:
	\begin{algorithmic}[1]
		\State $k \from \KeyGen(\secparam)$ ;
		\State define channel $E_{M \rightarrow C}$ as follows:
		\begin{enumerate}[(1)]
			\item run $\Enc^{\mathsf{Hyb}}_k$ on $M$, put result in registers $C^\Qu {C'}^\Cl$;
			\item sample $k''\from\KeyGen^{\Qu}$ place $\Enc^\Cl_k(k'')$ in $C^\Cl$;
			\item store $({C'}^\Cl,k'')$ in a set $\mathcal S$;
			\item \textbf{return} $(C^\Qu, C^\Cl)$.
		\end{enumerate}
		\State define channel $D_{C \rightarrow M}$ as follows:
		\begin{enumerate}[(1)]
			\For {\textbf{each} $({C'}^\Cl,{C''}^\Cl) \in \mathcal S$}: 
			\State compute $\tilde k$ by running $\Dec^\Cl_k$ on $C^\Cl$;
			\If {$\tilde k=k''_j$:}
			\State apply $\Dec^{\mathsf{Hyb}}$ to $(C^\Qu, {C'}^\Cl)$, place result in $M$ \textbf{output} $M$;
			\EndIf
			\EndFor
			\State	\textbf{return} $\egoketbra{\bot}$.
		\end{enumerate}
		\State \textbf{output} $\adver^{E,D}(\secparam)$.
	\end{algorithmic}
\end{experiment}

Suppose there exists an adversary \adver that can distinguish \QAEreal and \textsf{Hybrid $0$}. Then we can build an adversary \adver' that distinguishes the real and ideal worlds, \CAEreal and \CAEideal, in the real vs ideal characterization of \CAE by Shrimpton \cite{Shrimpton04}. \adver' runs \adver, answering its queries in the following way. On an encryption query with register $M$, sample $k'\from\KeyGen^{\Qu}$ and encrypt $M$ with $\Enc_{k'}^\Qu$ to obtain $C^\Qu$. Now send $k'$ to the encryption oracle and put the result into $C^\Cl$. Return $(C^\Qu, C^\Cl)$. For a decryption query with register $C=(C^{\Qu}, C^\Cl)$, decrypt the contents of $C^\Cl$ using the decryption oracle and use the result $k'$ to decrypt the contents of $C^\Qu$ into $M$. Return $M$. It is easy to see that \adver' uses the experiment \CAEreal for $\Pi^\Cl$ to make \adver play \QAEreal, and   the experiment \CAEideal for $\Pi^\Cl$ to make \adver play \textsf{Hybrid $0$}. Hence the indistinguishability between \QAEreal and \textsf{Hybrid $0$} follows from the \CAE security of $\Pi^\Cl$.

We continue to show that the experiments \textsf{Hybrid $0$} and \wQAEideal are computationally indistinguishable due to the \cDNS security of $\Pi^\Qu$. First observe that $\Pi^\Cl$ is in particular \INDCPA secure, and therefore randomized. More precisely it holds that for $c\from \Enc_k^{\Cl}$ and $c'\from \Enc_k^{\Cl}$, the probability of $c=c'$ is negligible. Let us therefore assume that the ciphertexts in $\mathcal S$ in \textsf{Hybrid $0$} are all distinct. It follows that for each decryption oracle call, there is at most one entry in $({C'}^\Cl,{C''}^\Cl) \in\mathcal S$ such that $C^\Cl={C''}^\Cl$. Let \adver be an adversary against the \wQAE security of $\PiKEM$, and assume that it makes $q$ encryption queries. We build $q$ hybrid experiments as follows:
\begin{experiment}
	The experiment \textsf{Hybrid $i$}:
	\begin{algorithmic}[1]
		\State $k \from \KeyGen(\secparam)$ ;
		\State define a stateful channel $E_{M \rightarrow C}$ as follows:
		\begin{enumerate}[(1)]
			\item let $j-1$ be the number of times $E_{M \rightarrow C}$ has been called before;
			\item \textbf{If} $j\le i$ move content of $M$ to $M'_j$, prepare $\ket{\phi^+}_{M_jM''_j}$, store $(M'_j, M''_j)$ in $\mathcal M$;
			\item run $\Enc^{\mathsf{Hyb}}_k$ on $M$, put result in registers $C^\Qu {C'}_j^\Cl$;
			\item sample $k''_j\from\KeyGen^{\Qu}$ place $\Enc^\Cl_k(k''_j)$ in $C^\Cl$;
			\item store $({C'}_j^\Cl,k''_j)$ in a set $\mathcal S$;
			\item \textbf{return} $(C^\Qu, C^\Cl)$.
		\end{enumerate}
		\State define channel $D_{C \rightarrow M}$ as follows:
		\begin{enumerate}[(1)]
			\For {\textbf{each} $({C'}_i^\Cl,{C''}_i^\Cl) \in \mathcal S$}: 
			\State compute $\tilde k$ by running $\Dec^\Cl_k$ on $C^\Cl$;
			\If {$\tilde k=k''_j$:}
			\State apply $\Dec^{\mathsf{Hyb}}$ to $(C^\Qu, {C'}^\Cl)$, place result in $M$ ;
			\If {$j\le i$:}
			\State measure $\proj{\phi^+}$ vs. $\one-\proj{\phi^+}$ on $MM''_j$; \textbf{if} outcome is $0$ swap $M$ and $M'_j$, \textbf{else} prepare $\proj\bot_M$;
			\EndIf
			\State \textbf{output} $M$.
			\EndIf
			\EndFor
			\State	\textbf{return} $\egoketbra{\bot}$.
		\end{enumerate}
		\State \textbf{output} $\adver^{E,D}(\secparam)$.
	\end{algorithmic}
\end{experiment}

We will now use \cDNS security of $\Pi^\Qu$. Applying the definition of \cDNS to the outputs of an attack map $\Lambda_{C\to C^\ell}$ sequentially shows that \cDNS security also implies the existence of a simulator in the case of one encryption, but many decryptions. Here, the simulator has the form
\begin{equation}
	\Lambda^{\mathsf{Sim}}_{M\to M^\ell}=\sum_{i=1}^\ell \id_{M\to M_i}\otimes \left(\bot^{\otimes (\ell-1)}\right)_{M_{i^c}}\otimes \Lambda^{(i)}_{B\to B}+\left(\bot^{\otimes \ell}\right)_{M^\ell}\otimes \Lambda^\bot_{B\to B},
\end{equation}
where $B$ is the adversary's side information register. We can replace the combination "$i+1$-st encryption - attack - decryptions matching $k''_{i+1}$" by this simulator in both \textsf{Hybrids $i$} and $i+1$, yielding \textsf{Hybrids $i_a$} and $(i+1)_b$ which are indistinguishable from \textsf{Hybrids $i$} and $i+1$, respectively. It remains to observe that the entanglement tests in \textsf{Hybrid} $i+1$ exactly project onto the respective identity contribution of $\Lambda^{\mathsf{Sim}}$ (a similar idea to the Broadbent-Waynewright simulator \cite{BW16}), i.e.
\begin{equation}
	\bra{\phi^+}_{MM_i}\Lambda^{\mathsf{Sim}}_{M\to M^\ell}\left(\phi^+_{MM_i}\otimes (\cdot)_B\right)\ket{\phi^+}_{MM_i}=\left(\bot^{\otimes (\ell-1)}\right)_{M_{i^c}}\otimes \Lambda^{(i)}_{B\to B}.
\end{equation}
This immediately implies that \textsf{Hybrids $i_a$} and $(i+1)_b$ are identical. We conclude that \textsf{Hybrid $0$} and \textsf{Hybrid $q$} are indistinguishable. It remains to show that \textsf{Hybrid $q$} and \wQAEideal are indistinguishable. This follows by an application of the Gentle Measurement Lemma \cite{Winter99}: while \textsf{Hybrid $q$} first checks which register $M''$ should be used together with the decrypted quantum plaintext to measure $\proj{\phi^+}$ vs. $\one-\proj{\phi^+}$ by means of comparing $\tilde k$ and the $k''$s, \wQAEideal just tries them one by one. For the ones that do not fit, though, this measurement yields "not maximally entangled" with overwhelming probability, and the gentle measurement lemma implies that the decrypted plaintext is not disturbed.% (see also the discussion in \expref{Section}{sec:PKE-hybrid}). 

%%%%%%%%%%%%%%%%%%%%%%%%%%%%%%%%%%%%%%%%%%%%%%%%%%%%%%%%%%%%%%%%%%%%%%%%%%%%%%%%%%%%%%%%%%%%%%%%%%%%%%%%
\subsection{Detailed Proof of~\expref{Theorem}{thm:KEMwCCA2}}\label{sec:proofwCCA2}
%%%%%%%%%%%%%%%%%%%%%%%%%%%%%%%%%%%%%%%%%%%%%%%%%%%%%%%%%%%%%%%%%%%%%%%%%%%%%%%%%%%%%%%%%%%%%%%%%%%%%%%%

Let $\adver = (\adver_1, \adver_2)$ be an adversary playing \QINDCCAAtest. We will go through a game hopping, where the winning probability of the adversary always increases (or decreases at most negligibly), up to an experiment which is equivalent to \QINDwCCAAfake. First of all, we restate the two experiments explicitly.

\begin{experiment}\label{exp:INDwCCAA-test-supp}
	The $\QINDCCAAtest(\Pi, \adver, \secpar)$ experiment:
	\begin{algorithmic}[1]
		\State $\chall$ runs $(\dk,\ek) \from \KeyGen(\secparam)$ and flips a coin $b \inrand \bit$;
		\State $\algo A_1$ receives $\ek$ and oracle access to $\Dec_\dk$;
		\State $\algo A_1$ prepares a side register $S$, and sends to $\chall$ a challenge register $M$;
		\State $\chall$ puts into $C$ either $\Enc_\ek(M)$ (if $b=0$) or $\Enc_\ek(\tau_M)$ (if $b=1$);
		\State $\algo A_2$ receives registers $C$ and $S$ and oracle access to $\Dec_\dk$;
		\State $\algo A_2$ outputs a bit $b'$.  \textbf{If} $b'=b$, \textbf{output} \win; otherwise \textbf{output} \rej.
	\end{algorithmic}
\end{experiment}

\begin{experiment}\label{exp:INDwCCAA-fake}
	The $\QINDwCCAAfake(\Pi, \adver,\secpar)$ experiment:
	\begin{algorithmic}[1]
		\State $\chall$ runs $(\dk,\ek) \from \KeyGen(\secparam)$;
		\State $\algo A_1$ receives $\ek$ and oracle access to $\Dec_\dk$;
		\State $\algo A_1$ prepares a side register $S$, and sends to $\chall$ a challenge register $M$;
		\State $\chall$ discards $M$, prepares $\ket{\phi^+}_{M'M''}$, and stores $M''$; then $\chall$ encrypts $M'$  (using $\ek$) and puts the resulting ciphertext into $C'$;
		\State $\algo A_2$ receives registers $C'$ and $S$ and oracle access to $D_\dk$, where $D_\dk$ is defined as follows. On input a register $C$:
		\begin{enumerate}[(1)]
			\State $\chall$ applies $\Dec_\dk$ to $C$, places result in $M$; 
			\State $\chall$ applies $\{\Pi^+, \one - \Pi^+\}$ to $M M''$;
			\State \algorithmicif\  {the outcome is $1$} \algorithmicthen: abort and \textbf{output} \textsf{cheat}; \algorithmicelse \ \textbf{return} $M$;
		\end{enumerate}
		\State $\chall$ draws a bit $b$ at random. \textbf{If} $b=1$, \textbf{output} \cheat; if $b=0$ \textbf{output} \rej.
	\end{algorithmic}
\end{experiment}

We start now with the game-hopping.

\Game{0:} this is just \QINDCCAAtest.

\Game{1:} as \game{0}, except that \chall will prepare two entangled registers $\ket{\phi^+}_{M'M''}$; then if $b = 1$ instead of encrypting the maximally mixed state $\tau_M$, he will discard $M$ and return to \adver the encryption of $M'$ instead. Clearly the winning probability of \adver is unaffected by this, otherwise \adver could distinguish $\tau_M$ from $\phi^{+}_{M'}$ without access to $\phi^{+}_{M''}$.

\Game{2:} as \game{1}, but instead of oracle access to $\Dec_\dk$, $\adver_2$ gets access to a modified oracle $D'_\dk$ which, on input a ciphertext register $C$, does the following:

\begin{enumerate}
	\item applies $\Dec_\dk$ to $C$, places result in $M$; 
	\item applies $\{\Pi^+, \one - \Pi^+\}$ to $M M''$;
	\item \algorithmicif\  {the outcome is $1$} \algorithmicthen: abort game and \textbf{output} \textsf{\win}; \algorithmicelse \ \textbf{return} $M$;
\end{enumerate}

Notice that $D'_\dk$ is the same as $D_\dk$ in \QINDwCCAAfake, except it aborts with \win instead of \cheat whenever it detects a challenge plaintext replay. Clearly, the probability of \adver of winning this game does not decrease in respect to the previous game, hence so far we have:
\begin{equation*}
	\Pr \left[ \adver \ \win\mbox{s} \ \game{2} \right] \geq \Pr \left[ \adver \ \win\mbox{s} \ \QINDCCAAtest \right].
\end{equation*}

\Game{3:} as \game{2}, but \chall ``blinds" the classical part of the challenge ciphertext, replacing it with a different one, and $D'_\dk$ is modified in order to undo the blinding whenever queried on the fake ciphertext, making the substitution transparent to the adversary. The experiment looks as follows.

\begin{experiment}\label{exp:game3CCA2}
	The $\game{3}(\Pi, \adver, \secpar)$ experiment:
	\begin{algorithmic}[1]
		\State $\chall$ runs $(\dk,\ek) \from \KeyGen(\secparam)$, prepares maximally entangled registers $\ket{\phi^+}_{M'M''}$, and flips a bit $b \rand \bin$;
		\State $\algo A_1$ receives $\ek$ and oracle access to $\Dec_\dk$;
		\State $\algo A_1$ prepares a side register $S$, and sends to $\chall$ a challenge register $M$;
		\State \algorithmicif\ $b = 1$, \algorithmicthen: \chall discards $M$ and replaces it with $M'$;
		\State $\chall$ encrypts $M$ (using $\ek$), puts the resulting ciphertext into $C$, and records the classical part (PKE ciphertext) $t$ of the resulting ciphertext;
		\State \chall generates a fresh one-time key $k' \from \KeyGen^\SKQE(\secparam)$;
		\State \chall encrypts $t' \from \Enc^\PKE_\ek(k')$, records $t'$, and replaces the classical $t$ with $t'$ in the ciphertext register $C$;
		\State $\algo A_2$ receives registers $C$ and $S$, and oracle access to $D^t_\dk$ defined as follows. On input a register $C$:
		\begin{enumerate}[(1)]
			\State measure the classical-subsystem part of $C$: \algorithmicif\  {the outcome is $t'$} \algorithmicthen: replace the classical-subsystem part of $C$ with $\egoketbra{t}$;
			\State applies $\Dec_\dk$ to $C$, places result in $M$; 
			\State applies $\{\Pi^+, \one - \Pi^+\}$ to $M M''$;
			\State \algorithmicif\  {the outcome is $1$} \algorithmicthen: abort game and \textbf{output} \textsf{\win}; \algorithmicelse \ \textbf{return} $M$;
		\end{enumerate}
		\State $\algo A_2$ outputs a bit $b'$.  \textbf{If} $b'=b$, \textbf{output} \textsf{win}; otherwise \textbf{output} \textsf{fail}.
	\end{algorithmic}
\end{experiment}

Notice that, since in the algorithm $\Dec$ of $\PiKEM$ the first step is to measure the classical part of the register $C$, the new measurement introduced by $D^t_\dk$ does not disturb \adver's behavior. Moreover, the substitution of $t$ with $t'$ is undetectable for \adver, otherwise we could build a reduction \badver against the \INDCCAA security of $\Pi^\PKE$, in the following way:

\begin{enumerate}
	\item \badver plays the \INDCCAA game against $\chall^\CCAA$ (for a secret bit $b$), simulating \chall for \adver in the obvious way (forwarding \ek, simulating \Dec by decrypting classical keys with his own $\Dec^\PKE_\dk$ oracle and then doing the \SKQE decryption himself).
	\item During the challenge from \adver, \badver generates a fresh $k$ and uses it as a challenge to $\chall^\CCAA$, receiving back a ciphertext $t$;
	\item Flip a coin $b' \rand \bin$ to decide whether for the rest of the game \badver will try to simulate \game{2} ($b' = 0$) or \game{3} ($b' = 1$) for \adver. In the first case just return to \adver the correctly formed ciphertext register $C$, and simulate decryption queries on $t$-parts by defining their decryption as $k$. In the latter case instead, generate a fake encryption $t'$, modify $C$ accordingly before returning it to \adver, then simulate a ``blinded" oracle $D^t_\dk$ by decrypting $t'$-part queries as $k$.
	\item Finally, look at \adver's output. If \adver guesses $b'$ correctly, then \badver guesses $b= 0$, otherwise \badver guesses $b$ at random.
\end{enumerate}

The reduction works because if $b=0$, then we are simulating for \adver correctly either \game{2} or \game{3} (depending on $b'$). By assumption, in this case \adver should be able to guess correctly $b'$ with non-negligible advantage, otherwise his guess will be unrelated to $b$. Therefore, we have:

\begin{equation*}
	\Pr \left[ \adver \ \win\mbox{s} \ \game{3} \right] \geq \Pr \left[ \adver \ \win\mbox{s} \ \game{2} \right] - \negl.
\end{equation*}

Also notice that, at this point, \adver has no information whatsoever about the key $k$ used to encrypt the quantum part of the challenge, because the encryption of $k$ was blinded with another, unrelated one by the modified oracle $D^t_\dk$. This observation will be important in the next game hop, because the definition of \DNS security (and of the computational variant \cDNS) argues about adversaries averaged over the secret key used.

\Game{4:} as \game{3}, but this time \chall \ {\em always} replaces the challenge plaintext register $M$ with the entangled half $\phi^+_{M'}$, regardless of $b$. We show that this replacement cannot be efficiently detected. Clearly, if $b = 1$ nothing changes for \adver. If $b = 0$ instead, the quantum part of the challenge ciphertext received back by \adver is now an encryption (through $\Pi^\SKQE$) of the register $M'$ for a secret key $k$ unknown to \adver by the reasoning made in the last game. 

Then, consider any query to $D^t_\dk$ performed by $\adver_2$. Such query can be seen as a convex combination of: a query state which does not depend on the challenge (and hence does not change the \win probability), and the output of an attack map acting on the challenge ciphertext and an internal adversarial state $S$. By the \cDNS security of $\Pi^\SKQE$, such an attack map is (up to a negligible factor) a convex combination of two maps: the one which replaces the underlying plaintext with $\egoketbra{\bot}$ (and hence independent from the challenge plaintext, so again does not change the \win probability), and the one which acts as the identity, and leaves the underlying plaintext untouched. However, this latter map would produce a query which is detected as $\phi^+_{M'}$ by $D^t_\dk$, and would thus cause the adversary to \win. Hence we have:

\begin{equation*}
	\Pr \left[ \adver \ \win\mbox{s} \ \game{6} \right] \geq \Pr \left[ \adver \ \win\mbox{s} \ \game{5} \right] - \negl.
\end{equation*}

\Game{5:} as \game{4}, but we remove the ``$t$ blinding" feature from the simulated oracle $D^t_\dk$. That is, we replace $D^t_\dk$ back with $D'_\dk$. This replacement is undetectable for the same reasoning as in \game{3}.

\Game{6:} this is like \game{5}, except that:
\begin{enumerate}
	\item $D'_\dk$ is replaced by $D_\dk$ (i.e., aborts with \cheat instead of \win);
	\item \chall does not flip the random bit $b$; and
	\item on \adver's output, regardless of outcome, we declare \adver to \cheat or to \rej with 50/50 probability.
\end{enumerate}

Notice that this final game is equivalent to \QINDwCCAAfake. In fact, by now \adver's output is completely unrelated to $b$. Combining all inequalities, we have shown:
\begin{equation*}
	\Pr \left[ \adver \ \cheat\mbox{s in} \ \QINDwCCAAfake \right] \geq \Pr \left[ \adver \ \win\mbox{s} \ \QINDCCAAtest \right] - \negl ,
\end{equation*}

which proves finally that $\PiKEM$ is \QINDwCCAA secure.

%%%%%%%%%%%%%%%%%%%%%%%%%%%%%%%%%%%%%%%%%%%%%%%%%%%%%%%%%%%%%%%%%%%%%%%%%%%%%%%%%%%%%%%%%%%%%%%%%%%%%%%%
\subsection{Strong (\cQCA-Based) CCA2 Security for Public-Key Quantum Encryption}\label{suppl:fullCCA2}
%%%%%%%%%%%%%%%%%%%%%%%%%%%%%%%%%%%%%%%%%%%%%%%%%%%%%%%%%%%%%%%%%%%%%%%%%%%%%%%%%%%%%%%%%%%%%%%%%%%%%%%%

We show here the ``correct" (strongest) formulation of quantum CCA2 security for the public-key setting by using the framework from \expref{Section}{sec:PKEwCCA2} and extending in a natural way the work done in \cite{AGM18}. We recall that, in particular, this formulation has the advantage that it closely matches ``in spirit" the classical version of \INDCCAA by disallowing replay of the challenge ciphertext in a strict quantum sense, where detection of a replay is done by exploiting \QCA-based quantum ciphertext integrity. The drawback is that the use of such formulation in actual security proofs relies on a characterization of quantum encryption schemes given by a decomposition of the encryption/decryption procedure in terms of operators $V_k$ and $\Pi_{k, r}$, where $k$ (which in our case can be thought w.l.o.g. as the pair $(\dk,\ek)$) is the underlying \QCA authentication key (see \cite{AGM18} for details). One can prove that such decomposition always exists, but the necessary characterization here requires either that this decomposition is efficient (\expref{Condition}{con:efficient}), or that one accepts non-efficient challengers in the security game. There is currently no known counterexample of a quantum encryption scheme (either symmetric- or public-key) not respecting \expref{Condition}{con:efficient}. Regardless, as shown in \cite{AGM18}, this approach has many advantages, even if strictly speaking it can only be used on quantum encryption schemes which fulfill the characterization.

We start by recalling again the unrestricted quantum CCA2 experiment.

\begin{experiment}\label{exp:INDCCAA-test}
	The $\QINDCCAAtest(\Pi, \adver, \secpar)$ experiment:
	\begin{algorithmic}[1]
		\State $\chall$ runs $(\dk,\ek) \from \KeyGen(\secparam)$ and flips a coin $b \inrand \bit$;
		\State $\algo A_1$ receives $\ek$ and access to oracle $\Dec_\dk$;
		\State $\algo A_1$ prepares a side (state) register $S$, and sends $\chall$ a challenge register $M$;
		\State $\chall$ puts into $C$ either $\Enc_\ek(M)$ (if $b=0$) or $\Enc_\ek(\tau_M)$ (if $b=1$);
		\State $\algo A_2$ receives registers $C$ and $S$ and oracle access to $\Dec_\dk$;
		\State $\algo A_2$ outputs a bit $b'$.  \textbf{If} $b'=b$, \textbf{output} \textsf{win}; otherwise \textbf{output} \textsf{fail}.
	\end{algorithmic}
\end{experiment}

Notice that in this game there are no restrictions on the use of $\Dec_\dk$ by $\algo A_2$. In particular, $\algo A_2$ is free to decrypt the challenge. In the second game, the challenge plaintext is replaced by half of a maximally entangled state, and $\adver$ only gains an advantage over guessing if he cheats, i.e., if he tries to decrypt the challenge.

\begin{experiment}\label{exp:INDCCAA-fake}
	The $\QINDCCAAfake(\Pi, \algo A, n)$ experiment:
	\begin{algorithmic}[1]
		\State $\chall$ runs $k \from \KeyGen(1^n)$;
		\State $\algo A_1$ receives $\secpar$ and access to oracles $\Enc_k$ and $\Dec_k$;
		\State $\algo A_1$ prepares a side register $S$, and sends $\chall$ a challenge register $M$;
		\State $\chall$ discards $M$, prepares $\ket{\phi^+}_{M'M''}$ and fresh randomness $r$, and stores $(M'', r)$; then $\chall$ encrypts the $M'$ register and sends the resulting ciphertext $C'$ to $\algo A_2$;
		\State $\algo A_2$ receives registers $C'$ and $S$ and oracles $\Enc_k$ and $D_k$, where $D_k$ is defined as follows. On input a register $C$:
		\begin{enumerate}[(1)]
			\item $\chall$ applies $V_k^\dagger$ to $C$, places results in $MT$; 
			\item $\chall$ applies $\{P_T^{\sigma_k}, \one - P_T^{\sigma_k}\}$  to $T$; 
			\If{outcome is $0$}:
			\State $\chall$ applies $\{\Pi_{k, r}, \one - \Pi_{k, r}\}$  to $T$;  
			\If{outcome is $0$}: 
			\State $\chall$ applies $\{\Pi^+, \one - \Pi^+\}$ to $M M''$;
			\State \textbf{if} {outcome is $0$}: \textbf{output} \textsf{cheat};
			\EndIf
			\Else 
			\State apply the default map for invalid ciphertexts, i.e., $\hat D_k$ to $M$. %apply the remaining step of the decryption routine, i.e. apply $\hat D_k$. 
			\EndIf
			\State	\textbf{return} $M$;
		\end{enumerate}
		\State $\chall$ draws a bit $b$ at random. \textbf{If} $b=1$, \textbf{output} \textsf{cheat}; if $b=0$ \textbf{output} \textsf{reject}.
	\end{algorithmic}
	
\end{experiment}

We now define \QINDCCAA in terms of the advantage gap of adversaries between the above two games.\footnote{The interface that the two games provide to the adversary differ slightly in that the adversary is not asked to output a bit in the end of the \QINDCCAAfake game. This is not a problem as the games have the same interface until the second one terminates.}

\begin{defn}\label{def:qINDCCAA}
	A $\PKQE$ $\Pi$ is $\QINDCCAA$ if, for all QPT adversaries $\adver$,
	$$
	\Pr[\QINDCCAAtest(\Pi, \adver, \secpar) \to \win] - \Pr[\QINDCCAAfake(\Pi, \adver, \secpar) \to \cheat] \leq \negl(n)\,.
	$$
\end{defn}

The omission of absolute values in the above is intentional. Indeed, an adversary can artificially inflate his cheating probability by querying the decryption oracle on the challenge and then ignoring the result. What he should not be able to do (against a secure scheme) is make his win probability larger than his cheating probability.% We note that \QINDCCAA clearly implies \QINDCCA.

Separation and implication results follow the same strategy of \cite{AGM18}, which we refer the reader to. We only recall here the following.

\begin{prop}
	If $\Pi$ is a \QINDCCAA secure \PKQE scheme, then in particular it is also \QINDCCA and \QINDCPA.
\end{prop}

Moreover, following the same strategy as in \expref{Theorem}{thm:KEMwCCA2}, we have the following.

\begin{prop}\label{prop:KEMCCA2}
	Select the following:
	\begin{itemize}
		\item An \INDCCAA PKE scheme $\Pi^\PKE$;
		\item A \cQCA secure \SKQE $\Pi^\SKQE$. 
	\end{itemize}
	Then $\PiKEMP[\Pi^\PKE, \Pi^\SKQE]$ is a \QINDCCAA \PKQE.
\end{prop}

%%%%%%%%%%%%%%%%%%%%%%%%%%%%%%%%%%%%%%%%%%%%%%%%%%%%%%%%%%%%%%%%%%%%%%%%%%%%%%%%%%%%%%%%%%%%%%%%%%%%%%%%
\subsection{Detailed Proof of \expref{Theorem}{thm:imp-2}}\label{app:imposs2}
%%%%%%%%%%%%%%%%%%%%%%%%%%%%%%%%%%%%%%%%%%%%%%%%%%%%%%%%%%%%%%%%%%%%%%%%%%%%%%%%%%%%%%%%%%%%%%%%%%%%%%%%

We want to show that if $\Pi = (\KeyGen, \Sign, \Ver)$ is a correct \QS, and $M$ is a non-trivial two-outcome measurement, then $\Pi$ is at most $(1-\negl(\secpar))$-one-time $\{M\}$-secure. We use a similar idea as in the proof of \expref{Theorem}{thm:imp-1}. Just assuming correctness we can construct an attack that begins by applying a Stinespring dilation of the verification algorithm. Subsequently we apply a unitary that changes $\ket{\psi_0}$ to $\ket{\psi_1}$, where the $\ket{\psi_i}$ are efficiently preparable states such that measuring $M$ on $\ket{\psi_i}$ returns result $i$ with certainty. Finally, the attack undoes the Stinespring dilation of the verification. The result is a valid signed state for $\ket{\psi_1}$ when the attack is applied to a signed state for $\ket{\psi_0}$.

The details are as follows.
First note that \expref{Lemma}{lem:PKE-char} does not use the fact that the public key is public, or that the secret key is secret, it only uses the correctness of the scheme. Therefore we can apply it to the present \QS as well to conclude that the quantum channel implemented by the verification algorithm fulfills the equation
\begin{align}
	&\left\|\Sign_\sk- V_k\left((\cdot)\otimes (\sigma_k)_T\right)V_k^\dagger\right\|_\diamond\le\negl(\secpar)\label{eq:Signchar-sig}\\
	&\left\|\Ver_\vk( V_{k}P^{\sigma_k}_T V_k^\dagger(\cdot)V_kP^{\sigma_k}_T V_k^\dagger) - \tr_{T}\left[P^{\sigma_k}_T \left(V_k^\dagger (\cdot)V_k\right) P^{\sigma_k}_T \right] \right\|_\diamond\le\negl(\secpar),\label{eq:Signchar-ver}
\end{align}
where $k={\vk,\sk}$.\footnote{$V_k$ is not necessarily efficiently implementable, i.e., it does not make sense to say  it ``only depends on the verification key''; likewise for the state $\sigma_k$.}  In particular, \expref{Equation}{eq:Signchar-ver} implies that there exists a Stinespring dilation unitary $W^{\Ver_{\vk}}$ of $\Ver_{\vk}$ such that
\begin{align}\label{eq:sigchar}
	&\left\|W^{\Ver_{\vk}}V_{k}P^{\sigma_k}_T V_k^\dagger(\cdot)V_kP^{\sigma_k}_T V_k^\dagger\left(W^{\Ver_{\vk}}\right) - P^{\sigma_k}_T V_k^\dagger(\cdot)V_kP^{\sigma_k}_T \right\|_\diamond\le \negl(\secpar).
\end{align}
Let $\hat{W}^{\Ver_{\vk}}_{CE_1\to ME_2}$ be an efficiently implementable Stinespring dilation unitary of $\Ver_{\vk}$. We consider the following attack. Let $\ket{\psi_i}_{M}$, $i=0,1$ be efficiently preparable pure states such that $M$ returns $i$ with certainty when applied to $\proj{\psi_i}_M$, and let $U_i$ be preparation unitaries for $\ket{\psi_i}_M$, i.e. $\ket{\psi_i}_M=U_i\ket 0_M$. The attack map is now given by
\begin{align*}
	\adver(X)&=\tr_{E_1}\left[U_\adver XU_\adver^\dagger\right] \text{ with }\ 
	U_\adver=\left(\hat{W}^{\Ver_{\vk}}\right)^\dagger U_1U_0^\dagger\hat{W}^{\Ver_{\vk}}.
\end{align*}
Let us define $P:=V_{k}P^{\sigma_k}_T V_k^\dagger$. According to \expref{Equation}{eq:sigchar}, we have
\begin{align}\label{eq:adverchar}
	&\left\|P\adver\left(P(\cdot)P\right)P-P\tilde U_\adver \left(P(\cdot)P\right)\tilde U_\adver^\dagger P\right\|_\diamond\le\negl(\secpar)\text{ with }\ \tilde U_\adver=V_kU_1U_0^\dagger V_k^\dagger
\end{align}
When applying this attack to a ciphertext for $\proj{\psi_0}_{M}$ we calculate, using the symbol $\approx$ for equality up to negligible difference in trace norm,
\begin{align*}
	&P\adver\left(P\Sign_{\sk}\left(\proj{\psi_0}_M\right)P\right)P\approx P\tilde U_\adver \left(P\Sign_{\sk}\left(\proj{\psi_0}_M\right)P\right)\tilde U_\adver^\dagger P\\
	&\!\approx\! P\tilde U_\adver V_k\!\left(\!\proj{\psi_0}_M\!\otimes\! (\sigma_k)_T\!\right)\!V_k^\dagger\tilde U_\adver^\dagger P\!=\!PV_k\!\left(\!\left(\!U_1U_0^\dagger\proj{\psi_0}_MU_0U_1^\dagger\!\right)\!\!\otimes\! (\sigma_k)_T\!\right)\!V_k^\dagger P\\		
	&=V_k\left(\proj{\psi_1}_M\otimes (\sigma_k)_T\right)V_k^\dagger\approx \Sign_{\sk}\left(\proj{\psi_1}_M\right).
\end{align*}
Here we have used \expref{Equation}{eq:adverchar} in the second line,  and \expref{Equation}{eq:Signchar-sig} in the third and last lines. In particular, the calculation above implies that 
\begin{align*}
	&P\adver\left(P\Sign_{\sk}\left(\proj{\psi_0}_M\right)P\right)P\approx \adver\left(\Sign_{\sk}\left(\proj{\psi_0}_M\right)\right).
\end{align*}
We have thus transformed a signature of $\ket{\psi_0}$ into a signature of  $\ket{\psi_1}$ and thus changed the outcome of $M$ from 0 to 1, up to negligible difference.

%%%%%%%%%%%%%%%%%%%%%%%%%%%%%%%%%%%%%%%%%%%%%%%%%%%%%%%%%%%%%%%%%%%%%%%%%%%%%%%%%%%%%%%%%%%%%%%%%%%%%%%%
\subsection{Detailed Proof of \expref{Theorem}{thm:one-time}}\label{app:one-time}
%%%%%%%%%%%%%%%%%%%%%%%%%%%%%%%%%%%%%%%%%%%%%%%%%%%%%%%%%%%%%%%%%%%%%%%%%%%%%%%%%%%%%%%%%%%%%%%%%%%%%%%%

Let $\Pi^\SC$ be a classical signcryption scheme with one-time outsider and insider security, and let $\Pi^\SKQE$ be a \cDNS-secure symmetric-key quantum encryption scheme. We want to prove that $\PiHyb[\Pi^\SC, \Pi^\SKQE]$ (\expref{Construction}{cons:QS-generic}) is a one-time outsider- and insider-secure quantum signcryption scheme.

Insider security follows from the \QINDCPA security of the hybrid \PKQE construction (i.e. \expref{Theorem}{thm:PKE-hybrid-full}) and the fact that \cDNS implies \QIND. So let us focus on proving outsider security as from \expref{Definition}{def:one-time-outsider}.

Let $\adver = (\adver_1,\adver_2)$ be an outside attacker that prepares a quantum state $\rho_{MB}^{(\vek_R,\vek_S)}$ upon input $(\vek_R, \vek_S)$, and attacks the ciphertext it receives with an attack channel $\Lambda_{CB \to CB}$. First we will go through a series of indistinguishable hybrid experiments, starting from the outsider-security \expref{Experiment}{expr:one-time-outsider}. Then we will build a simulator for the last hybrid which actually implements an attack map of the desired convex form (identity plus rejection map), thereby proving security.

\Game{0:} this is just \expref{Experiment}{expr:one-time-outsider} for $\Pi = \PiHyb[\Pi^\SC, \Pi^\SKQE]$, that we recall here as follows.

\begin{experiment}
	The $\game{0}(\Pi, \adver, \secpar)$ experiment:
	\begin{algorithmic}[1]
		\State generate $(\sdk_S, \vek_S) \from \KeyGen^\SC(\secparam)$ and $(\sdk_R, \vek_R) \from \KeyGen^\SC(\secparam)$;
		\State $\adver_1$ receives $\vek_S$ and $\vek_R$;
		\State $\adver_1$ prepares a state $\rho_{MB}$;
		\State apply $\SigEnc^\Hyb_{\sdk_S, \vek_R}$ to register $M$, yielding register $C$;
		\State $\adver_2$ receives registers $C$ and $B$;
		\State $\adver_2$ applies an attack map $\Lambda_{CB \to CB}$ (possibly depending on $\vek_S, \vek_R$);
		\State apply $\VerDec^\Hyb_{\vek_S, \sdk_R}$ to register $C$, yielding register $M$; \textbf{output} $M$.
	\end{algorithmic}
\end{experiment}

\Game{1:} as \game{0}, but we remove the dependency of the attack map $\Lambda$ on $\vek_S, \vek_R$ by redefining $\rho_{MB} = \rho_{MB}^{(\vek_R,\vek_S)}$ to include a copy of the public keys in the $B$ register, and having $\Lambda$ read it from there. From now on, for ease of reading, we will omit this explicit labeling of the quantum state and just write $\rho_{MB}$ instead.

\Game{2:} as \game{1}, but we enforce that the attack map does not change the classical part of the ciphertext. Formally: let $C=C^\Qu C^\Cl$ be a decomposition of the ciphertext space of the scheme $\CQHyb$, where $C^\Qu$ and $ C^\Cl$ are a quantum and a classical register, respectively\footnote{These are both quantum registers, but we use this labeling for convenience, in order to identify the register which is supposed to contain the classical part of the ciphertext.}.%For ease of exposition we assume here that $C^\Cl$ is a quantum register that can only hold classical information.}.

Let $\sigma_{CB} = \left( \SigEnc^\Hyb_{\sdk_S,\vek_R} \otimes \id_B \right) \rho_{MB}$ be the joint system that $\adver_2$ receives, and let $c$ be its classical part, i.e., the content of the $C^\Cl$ register. Let further $\hat\sigma_{CB}=\Lambda_{CB \to CB} \sigma_{CB}$ be the processed ciphertext after the attack map, and let $\hat c$ be the classical part of it. We observe that if $\hat c \neq c$, then $\VerDec^\Hyb_{\vek_S, \sdk_R}$ rejects with overwhelming probability due to the unforgeability of the classical signcryption scheme $\Pi^\SC$. We can therefore safely modify the experiment in such a way that whenever $\hat c \neq c$, it aborts by outputting $\egoketbra{\bot}_M$, and this modification (including the measurement on $C^\Cl$, that would happen anyway during \VerDec) is undetectable except with negligible probability.

\Game{3:} as \game{2}, but we ``blind" the classical part of the ciphertext, replacing it with a different one (a signcryption of a fresh symmetric key), and $\VerDec^\Hyb_{\vek_S, \sdk_R}$ is modified in order to undo the blinding when queried on the fake ciphertext, making the substitution transparent. The experiment so far looks as follows.

\begin{experiment}\label{exp:game3hyb}
	The $\game{3}(\Pi, \adver, \secpar)$ experiment:
	\begin{algorithmic}[1]
		\State generate $(\sdk_S, \vek_S) \from \KeyGen^\SC(\secparam)$ and $(\sdk_R, \vek_R) \from \KeyGen^\SC(\secparam)$;
		\State $\adver_1$ receives $\vek_S$ and $\vek_R$;
		\State $\adver_1$ prepares a state $\rho_{MB} = \rho_{MB}^{(\vek_R,\vek_S)}$;
		\State apply $\SigEnc^\Hyb_{\sdk_S, \vek_R}$ to register $M$, yielding register $C$;
		\State measure register $C^\Cl$, obtaining classical part of ciphertext $c$; record $c$;
		\State generate fresh $k' \leftarrow \KeyGen^\SKQE(\secparam)$
		\State encrypt $c' \from \SigEnc^\SC_{\sdk_S, \vek_R}(k')$; replace $c$ with $c'$ in $C^\Cl$;
		\State $\adver_2$ receives registers $C$ and $B$;
		\State $\adver_2$ applies an attack map $\Lambda_{CB \to CB}$;
		\State measure register $C^\Cl$, obtaining classical part of ciphertext $\hat c$;
		\State \textbf{if} $\hat c \neq c'$ \textbf{output} $\egoketbra{\bot}_M$; \textbf{else} replace $c'$ with $c$ in $C^\Cl$;
		\State apply $\VerDec^\Hyb_{\vek_S, \sdk_R}$ to register $C$, yielding register $M$; \textbf{output} $M$.
	\end{algorithmic}
\end{experiment}

Notice that the replacement of $c$ by $c'$ is undetectable by the adversary, because of the \IND security of $\Pi^\SC$. Therefore, the probability that the attack map $\Lambda$ produces a different $\hat c$ (and hence causes the game to abort) remains unchanged up to a negligible factor (otherwise it would yield an efficient \IND distinguisher). Thus, so far we have an experiment computationally equivalent to the original one. However, notice that at this point %\adver has no information whatsoever about the one-time symmetric secret key $k$ used in the $\Pi^\SKQE$ scheme. This means that, by now, 
\adver's attack map $\Lambda_{CB \to CB} = \Lambda_{CB}$ can be decomposed as follows:

\begin{equation}\label{eq:decomp1}
	\Lambda_{CB} = \Lambda_{C^\Cl C^\Qu B} \approx \left( \id_{C^\Cl} \otimes \Gamma_{C^\Qu B} \right) + \Delta_{C^\Cl C^\Qu B} , 
\end{equation}

where the component $\Delta_{C^\Cl C^\Qu B}$ yields an effective attack map which always produces a rejection by the observation made during \game{2}, i.e.:

\begin{equation}\label{eq:decomp2}
	\tilde{\Delta}^{(\adver,S,R)}_{M B} = \mathbb{E} \left[ \VerDec^\Hyb_{S,R} \circ \Delta_{C^\Cl C^\Qu B} \circ \SigEnc^\Hyb_{S,R} \right] = \egoketbra{\bot}_{M} \otimes \Theta^\rej_{B} ,
\end{equation}

and where the expectation is over the keys produced by \KeyGen and the randomness of \SigEnc. Looking at the attack map component $\Gamma_{C^\Qu B}$, we notice that we can actually see this as an attack map of an adversary \badver against $\Pi^\SKQE$ (with quantum internal memory $B$, plaintext space $M$, and ciphertext space $C^\Qu$)\footnote{$\Gamma_{C^\Qu B}$ represents the adversary's attempt at forging a valid quantum ciphertext for the unknown secret key encapsulated in the classical part of the ciphertext.}. Let us consider the corresponding effective attack map:

\begin{equation}\label{eq:decomp3}
	\begin{split}
		\tilde{\Gamma}^{(\badver,S,R)}_{M B} &= \mathbb{E} \left[ \Dec^\SKQE \circ \Gamma_{C^\Qu B} \circ \Enc^\SKQE \right] \\
		&= \mathbb{E} \left[ 
		\VerDec^\Hyb_{S,R} \circ \left( \id_{C^\Cl} \otimes \Gamma_{C^\Qu B} \right) \circ \SigEnc^\Hyb_{S,R} 
		\right] .
	\end{split}
\end{equation}

Because of the \cDNS security of $\Pi^\SKQE$, we know that there exists a \cDNS simulator \simu implementing a computationally equivalent map:

\begin{equation}\label{eq:decomp4}
	\Phi_{M B} = \left( \id_M \otimes \Phi^\acc_{B} \right) + \left( \egoketbra{\bot}_M  \otimes \Phi^\rej_{B} \right) \approx \tilde{\Gamma}^{(\badver,S,R)}_{M B} .
\end{equation}

We can finally use \simu and the linearity of the maps defined so far to build an efficient outsider-security simulator for \adver. In fact, looking at \adver's effective attack map:

\begin{align*}
	\tilde{\Lambda}^{(\adver,S,R)}_{M B} &= \mathbb{E} \left[ \VerDec^\Hyb_{S,R} \circ \Lambda_{C^\Cl C^\Qu B} \circ \SigEnc^\Hyb_{S,R} \right] & \tag{from \eqref{eq:decomp1}} \\
	&\approx \mathbb{E} \left[ \VerDec^\Hyb_{S,R} \circ \left( \id_{C^\Cl} \otimes \Gamma_{C^\Qu B} \right) \circ \SigEnc^\Hyb_{S,R} \right. \\
	&\left. \ \ \ \ \ \ + \VerDec^\Hyb_{S,R} \circ \Delta_{C^\Cl C^\Qu B} \circ \SigEnc^\Hyb_{S,R} \right] \\
	&= \mathbb{E} \left[ \VerDec^\Hyb_{S,R} \circ \left( \id_{C^\Cl} \otimes \Gamma_{C^\Qu B} \right) \circ \SigEnc^\Hyb_{S,R} \right] \\
	&\ \ \ \ \ \ \ + \mathbb{E} \left[ \VerDec^\Hyb_{S,R} \circ \Delta_{C^\Cl C^\Qu B} \circ \SigEnc^\Hyb_{S,R} \right] & \tag{from \eqref{eq:decomp2} and \eqref{eq:decomp3}}\\
	&= \tilde{\Gamma}^{(\badver,S,R)}_{M B} + \egoketbra{\bot}_{M} \otimes \Theta^\rej_{B} & \tag{from \eqref{eq:decomp4}} \\
	&\approx \left( \id_M \otimes \Phi^\acc_{B} \right) + \left( \egoketbra{\bot}_M  \otimes \left( \Phi^\rej_{B} + \Theta^\rej_{B} \right) \right) \\
	&= \left( \id_M \otimes \Psi^\acc_{B} \right) + \left( \egoketbra{\bot}_M  \otimes \Psi^\rej_{B} \right) ,
\end{align*}

where $\Psi^\acc_{B} := \Phi^\acc_{B}$ and $\Psi^\rej_{B} := \Phi^\rej_{B} + \Theta^\rej_{B}$. This concludes the proof.

%%%%%%%%%%%%%%%%%%%%%%%%%%%%%%%%%%%%%%%%%%%%%%%%%%%%%%%%%%%%%%%%%%%%%%%%%%%%%%%%%%%%%%%%%%%%%%%%%%%%%%%%
\subsection{Detailed Proof of~\expref{Theorem}{thm:many2hyb}}\label{sec:many2hyb}
%Many-Time, Two-User Security For Hybrid Signcryption}\label{sec:many2hyb}
%%%%%%%%%%%%%%%%%%%%%%%%%%%%%%%%%%%%%%%%%%%%%%%%%%%%%%%%%%%%%%%%%%%%%%%%%%%%%%%%%%%%%%%%%%%%%%%%%%%%%%%%

%We now 
In this section we 
prove outsider security of $\PiHyb\left[\Pi^{\SC}, \Pi^{\SKQE}\right]$ (\expref{Construction}{cons:QS-generic}) if $\Pi^{\SC}$ is an outsider-secure classical signcryption scheme, and $\Pi^{\SKQE}$ is a \cDNS-secure \SKQE. We will do this by showing that the (\cDNS-style version of) outsider security of a \QSC (resp., \SC) scheme is equivalent to the \wQAE (resp., \wCAE) security of a derived \SKQE (resp., \SKES) that is equivalent to the hybrid scheme \PiKEMS. Then, we will use \expref{Theorem}{thm:QAEhybrid} to conclude the proof. We first define a generic way to obtain a \SKQE from a \QSC, or a \SKES from a \SC, in a way that preserves secrecy and unforgeability. The intuition is easy: just define the secret key as the two signcryption key pairs of both users.

\begin{construction}
Let $\Pi=(\KeyGen, \SigEnc, \VerDec)$ be a \QSC (\SC). We define the \SKQE (\SKES)  $\Pi^\sharp=(\widehat\KeyGen, \Enc, \Dec)$ in the following way:
\begin{enumerate}
\item $\widehat\KeyGen$ runs $\KeyGen$ twice to obtain $k=(\sdk,\vek,\sdk',\vek')$;
\item $\Enc_k$  runs $\SigEnc_{\sdk, \vek'}$ and appends $\vek$ and $\vek'$, i.e.,\\
	$
	\Enc_k(X)=\left(\SigEnc_{\sdk, \vek'}, \vek,\vek'\right);
	$
\item $\Dec_k$ checks whether the second and third part of a ciphertext are equal to $ \vek$ and $\vek'$. If one of them is not, it outputs $\bot$, otherwise it runs $\VerDec_{\sdk', \vek}$ on the first part of the ciphertext.
\end{enumerate}
\end{construction}

We can prove a direct connection between the security of the underlying \QSC (resp., \SC) scheme, and the security of the resulting \SKQE (resp., \SKE) scheme.

\begin{lem}\label{lem:SC2AE}
	Let $\Pi=(\KeyGen, \SigEnc, \VerDec)$ be a \QSC (resp., \SC). Then $\Pi^\sharp$ is \wQAE (resp., \CAE) if and only if $\Pi$ is many-time, outsider secure (resp., many-time outsider secure in the strong unforgeability sense).
\end{lem}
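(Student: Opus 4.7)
The plan is to show that, modulo the syntactic bookkeeping of appending and checking public keys, the \wQAE games for $\Pi^\sharp$ and the outsider-security games for $\Pi$ are literally the same. Both directions of the equivalence then follow by direct, essentially lossless reductions. I will describe the quantum case in detail; the classical case is analogous.

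First, suppose $\Pi$ is many-time outsider secure, and I argue that $\Pi^\sharp$ is \wQAE. Given a QPT $\adver$ distinguishing \QAEreal$(\Pi^\sharp, \adver, n)$ from $\wQAEideal(\Pi^\sharp, \adver, n)$, I build an outsider adversary $\badver$ for $\Pi$. $\badver$ receives $\vek_S,\vek_R$ together with oracle access to $\SigEnc_{S,R}$ and $\VerDec_{S,R}$, and runs $\adver$ internally. An encryption query on a register $M$ is answered by calling $\SigEnc_{S,R}$ on $M$ and appending $(\vek_S,\vek_R)$. A decryption query on $(C, k_1, k_2)$ is answered by returning $\bot$ immediately if $(k_1,k_2)\neq(\vek_S,\vek_R)$, and otherwise by forwarding $C$ to $\VerDec_{S,R}$. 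In the real world this reproduces $\Enc_k$ and $\Dec_k$ of $\Pi^\sharp$ exactly. In the ideal world, the $E$ oracle in \OutIdeal already performs the $\ket{\phi^+}$ preparation and storage demanded by $\wQAEideal$, and the $D$ oracle applies $\VerDec$ followed by iterated entanglement tests, again matching $\wQAEideal$ on the quantum part. Hence $\badver$ inherits $\adver$'s distinguishing advantage.

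For the other direction, assume $\Pi^\sharp$ is \wQAE and let $\badver$ be an outsider distinguisher for $\Pi$. I construct a \wQAE distinguisher $\adver$ for $\Pi^\sharp$ that first makes one encryption query on an arbitrary fixed state in order to read off $(\vek_S,\vek_R)$ from the appended tail of the returned ciphertext, and then simulates $\badver$'s $\SigEnc_{S,R}$ and $\VerDec_{S,R}$ oracles by stripping and re-appending the public-key suffix, respectively. The one extra encryption query is absorbed in the distinguishing analysis, so $\adver$ inherits $\badver$'s advantage. The classical case transfers verbatim, with $\wQAEideal$ replaced by $\wCAEideal$: the entanglement trap becomes a plaintext-storage-and-equality trap, and strong unforgeability of $\Pi$ corresponds to the \CAE security of $\Pi^\sharp$.

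The one step that merits genuine verification, rather than being purely cosmetic, is the cheat-detection correspondence between the two ideal games: specifically, that when $\adver$ submits a malformed ciphertext whose key fields differ from $(\vek_S,\vek_R)$, returning $\bot$ at the reduction level agrees with the $\wQAEideal$ composition ``$\Dec_k$ outputs $\bot$, and every subsequent entanglement test against $\bot$ fails.'' This is routine once stated, but it is the only place where the two experiments are not literally identical on the nose, and is therefore the step I would write out most carefully in the full proof.
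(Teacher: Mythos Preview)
Your proposal is correct and follows essentially the same approach as the paper: both directions are handled by the ``append/strip the public keys'' adversary transformations, with one extra encryption query used to learn $(\vek_S,\vek_R)$ in the direction where you start from an outsider adversary against $\Pi$. The paper in fact argues the classical case first via the decomposition of \CAE into strong unforgeability plus \INDCCAA, and then simply asserts that the same two transformations carry \OutReal/\OutIdeal to \QAEreal/\wQAEideal and back; your write-up is actually more explicit about the quantum ideal-world bookkeeping (the $\bot$ branch and the harmless extra entry in $\mathcal M$) than the paper's own proof.
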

\begin{proof}
	Let us first look at the classical case. If $\Pi$ is an outsider secure \SC (in the strong unforgeability sense), then it is by definition strongly unforgeable and \INDCCAA secure. %(Recall that, in the symmetric-key case, strong unforgeability is called integrity of ciphertexts.) 
Now note that these properties carry over to $\Pi^\sharp$ using the following argument. Given a forging or \INDCCAA adversary \adver against $\Pi$, we can build an adversary \adver' against $\Pi^\sharp$ by making one arbitrary extra query at the beginning to obtain the public keys, and then running \adver, ignoring the second and third parts of all ciphertexts. Whenever \adver submits a ciphertext to the verified decryption oracle or as a forgery, \adver' appends the public keys before submission. \adver' has the same output distribution as \adver.%, i.e. the winning probability of the two games is equal. 
On the other hand, let \adver now be an adversary against $\Pi^\sharp$. We construct an adversary \adver' against $\Pi$ in the following way. \adver' runs \adver, relaying any encryption queries to the signcryption oracle and appending the public keys to the result. The same is done in the challenge phase of the \INDCCAA game, here the challenge plaintext submitted by \adver is forwarded as the challenge  in the \INDCCAA game for $\Pi$, and when the challenge ciphertext is received, the public keys are appended. For decryption queries, or forgeries, \adver' checks whether the second and third parts of the ciphertext are equal to the public keys before relaying it to the decryption oracle, or submitting it as a forgery, respectively. The observation that \CAE is equivalent to integrity of ciphertext and \INDCCAA finishes the proof in the classical case.
	
	Let us now turn to the quantum case. The two adversary transformations above simulate the \OutReal and \OutIdeal games for $\Pi$ using oracles provided by the \QAEreal and \wQAEideal games for $\Pi^\sharp$, respectively, and vice versa.
\end{proof}

As a corollary, we obtain outsider security of the hybrid construction with an outsider-secure \SC and a \cDNS secure \SKQE.

\begin{cor}\label{cor:outsider}
	Let $\Pi^{\Cl}$ be a many-time, outsider secure \SC and let $\Pi^\Qu$ be a \cDNS-secure \SKQE. Then $\PiHyb[\Pi^{\Cl},\Pi^\Qu]$ is a many-time outsider-secure \QSC. 
\end{cor}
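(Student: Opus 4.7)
The plan is to reduce this corollary directly to Theorem~\ref{thm:QAEhybrid} via the ``sharping'' construction from Lemma~\ref{lem:SC2AE}. That lemma says outsider security of a (quantum) signcryption scheme is equivalent to \wQAE (respectively \CAE) security of the symmetric-key encryption scheme obtained by treating both signcryption keypairs as a single shared secret key. So to show $\PiHyb[\Pi^{\Cl},\Pi^\Qu]$ is outsider-secure it suffices to show $(\PiHyb[\Pi^{\Cl},\Pi^\Qu])^\sharp$ is \wQAE, and outsider-security of $\Pi^{\Cl}$ gives, by the classical half of Lemma~\ref{lem:SC2AE}, that $(\Pi^{\Cl})^\sharp$ is \CAE.

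The key observation is then that sharping commutes with the hybrid construction: $(\PiHyb[\Pi^{\Cl},\Pi^\Qu])^\sharp$ coincides, up to a harmless publicly known rearrangement of classical bits, with $\PiKEMS[(\Pi^{\Cl})^\sharp,\Pi^\Qu]$. To see this I would unfold the definitions. On both sides the key is a pair of $\Pi^{\Cl}$ keypairs $(\sdk_S,\vek_S,\sdk_R,\vek_R)$; encryption samples a fresh \SKQE key $k$, produces a classical part $(\SigEnc^{\Cl}_{\sdk_S,\vek_R}(k),\vek_S,\vek_R)$, and appends $\Enc^{\Qu}_k(\rho)$; decryption first checks the attached public keys against the stored key material and outputs $\bot$ if they disagree, otherwise recovers $k$ via $\VerDec^{\Cl}_{\sdk_R,\vek_S}$ and runs $\Dec^{\Qu}_k$ on the quantum part.

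Given this identification, Theorem~\ref{thm:QAEhybrid} applied to $(\Pi^{\Cl})^\sharp$ (which is \CAE) and $\Pi^\Qu$ (which is \cDNS) shows that $\PiKEMS[(\Pi^{\Cl})^\sharp,\Pi^\Qu]$ is \wQAE, hence so is $(\PiHyb[\Pi^{\Cl},\Pi^\Qu])^\sharp$, and the quantum direction of Lemma~\ref{lem:SC2AE} delivers the desired conclusion that $\PiHyb[\Pi^{\Cl},\Pi^\Qu]$ is many-time outsider-secure.

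The main obstacle is purely bookkeeping: carefully verifying that $(\PiHyb[\Pi^{\Cl},\Pi^\Qu])^\sharp$ and $\PiKEMS[(\Pi^{\Cl})^\sharp,\Pi^\Qu]$ agree as probabilistic maps, with matching behaviour on the $\bot$ branch of decryption (so that the wQAE/outsider reduction on one side translates faithfully to the other). Once that routine identification is done, everything else is a one-line composition of previously established results, so no new cryptographic ideas are required beyond those already in Lemma~\ref{lem:SC2AE} and Theorem~\ref{thm:QAEhybrid}.
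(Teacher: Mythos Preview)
Your proposal is correct and follows essentially the same route as the paper: the paper's proof is the one-line observation that $(\PiHyb[\Pi^{\Cl},\Pi^\Qu])^\sharp$ coincides with the symmetric-key hybrid built from $(\Pi^{\Cl})^\sharp$ and $\Pi^\Qu$, together with an appeal to Theorem~\ref{thm:QAEhybrid} and Lemma~\ref{lem:SC2AE}. Your unfolding of the identification and the careful note about the $\bot$ branch are exactly the bookkeeping the paper leaves implicit.
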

\begin{proof}
	This follows immediately from \expref{Theorem}{thm:QAEhybrid} and \expref{Lemma}{lem:SC2AE}, together with the observation that
$
\left(\PiHyb[\Pi^{\Cl},\Pi^\Qu]\right)^\sharp=\PiHyb\left[\left(\Pi^{\Cl}\right)^\sharp,\Pi^\Qu\right].
$

\end{proof}

\subsection{Explanatory Example for \expref{Condition}{con:efficient}}
%%%%%%%%%%%%%%%%%%%%%%%%%%%%%%%%%%%%%%%%%%%%%%%%%%%%%%%%%%%%%%%%%%%%%%%%%%%%%%%%%%%%%%%%%%%%%%%%%%%%%%%%

\expref{Condition}{con:efficient} is weaker than the analogous condition from \cite{AGM18} where it is required that $V_k$ is efficiently implementable \emph{on all inputs.} The results of that paper also hold for all schemes satisfying this weaker \expref{Condition}{con:efficient}. As discussed in \cite{AGM18}, all known \SKQE schemes satisfy \expref{Condition}{con:efficient}.

For example, let's look at why $\PiKEMP$ satisfies \expref{Condition}{con:efficient}, provided that the same holds for the underlying \SKQE. Let $\Pi_Q$ be the \SKQE and $\Pi^\Cl$ the classical \PKE. Suppose $\Pi^\Qu$ satisfies
$$
\Enc_{k;r}(X_M) =V_k \left(X_M \otimes \proj{\psi^{(k,r)}}_T \right) V_k^{\dagger}.
$$
for some efficiently-preparable $\ket{\psi^{(k, r)}}$ and an $V_k$. Consider the following way of implementing encryption of the combined scheme $\PiKEMS[\Pi^\Qu, \Pi^\Cl]$. Here $r_C$, $k$, $r_Q$ denote uniformly random classical strings which are (respectively) the randomness for $\Pi^\Cl$ encryption, a key for $\Pi^\Qu$, and randomness for $\Pi^\Qu$.
\begin{align*}
\rho 
&\longmapsto \rho \otimes \proj{k}_K \otimes \proj{\Enc^C_{\ek; r_C}(k)} \otimes \proj{\psi_{k, r_Q}}\\
&\longmapsto \tr_K \left(\proj{k}_K \otimes \proj{\Enc^C_{\ek; r_C}(k)} \otimes V_k ( \rho \otimes \proj{\psi_{k, r_Q}})V_k^\dagger\right)
\end{align*}
Here, the first step attaches the auxiliary state to $\rho$, and the second step implements a unitary $V^\Pi_k$ (on relevant inputs.) We emphasize that both steps are efficiently implementable. Note that, despite the fact that (for these particular inputs) we implemented $V^\Pi_k$ by tracing out a register, it is still the case that $V^\Pi_k$ is a unitary operator. Indeed, $V^\Pi_k$ is invertible because $k$ can be extracted from $\Enc^C_{\ek; r_C}(k)$; of course, to do this efficiently one needs $\dk$.

%%%%%%%%%%%%%%%%%%%%%%%%%%%%%%%%%%%%%%%%%%%%%%%%%%%%%%%%%%%%%%%%%%%%%%%%%%%%%%%%%%%%%%%%%%%%%%%%%%%%%%%%
\subsection{Another Construction for Quantum Signcryption}\label{sec:QSCmod}
%%%%%%%%%%%%%%%%%%%%%%%%%%%%%%%%%%%%%%%%%%%%%%%%%%%%%%%%%%%%%%%%%%%%%%%%%%%%%%%%%%%%%%%%%%%%%%%%%%%%%%%%

Here we propose (without arguing about security) an idea for a different construction for a \QSC scheme, based on a classical \PKE and a classical \DS scheme rather than a classical \SC. The idea is to \PKE-encrypt a symmetric quantum encryption key, and at the same time embedding a signature of the symmetric key inside the ciphertext, instead of the $0,\ldots,0$ authentication tag. The resulting scheme has the advantage that in certain cases it might result in more compact ciphertexts. We leave the formal cryptographic security study of this scheme as an open problem for further investigation.

\begin{construction}\label{cons:QS-modular}
Let $\Pi^\PKE$ be a classical \PKE scheme, $\Pi^\DS$ a classical \DS, and $\Pi^\SKQE$ a \SKQE%symmetric-key quantum encryption scheme
. Define a quantum signcryption scheme $\PiMod[\Pi^\PKE, \Pi^\DS, \Pi^\SKQE] = (\KeyGen, \SigEnc, \VerDec)$ as follows:
\begin{enumerate}
	\item $\KeyGen(\secparam):$ output $(\sdk, \vek)$, where $\sdk = (\sk,\dk), \vek = (\vk,\ek),\\  (\sk,\vk) \from \KeyGen^{\DS}(\secparam)$, and $(\dk,\ek) \from \KeyGen^{\PKE}(\secparam)$;
	\item $\SigEnc_{\sdk,\vek}:$ on input $\rho_M$, do the following:
		\begin{enumerate}
		\item generate $k \from \KeyGen^{\SKQE}(\secparam)$
		\item encrypt $t \from \Enc^{\PKE}_\ek(k)$
		\item sign $s \from \Sign^{\DS}_\sk(k)$
		\item generate state $\psi_C \from \Enc^{\SKQE}_k(\rho_M \otimes \egoketbra{s})$
		\item output $(\psi_C,t)$
		\end{enumerate}
	\item $\VerDec_{\sdk,\vek}:$ on input $(\psi_C,t)$, do the following:
		\begin{enumerate}
		\item decrypt $k \from \Dec^{\PKE}_\dk(t)$; if decryption is $\bot$ then output $\egoketbra{\bot}$
		\item decrypt $\varphi_C \from \Dec^{\SKQE}_k(\psi_C)$; if decryption is $\egoketbra{\bot}$ then output $\egoketbra{\bot}$
		\item perform a partial measurement on $\varphi_C$, obtaining classical outcome $s$ and partially traced out state $\rho_M$
		\item verify that $\Ver{\DS}_\vk(k,s) \to \acc$: if not, then output $\egoketbra{\bot}$, otherwise output $\rho_M$.
		\end{enumerate}
\end{enumerate}
\end{construction}

\end{document}